\documentclass[10pt]{amsart}
\usepackage[utf8]{inputenc}
\usepackage[top=1.5cm, bottom=1.5cm, left=2.0cm, right=2.0cm]{geometry}
\usepackage[utf8]{inputenc}
\usepackage{amsthm,amsfonts,amsmath,bookmark,appendix}
\usepackage{enumerate,graphicx,pdfpages,hyperref,dsfont,tabularx}
\usepackage{bbm}
\usepackage{amssymb}
\usepackage{bm}

\DeclareMathOperator{\supp}{supp}
\DeclareMathOperator{\Tr}{Tr}

\DeclareMathOperator{\eps}{\varepsilon}
\DeclareMathOperator{\0}{|0\rangle}
\def\ra{\rangle}
\def\la{\langle}
\DeclareMathOperator{\1}{\mathbbm{1}}

\def\R {\mathbb{R}}
\def\Z {\mathbb{Z}}

\def\C {\mathbb{C}}
\def\U {\mathbb{U}}
\def\F {\mathcal{F}}
\def\T {\mathbb{T}}
\def\N {\mathbb{N}}
\def\U {\mathbb{U}}

\def\H{\gH}

\def\R {\mathbb{R}}
\def\C {\mathbb{C}}
\def\U {\mathbb{U}}

\def\H{\mathcal{H}}
\def\NN{\mathcal{N}}

\def\dv{\text{div}}
\def\nn{\text{nn}}
\def\Neu{\text{Neu}}
\def\epsG{\varepsilon_{\text{gap}}^{\Neu}}
\def\epsN{\varepsilon^{\Neu}}
\def\a{\mathfrak{a}}
\def\Lh{\widehat \Lambda}
\def\Id{\text{Id}}
\def\GC{\text{GC}}
\def\sp{\text{special}}
\def\Per{\text{Per}}
\def\rank{\text{rank}}
\def\gap{\text{gap}}

\newtheorem{theorem}{Theorem}[section]
\newtheorem{prop}[theorem]{Proposition}
\newtheorem{cor}[theorem]{Corollary}
\newtheorem{lemma}[theorem]{Lemma}
\newtheorem{defi}[theorem]{Definition}

\theoremstyle{remark}
\newtheorem{remark}[theorem]{Remark}

\numberwithin{equation}{section}

\title{Ground state energy of the dilute Bose-Hubbard gas on Bravais lattices}

\author[N. Mokrza\'nski]{Norbert Mokrza\'nski}
\address{Department of Mathematical Methods in Physics, Faculty of Physics, University of Warsaw,  Pasteura 5, 02-093 Warszawa, Poland}
\email{norbert.mokrzanski@fuw.edu.pl} 

\author[M. Napi\'orkowski]{Marcin Napi\'orkowski}
\address{Department of Mathematical Methods in Physics, Faculty of Physics, University of Warsaw,  Pasteura 5, 02-093 Warszawa, Poland}
\email{marcin.napiorkowski@fuw.edu.pl} 

\author[J. Wojtkiewicz]{Jacek Wojtkiewicz}
\address{Department of Mathematical Methods in Physics, Faculty of Physics, University of Warsaw,  Pasteura 5, 02-093 Warszawa, Poland}
\email{jacek.wojtkiewicz@fuw.edu.pl} 

\begin{document}

\begin{abstract}
We study interacting bosons on a three–dimensional Bravais lattice with positive hopping amplitudes and on-site repulsive interactions. We prove that, in the dilute limit $\rho\to 0$, the ground state energy density satisfies 
$$e_0(\rho)=4\pi \a \rho^2 \big(1+O(\rho^{1/6})\big),$$
where $\a$ is the lattice scattering length defined through the corresponding two–body problem. This establishes the analogue of the Dyson and Lieb–Yngvason theorems for the Bose-Hubbard gas. Our result shows that the leading-order energy is universal: although the lattice geometry affects the microscopic dispersion relation, it enters the leading order asymptotics only through the scattering length. In particular, it is independent of other features of the underlying Bravais lattice.
\end{abstract}
\maketitle


\section{Introduction}

Understanding the ground state energy of interacting quantum many–body systems is a central problem in mathematical physics. Although a complete description is generally out of reach, rigorous results can be obtained in suitable asymptotic regimes. One particularly tractable regime is the dilute limit, where the particle density $\rho$ is sufficiently small compared to the interaction scale. In this setting, Bose gases exhibit a remarkable universality: to leading order, the ground state energy depends only on a single effective parameter, the two–body scattering length $\a$.

For three–dimensional continuum Bose gases with repulsive interactions, this is seen in the so-called Lee-Huang-Yang-Wu \cite{LHY,Wu} formula
$$
e_0(\rho)=
4\pi \a \rho^2
\Bigl(1+\frac{128}{15\sqrt{\pi}}(\rho \a^3)^{1/2}+8(\frac{4\pi}{3}-\sqrt{3})\rho \a^3 \ln(\rho \a^3)+\dots\Bigr),
$$
which captures in the dilute regime $\rho \a^3 \to 0$ the correct ground state energy per unit volume,
up to corrections of order $\a \rho^2 (\rho \a^3)$, which are expected to no longer be universal in $\a$. The leading-order term was rigorously established by Dyson \cite{Dyson} as an upper bound and, over 40 years later, by Lieb and Yngvason \cite{LiebYngvason} as  a lower bound (see also \cite{Yin-10}).  An upper bound matching the second order term was established by Yau and Yin \cite{YauYin} (see also \cite{ErSchYao,BasCenSch,Bastietal24}), while a lower bound finishing the proof of the Lee-Huang-Yang conjecture was established by Fournais and Solovej in \cite{FourSol1} (see also \cite{FourSol2}). In \cite{Haber1, Haber2}, a new, simpler proof that also establishes the free energy expansion in the positive temperature case was given (see also \cite{Sei-08,Yin-10a,BasBocCenDeu-25}). Finally, recently, Brooks, Oldenbrug, Saint Aubin and Schlein \cite{Brooksetal-25} established for the first time an upper bound that includes the third order term (so-called Wu term).

A natural question is whether this universality persists in discrete settings. Bose gases realized in optical lattices are described by lattice Hamiltonians, most prominently by the Bose–Hubbard model \cite{GerKnol,FWGF}, which has become a standard effective model for interacting bosons and has been extensively studied in both theory and experiment. Such systems arise on a variety of lattice geometries \cite{Petsas-94} beyond the simple cubic case, which motivates the consideration of general Bravais lattices. In this setting the single–particle dispersion and the associated low–energy kinematics depend strongly on the geometry and hopping amplitudes of the underlying lattice. In contrast to the continuum, both the interaction and the lattice structure influence the two–body problem, and it is not a priori clear whether the leading-order energy retains a universal form independent of the microscopic details.

In this work we show that such universality indeed survives on lattices as far as the leading order term of the energy is concerned. We consider interacting bosons on an arbitrary three–dimensional Bravais lattice with positive hopping amplitudes and on-site repulsive interactions and prove that, in the dilute limit 
$$e_0(\rho)=4\pi \a \rho^2 (1+O(\rho^{1/6}))$$
where $\a$ denotes the lattice scattering length (cf. Appendix \ref{sec:Scattering}). Thus all microscopic information — both the interaction strength and the lattice geometry — is absorbed into this effective parameter, and the leading-order energy is independent of other details of the underlying  lattice. This provides a discrete analogue of Dyson and Lieb–Yngvason theorems for the Bose–Hubbard gas.

It is worth emphasizing that this universality is specific to the leading-order term. While in the continuum the second and (expectedly) third order terms also exhibit a universal structure depending only on the scattering length, in the lattice setting one expects higher-order terms to depend explicitly on the single-particle dispersion and hence on the geometry of the underlying lattice. In particular, beyond order $\a\rho^2$
the energy is not determined solely by the scattering length. Our result therefore identifies the precise regime in which lattice effects are completely absorbed into this effective parameter. We expect that the same leading-order universality holds for more general short-range lattice potentials.

Related universality results have been obtained for fermionic systems, both in the continuum \cite{LieSeiSol-05,Falconieetal-21,Giacomelietalupper-24,Giacomelietallower-25,ChenWuZhang-25} and on the cubic lattice (with nearest neighbor hopping) \cite{Giu,SeirYin}, where the leading-order energy is again determined by an appropriate scattering parameter. The available lattice proofs for fermions employ techniques that do not readily transfer to bosons. In particular, arguments based on Dyson-type lemmas  have no direct counterparts. Therefore, in order to prove our main result, we rely on techniques that have been developed more recently in the context of continuous bosonic systems. For the lower bound we use a localization method coupled with Bogoliubov theory \cite{BoccatoSeiringer,OptimalRate,ChongLiangNam}. To make it work we need to develop estimates for the eigenvalues of Neumann Laplacians on general Bravais lattices. In fact, these bounds lead to the relative error of order $O(\rho^{1/6})$ in the lower bound. The upper bound is an adaptation of the argument in \cite{ErSchYao} that allows to include lattice dispersion relations which are not radial. 

The remainder of the paper is organized as follows. In Section~\ref{sec:Model} we introduce the lattice framework and state the main result. The upper bound is obtained via suitable grand-canonical trial states and the equivalence of ensembles. This is presented in Section~\ref{sec:TrialState}. The corresponding lower bound is proved in Section~\ref{sec:Lower_bound}. Several auxiliary results (including the discussion on the Bravais lattices and the lattice scattering length) are collected in the Appendices.


\section{The model and the main result}
\label{sec:Model}

We start by presenting basic definitions and objects of interest. We refer to Appendix \ref{sec:Fourier} for the discussion concerning those aspects.

Our main object of interest is a three dimensional (monoatomic) crystal lattice. To define it we need to specify the underlying Bravais lattice and the neighborhood relation between the points on the lattice.

To this end we first fix three linearly independent vectors $a_1$, $a_2$, $a_3 \in \R^3$ and a matrix $A$ composed of those vectors as columns. We consider a Bravais lattice $\Lambda$ defined as
\begin{equation}
\label{def:Bravais_lattice}
\Lambda = A\Z^3 = \left\{m_1 a_1 + m_2 a_2 + m_3 a_3 \colon m_1,m_2,m_3 \in \Z\right\}
\end{equation}
In this context vectors $a_i$ are called the primitive (translation) vectors of the lattice $\Lambda$.

For a given even number $L \in 2\N$ we consider a finite version of the Bravais lattice \eqref{def:Bravais_lattice} of size $L$, denoted $\Lambda_L$ and defined as
\begin{equation}
\label{def:finite_lattice}
\begin{split}
\Lambda_L &= A\left(\Z \cap [-L/2,L/2]\right)^3 
\\&= \left\{m_1 a_1 + m_2 a_2 + m_3 a_3 \colon m_1,m_2,m_3 = -\frac{L}{2}, -\frac{L}{2} + 1, \dots, \frac{L}{2} - 1, \frac{L}{2}\right\}.    
\end{split}
\end{equation}
We equip $\Lambda_L$ with periodic boundary condition (i.e $\Lambda_L \simeq A\Z^3/A((L+1)\Z)^3$ and with the standard counting measure, hence we can define the (one particle) Hilbert space $\H_L$ of a particle on the lattice $\Lambda_L$ as
$$\H_L = L^2\left(\Lambda_L\right)$$
with the inner product
$$\la \psi, \varphi\ra_{\H_L} = \sum_{x \in \Lambda_L} \overline{\psi(x)} \varphi(x).$$
For a given natural number $N$ we also define $N$-particle bosonic Hilbert space $\H_L^N$ as
$$\H_L^N = \bigotimes_{\text{sym}}^N \H_L,$$
i.e. the functions of $N$ variables $x_1$, $x_2$, \dots, $x_N \in \Lambda_L$ symmetric under the permutations of those variables. The inner product on this space is defined for simple tensors as
$$\left \la \bigotimes_{j=1}^N \psi_j, \bigotimes_{j=1}^N\varphi_j \right \ra_{\H_L^N} = \prod_{j=1}^N \la \psi_j, \varphi_j \ra_{\H_L},$$
which can be extended to the whole $\H_L^N$ by linearity. 

Now we will define the neighbor relation on $\Lambda$ and $\Lambda_L$. Let $D$ be a set of all "positive directions"
\begin{equation*}
D = \{m_1 a_1 + m_2a_2 + m_3a_3 \in \Lambda \setminus\{0\} \colon \text{ the first non-zero } m_j \text{ is positive}\}.
\end{equation*}
Note that $D \cup (-D) = \Lambda \setminus\{0\}$ and $D \cap (-D) = \emptyset$. To each direction $v \in D$ and its reverse direction $(-v)$ we will assign a weight $t(v) = t(-v) \ge 0$. The neighborhood relation on $\Lambda$ is defined as
\begin{equation}
\label{neighbor_relation_def}
x \sim y \Longleftrightarrow t(y - x) > 0.
\end{equation}
This relation is symmetric and equips both the infinite lattice $\Lambda$ and the finite lattice $\Lambda_L$ with the weighted graph structure, where in the latter $y-x$ is understood in the sense of periodic boundary condition, i.e. as the element of the $A\Z^3/A((L+1)\Z)^3$ group.

We will make two additional assumptions. The first one is that
\begin{equation}
\label{finite_range_assumption}
    \#\{v \in D \colon t(v) \ne 0\} < \infty,
\end{equation}
meaning that we only consider a finite distance hopping. This assumption is satisfied in most commonly encountered crystal systems in physics. For the future purposes we will also define a parameter $R_0(t)$ called the hopping length as
\begin{equation}
\label{def:hopping_length}
R_0(t) = \min \{L \in 2\N \colon \forall_{x \sim 0} \; x \in \Lambda_L\},
\end{equation}
that is the smallest $L$ such that all the neighbors of point $x=0$ in the sense of \eqref{neighbor_relation_def} belong to $\Lambda_L$. In the upcoming proofs we will consider only $L \ge R_0(t)$ as this condition will allow us to capture all the possible hoppings within the finite volume.

To state the second assumption we will first denote
\begin{equation}
\label{D_1_def}
D_1 := \{a_1, a_2, a_3\} \subset D.
\end{equation}
We will assume
\begin{equation}
\label{t_1_assumption}
t(v) \ne 0 \text{ for } v \in D_1
\end{equation}
that is hopping along the primitive vectors of the lattice $\Lambda$ is allowed.

The Bose-Hubbard Hamiltonian $H_{N,L}$ of the $N$ particle system, acting on $\H_L^N$, is given by
$$H_{N,L} = -\sum_{i = 1}^N\Delta_{i} + U\sum_{i<j}^N\delta_{x_i,x_j},$$
where  the first term is the kinetic energy. Here $\Delta_{x_i}$ denotes the lattice (weighted graph) Laplace operator acting on the $i$-th particle:
$$\Delta_i = \Id \otimes \dots \otimes \Delta \otimes \dots \otimes \Id,$$
where $\Id$ is the identity operator on $L^2(\Lambda_L)$ and $\Delta$ is a single particle Laplacian standing on the $i$-th position. We can write the action of $\Delta$ explicitly: for $u \in L^2(\Lambda_L)$
\begin{equation}
\label{def:discrete_laplacian}
-\Delta u (x) = \sum_{y \sim x} t(y-x)\big(u(x) -u(y)\big) = \sum_{v \in D} t(v) \big(2 u(x) - u(x+v) - u(x-v)\big), 
\end{equation}
where $x,y \in \Lambda_L$ and $x \sim y$ denotes the neighborhood relation. Here $y - x$ is once again understood in the sense of periodic boundary condition. The second term in the Hamiltonian is the interaction energy with $U > 0$ (i.e. the interaction is repulsive).

The ground state energy $E_0(N,L)$ of the system is defined by
\begin{equation}
\label{def:GSE}
E_0(N,L) =  \inf_{\substack{\psi \in \H_L^N \\ \|\psi\|=1}} \la \psi, H_{N,L} \psi \ra_{\H_L^N}.   
\end{equation}
The inner product above is called the expectation value of the operator $H_N$ in the state $\psi$. In general, the expectation value of some operator $T$ acting on the Hilbert space $\H$ in the state (i.e. normalized vector) $\psi \in \H$ is defined as
\begin{equation}
\label{def:expectation}
 \la T \ra_{\psi} = \la \psi, T \psi \ra_{\H}.
\end{equation}
We will use this notation when there will be no ambiguity on which Hilbert space this expectation is evaluated.

We are interested in the ground state energy per unit volume, i.e.
\begin{equation}
\label{def:ThermodynamicLimit}
e_0(\rho) = \lim_{\substack{N \to \infty \\ L \to \infty \\ N/|\Lambda_L| \to \rho}}\frac{E_0(N,L)}{|\Lambda_L|}.
\end{equation}
Existence of this limit (under some more general assumptions and even in some broader setting) is known, we refer e.g. to \cite{Ruelle} for the details. It is also known that $e(\rho)$ is a continuous (up to the boundary $\rho = 0$) and convex function of $\rho$. 

In order to state the main theorems we introduce \textit{the scattering length} of the potential which we will denote $\a$  (see Appendix~\ref{sec:Scattering} for more details). For the on-site interaction potential that we are dealing with it is defined as
\begin{equation}
\label{def:scattering_len}
8\pi \a = \frac{U}{U\gamma + 1},
\end{equation}
with
$$\gamma = \frac12|\Lh|^{-1}\int_{\Lh} \frac{dp}{\eps(p)}$$
and $\eps(p)$ being the lattice dispersion relation, given by
\begin{equation}
\label{def:eps}
\eps(p) = \sum_{v \in D} 2t(v)\big(1 - \cos (v \cdot p)\big) = 4\sum_{v \in D} t(v) \sin^2\left(\frac{v \cdot p}{2}\right), \quad p = (p_1,p_2,p_3) \in \Lh.   
\end{equation}
Here $\Lh$ is the Brillouin zone of the lattice $\Lambda$
$$\Lh = B\T^3 = \left\{\tau_1 b_1 + \tau_2 b_2 + \tau_3 b_3 \colon -\frac12 \le \tau_i < \frac12 \right\} \text{ with periodic boundary conditions},$$
where $\T^3 = [-\frac12,\frac12)^3$ is a three dimensional unit torus (this identification also allows to identify the Haar measure $dp$ in the integral as the Lebesgue measure), $|\Lh|$ denotes the measure of this set and $B$ is a matrix composed of columns $b_1$, $b_2$, $b_3$ satisfying
$$a_i \cdot b_j = 2\pi \delta_{i,j},$$
that is these are the primitive vectors of the reciprocal lattice $\Lambda^*$. We refer to Appendix~\ref{sec:Fourier} for a more detailed discussion. The expression $\eps(p)$ can be seen as the eigenvalue corresponding to the eigenfunction
$$\chi_p(x) = e^{ip\cdot x}, \quad x \in \Lambda$$
of the discrete Laplacian defined in \eqref{def:discrete_laplacian}. The sum in \eqref{def:eps} is finite due to the assumption \eqref{finite_range_assumption}. The formula \eqref{def:scattering_len} is derived explicitly in Appendix \ref{sec:Scattering}.

We will now check the assumption \eqref{t_1_assumption} implies that for $p \in \Lh$ we have the estimate
\begin{equation}
\label{eps_bound}
\eps(p) \ge c|p|^2
\end{equation}
with $|p|$ being the Euclidean norm of the vector $p \in \R^3$ and with the constant $c$ independent of $p$. Let us denote
$$t_{\min} := \min_{v \in D_1} t(v) > 0$$
and
$$b_{\max} := \max_{j=1,2,3} |b_j|.$$
Using inequality $\sin x \ge \frac{2}{\pi}x$ valid on the interval $[0,\pi/2]$ for $p = \tau_1 b_1 + \tau_2 b_2 + \tau_3 b_3 \in \Lh$ we have
$$\eps(p) \ge 4 \sum_{v \in D_1} t(v) \sin^2\left(\frac{v \cdot p}{2}\right) \ge 4 t_{\min}\sum_{j = 1}^3 \sin^2\left(\pi |\tau_j|\right) \ge 16 t_{\min} \sum_{j=1}^3 \tau_j^2.$$
On the other hand, by Cauchy-Schwarz inequality
\begin{align*}
|p|^2 = |\tau_1 b_1 + \tau_2 b_2 + \tau_3 b_3|^2  \le \sum_{j=1}^3 \tau_j^2 |b_j|^2 + 2\sum_{1 \le i<j \le 3} \tau_i \tau_j |b_i| |b_j| = \left(\sum_{j=1}^3 \tau_j |b_j|\right)^2 \le 3b_{\max}^2 \sum_{j=1}^3 \tau_j^2.
\end{align*}
We conclude that \eqref{eps_bound} holds with
$$c = \frac{16 t_{\min}}{3 b_{\max}^2}.$$

Inequality \eqref{eps_bound} in particular implies that $\gamma$ is well defined as the function $1/\eps(p)$ is integrable near zero. Moreover, as we will see in the proofs of the following propositions, this inequality will be crucial for obtaining the desired results. The assumption \eqref{t_1_assumption} itself can be changed in such a way that \eqref{eps_bound} still holds true, for example assuming that some certain other hopping constants are non-zero (this however would require some modifications in the proofs). For the purpose of this paper we will stick to \eqref{finite_range_assumption} as this is the simplest case when \eqref{eps_bound} holds.

The main theorem that we prove is the following
\begin{theorem}
\label{main_thm}
In the setting as above, in particular with assumptions \eqref{finite_range_assumption} and \eqref{t_1_assumption}, we have
$$e_0(\rho) = 4\pi \a \rho^2 \left(1 + O\left(\rho^{1/6}\right)_{\rho \to 0}\right),$$
\end{theorem}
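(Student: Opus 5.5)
We prove Theorem \ref{main_thm} by establishing matching upper and lower bounds, $e_0(\rho)\le 4\pi\a\rho^2(1+C\rho^{1/2})$ and $e_0(\rho)\ge 4\pi\a\rho^2(1-C\rho^{1/6})$, as announced in the introduction.

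\textbf{Upper bound.} We work grand-canonically and then pass back to the canonical ensemble through the equivalence of ensembles. On $\Lambda_L$ with periodic boundary conditions we use a trial state of the form $W(\sqrt{N}\varphi_0)\,T\,\Omega$, in the spirit of \cite{ErSchYao}. Here $W$ is a Weyl operator creating a condensate in the constant mode. The operator $T$ is a Bogoliubov transformation (equivalently, a quasi-free state) with pairing function $\eta_p$ built from the lattice scattering solution. Concretely, $\eta_p$ is a truncation of $-\rho\,\widehat{U(1-\omega)}(p)/(2\eps(p))$, where $1-\omega$ is the zero-energy scattering solution, whose Fourier transform is explicit for an on-site potential (Appendix \ref{sec:Scattering}). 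We then compute the energy using the Wick rule. The key cancellation rests on the identity
\[
U - \frac{U^2}{2}|\Lh|^{-1}\int\frac{dp}{\eps(p)+\dots}\;\approx\;\frac{U}{1+U\gamma}=8\pi\a .
\]
The remaining terms (quartic contributions from $\eta$, and the difference between $\eps(p)$ and its Bogoliubov modification) are controlled using \eqref{eps_bound} and the bounds $\sum|\eta_p|^2\lesssim \rho^{3/2}$ and $\sum \eps|\eta|^2$. No radiality is needed, because we only ever use the two-sided comparability $c|p|^2\le\eps(p)\le C|p|^2$.

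\textbf{Lower bound.} We follow the localization-plus-Bogoliubov strategy of \cite{BoccatoSeiringer,OptimalRate,ChongLiangNam}.

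1. Using convexity and superadditivity, we reduce the problem to boxes $\Lambda_\ell$ of side $\ell\sim\rho^{-\kappa}$ with Neumann-type (graph) boundary conditions. This costs only a relative error of order $\ell^{-1}$, because the Neumann Laplacian on a subgraph is dominated by the full one.

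2. Inside a box we split the Neumann Laplacian spectrally. Let $P$ project onto constants and $Q=1-P$. We then need a spectral gap estimate $\epsG\gtrsim \ell^{-2}$ for the Neumann Laplacian on the Bravais box. We prove it by comparing with the product cubic graph along the primitive vectors, using \eqref{t_1_assumption}, together with a lower bound on the Neumann eigenvalues that holds beyond the gap at scales up to a momentum cutoff.

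3. We replace $U\delta$ by a renormalized interaction through the usual quadratic-form trick: we complete the square with the scattering solution $\omega$, which produces $8\pi\a$ in the $PP\to PP$ term. We then bound the resulting Bogoliubov Hamiltonian from below.

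4. The cubic and quartic terms, and the number of excitations, are controlled by the gap and by a priori bounds on $\la Q\ra$.

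\textbf{Main obstacle.} The main obstacle is step 2. On a non-cubic Bravais lattice, restricting to a parallelepiped box does not diagonalize the Neumann Laplacian in cosines, so the eigenvalues are not explicit. We would first try to handle this by a min-max comparison with a separable operator built only from the $D_1$ hoppings, plus a bound on the extra hoppings as a relative form perturbation. The losses in this comparison, which come from boundary layers of width $R_0(t)$ and from the mismatch of low modes, are what force the suboptimal choice of $\ell$. Optimizing the localization error $\ell^{-1}$ against the Bogoliubov errors then yields the rate $\rho^{1/6}$.
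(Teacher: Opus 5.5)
Your architecture matches the paper: a grand-canonical quasi-free trial state plus equivalence of ensembles for the upper bound, and for the lower bound Neumann localization, completing the square with the scattering solution, a Bogoliubov bound in each box, and the gap via the $D_1$ product graph. The lower bound, however, has a genuine gap exactly at the step you flag. After completing the square with $\varphi=1-w$, the $PP\to PP$ coefficient is $U(2\varphi(0)-\varphi(0)^2)=8\pi\a(1+w(0))$, not $8\pi\a$. The surplus $8\pi\a w(0)=(8\pi\a)^2\gamma$ is cancelled only by the Bogoliubov term $-\frac{(8\pi\a)^2n^2}{4|\Lambda_\ell|^2}\sum_{k\neq0}\epsN(k)^{-1}$, which is of the same order because $U$ is not small. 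So you need $|\Lambda_\ell|^{-1}\sum_{k\neq0}\epsN(k)^{-1}\le|\Lh|^{-1}\int_{\Lh}\eps(p)^{-1}\,dp+o(1)$ with the \emph{exact} constant $2\gamma$.

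Your proposed tool cannot deliver this. Min-max against the separable operator with only the $D_1$ hoppings, plus a relative form bound on the other hoppings, gives only two-sided comparability. The usable direction yields $\gamma_{D_1}=\frac12|\Lh|^{-1}\int_{\Lh}\eps_{D_1}(p)^{-1}\,dp$, which is strictly larger than $\gamma$ as soon as some $v\notin D_1$ has $t(v)>0$. The result is a lower bound $4\pi\a\rho^2\bigl(1-8\pi\a(\gamma_{D_1}-\gamma)\bigr)$ with the wrong leading constant.

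The missing idea is Lemma~\ref{Laplace_comparison}(c). The difference $\Delta^{\Neu}-\Delta^{\Per}$ lives on a boundary layer, so it has rank $O(\ell^2)$. By interlacing for finite-rank perturbations, the counting functions of $-P_+\Delta^{\Neu}P_+$ and $-P_+\Delta^{\Per}P_+$ then differ by $O(\ell^2)$. Writing $\Tr T^{-1}=\int_0^\infty N_T(s)s^{-2}\,ds$, you use this rank bound for $s>\delta$, and the separable bound $N(s)\lesssim\ell^3s^{3/2}$ (above the gap) only for $s<\delta$. Choosing $\delta\sim\ell^{-2/3}$ gives an $O(\ell^{-1/3})$ error per unit volume, after which the periodic trace is a Riemann sum for $2\gamma$. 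The separable comparison thus only controls the few low modes and never fixes the constant. The same device bounds the inverse square and inverse cube traces needed for the remainders.

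This also corrects your account of the rate. The Neumann localization is exact, since dropping inter-box edges only lowers the quadratic form, so there is no $\ell^{-1}$ cost to balance. The scale is instead forced by the Bogoliubov step. The Lieb--Solovej bound needs $\epsN(k)-\mu>8\pi\a n/|\Lambda_\ell|$, and the term $-16\pi\a\frac{n}{|\Lambda_\ell|}\NN_+$ must be absorbed by $\mu\NN_+$ with $\mu<\frac12\epsG(\ell)\sim\ell^{-2}$. Together these require $n\lesssim\ell$ per box (fuller boxes are handled by superadditivity), which pins $\ell\sim\rho^{-1/2}$. The rate $\rho^{1/6}=\ell^{-1/3}$ is exactly the trace-comparison error, while the $O(n/\ell^3)$ term costs only $\rho^{1/2}$. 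With this completion of the square no cubic terms appear and the quartic ones are dropped by positivity, so no a priori condensation bounds are needed.

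In your upper bound there is a separate, fixable error. The cancellation $U-U^2\gamma\approx U/(1+U\gamma)$ holds only to second order in $U$; the difference is $U^3\gamma^2/(1+U\gamma)$, an $O(\rho^2)$ energy for fixed $U$. What restores $8\pi\a$ for $\eta=-\rho\widehat w$ is the pairing--pairing term $\frac{U}{2|\Lambda_L|}\bigl(\sum_p\eta_p\bigr)^2=\frac U2|\Lambda_L|\rho^2w(0)^2(1+o(1))$. This term is quadratic, not quartic, in $\eta$, and it is not controlled by $\sum_p|\eta_p|^2$. Keeping it gives exactly $\frac U2\rho^2(1-w(0))=4\pi\a\rho^2$, which is what the paper's shift $s_p+\rho w(0)$ implements.
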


We will prove Theorem \ref{main_thm} by proving the upper and the lower bound separately. In fact the upper bound (see below) provides a better error estimate that the one coming from the lower bound.

 As mentioned in the introduction,  the proof of the upper bound will be based on \cite{ErSchYao}, adapted to the lattice setting. The main point of this approach is that instead of constructing a sequence of states on $\H_L^N$, i.e. states with fixed number of particles, we will work in the grand canonical setting. More precisely, we will construct a sequence of states $\{\Psi_{L,N}\}_{L,N}$ on the Fock spaces
$$\F_L :=\F(\H_L) = \bigoplus_{n=0}^\infty \H_L^n, \quad (\H_L^0 = \C)$$
with each having fixed average number of particles $\la \NN\ra_{\Psi_{N,L}} = N$. We will denote the grand canonical Hamiltonian (i.e. the second quantization of $H_{N,L}$) as $H^\GC_L$. This operator acts on the Fock space $\F_L$, for $\Psi = (\Psi^{(n)})_{n \in \N}\in \F_L$ its action is given by
\begin{equation}
\label{def:gc_ham}
(H^\GC_L \Psi)^{(n)} = H_{n,L} \Psi^{(n)}. 
\end{equation}
We are going to prove the following result.
\begin{prop}
\label{main_thm_gc}
Let $\rho > 0$ be small enough. For any sequences $N \to \infty$, $L \to \infty$ with $\frac{N}{|\Lambda_L|} \to \rho$ there exists a sequence of trial states $\Psi_L \in \F_L$ with $\la \NN \ra_{\Psi_L} = N$ such that
$$\lim_{\substack{L \to \infty \\ N \to \infty \\ N/|\Lambda_L| \to \rho}} \frac{\la H^\GC_L \ra_{\Psi_L}}{|\Lambda_L|} = 4 \pi \a \rho^2\left(1 + O(\rho^{1/2})\right).$$
\end{prop}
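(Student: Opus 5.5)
The plan is to adapt the quasi-free trial state of \cite{ErSchYao} to the lattice, working in momentum space on the dual lattice $\Lambda_L^*$ of $\Lambda_L$. For $p\in\Lambda_L^*$ let $a_p^\ast, a_p$ be the creation and annihilation operators of the plane waves $|\Lambda_L|^{-1/2}\chi_p$. The trial state is
$$\Psi_L = W(\sqrt{N_0})\,T(k)\,\Omega .$$
Here $\Omega$ is the vacuum and $W(\sqrt{N_0})=\exp(\sqrt{N_0}(a_0^\ast-a_0))$ is the Weyl operator creating a condensate in the constant mode. The second factor is the Bogoliubov transformation
$$T(k)=\exp\Big(\tfrac12\sum_{p\neq 0}\big(k_p a_p^\ast a_{-p}^\ast - \overline{k_p}\, a_p a_{-p}\big)\Big),$$
with a real, even kernel $k_p=k_{-p}$. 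The condensate density is fixed by requiring
$$N_0 + \sum_{p\neq 0}\sinh^2 k_p = N,$$
which gives $\langle \NN\rangle_{\Psi_L}=N$ exactly. Since $H^\GC_L$ is quartic in creation and annihilation operators and $\Psi_L$ is quasi-free after the shift, Wick's theorem gives $\langle H^\GC_L\rangle_{\Psi_L}$ explicitly in terms of $N_0$, $s_p:=\sinh k_p$ and $c_p:=\cosh k_p$.

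The first step is this computation. Writing $\rho_0=N_0/|\Lambda_L|$ and using $\widehat{\delta}=U$ (constant), I expect
$$\frac{\langle H^\GC_L\rangle}{|\Lambda_L|}= \frac{1}{|\Lambda_L|}\sum_{p}\eps(p)s_p^2 + \frac{U}{2}\Big[\rho_0^2 + \frac{2\rho_0}{|\Lambda_L|}\sum_p s_pc_p + \frac{4\rho_0}{|\Lambda_L|}\sum_p s_p^2 + \Big(\frac{1}{|\Lambda_L|}\sum_p s_pc_p\Big)^2 + 2\Big(\frac{1}{|\Lambda_L|}\sum_p s_p^2\Big)^2\Big].$$
In the limit $L\to\infty$ the sums become Riemann sums, $|\Lambda_L|^{-1}\sum_p \to |\Lh|^{-1}\int_{\Lh}dp$ (up to the cell-volume normalization from Appendix~\ref{sec:Fourier}). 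This passage is justified as soon as $p\mapsto k_p$ is a fixed, continuous, $L$-independent function away from a small neighbourhood of $0$ that is integrable near $0$.

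The second step is the choice of $k$, guided by the lattice scattering problem of Appendix~\ref{sec:Scattering}. The zero-energy scattering solution is $f=1-\omega$ with $\widehat{\omega}(p)= \frac{U f(0)}{2\eps(p)}$ and $Uf(0)=8\pi\a$. I would set
$$s_pc_p\approx -\rho\,\frac{8\pi\a}{2\eps(p)}\,\chi(p/\rho^{1/2}),$$
with $\chi$ a smooth cutoff vanishing near $0$. Equivalently I can take $\tanh(2k_p)= -\rho\,8\pi\a\,\chi/(\eps(p)+\rho\,8\pi\a)$, the Bogoliubov choice, which automatically keeps $|\tanh|<1$. With this choice:
\begin{itemize}
\item the kinetic term together with $\frac{U\rho_0}{|\Lambda_L|}\sum_p s_pc_p$ and $\frac U2\big(|\Lambda_L|^{-1}\sum_p s_pc_p\big)^2$ complete a square;
\item this square reproduces the relation $8\pi\a = U - U\gamma\cdot 8\pi\a$ from \eqref{def:scattering_len}, giving the leading term $4\pi\a\rho^2$.
\end{itemize}

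The third step is the error estimates, using \eqref{eps_bound} and the trivial upper bound $\eps(p)\le C|p|^2$ (from finite range \eqref{finite_range_assumption}). These show that $\frac{1}{|\Lambda_L|}\sum_p s_p^2\lesssim \int \frac{\rho^2}{|p|^4}\,\mathbf 1_{\{|p|\gtrsim\rho^{1/2}\}}\,dp + \dots \sim \rho^{3/2}$. Consequently the terms $4\rho_0 \sum s_p^2$, the term $(\sum s_p^2)^2$, and the replacement $\rho_0=\rho+O(\rho^{3/2})$ all contribute $O(\rho^{5/2})$. The cutoff $\chi$ also changes $\gamma$ in the square by $\int_{|p|\lesssim\rho^{1/2}}\eps(p)^{-1}dp=O(\rho^{1/2})$, again a relative $O(\rho^{1/2})$ error.

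The main obstacle is making the completion of the square exact despite the non-radial dispersion. In \cite{ErSchYao} radiality is used to solve the scattering equation and to control the truncated $\widehat{\omega}$ explicitly. Here I would avoid any position-space truncation and work entirely in momentum space, where $\widehat{\omega}\propto 1/\eps(p)$ is explicit. The cutoff is imposed on $p$ only, and all error integrals are bounded in spherical shells using $c|p|^2\le\eps(p)\le C|p|^2$. The remaining care is in the mixed term $U\big(\rho\,\frac{1}{|\Lambda_L|}\sum_p s_pc_p\big)$. There the exact on-site structure gives $\frac{1}{|\Lambda_L|}\sum_p \frac{1}{\eps(p)}\to 2\gamma$, and this identity is what converts $U$ into $8\pi\a$.
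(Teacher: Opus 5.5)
Your proposal is correct. It uses the same trial state and the same Wick formula as the paper, but chooses the pairing kernel differently and extracts $4\pi\a$ by a different bookkeeping. Both proofs use the quasi-free state (Weyl operator on the condensate composed with a pair-squeezing Bogoliubov transformation) and the same exact expression for $\la H^\GC_L\ra_\Psi$.

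The paper adds and subtracts $\rho w(0)$ inside the nonlocal square, so the energy becomes a sum of decoupled one-mode functionals plus two squares. It minimizes each mode exactly (Lemma~\ref{min_prop}) and shows that for this minimizer both squares are $O(\rho^3)$ per unit volume (Lemma~\ref{lemma:remainder}). It then uses $\frac U2\rho^2-\frac U2\rho^2w(0)^2=4\pi\a\rho^2+\frac12U^2\gamma(1-w(0))^2\rho^2$ to absorb the excess into a positive integrand. That integral is bounded by $C\rho^{5/2}$ via a Taylor bound on $\{\eps(p)\ge\rho^{1/4}\}$ and a coarea estimate on $\{\eps(p)<\rho^{1/4}\}$.

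You instead fix the standard Bogoliubov kernel with renormalized coupling $g=8\pi\a$. Then $\alpha:=|\Lambda_L|^{-1}\sum_{p\ne0}s_pc_p=-g\gamma\rho+O(\rho^{3/2})$, and the kinetic energy per volume is $\frac12 g^2\gamma\rho^2+O(\rho^{5/2})$. The square $\frac U2(\rho_0+\alpha)^2$ is therefore a genuine order-$\rho^2$ term rather than a negligible remainder. The leading term comes from the identity
$$U(1-g\gamma)^2+g^2\gamma=\frac{U}{1+U\gamma}+\gamma(1+U\gamma)\Big(g-\frac{U}{1+U\gamma}\Big)^2,$$
whose left side equals $8\pi\a$ at $g=8\pi\a$.

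Your route buys the following:
\begin{itemize}
\item It is shorter.
\item It makes transparent that finiteness of $\gamma$ (compact Brillouin zone, hence no ultraviolet problem) is what lets the naive Bogoliubov ansatz with the scattering length work, with no position-space truncation of the scattering solution.
\item Your infrared cutoff makes every Riemann-sum limit trivial.
\end{itemize}
The paper's route buys optimal coefficients for the linearized one-mode functional and a closed-form per-mode energy. That is the form one would refine to reach the next (LHY-type) order, whereas your cutoff at $|p|\sim\rho^{1/2}$ already costs $O(\rho^{5/2})$.

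When writing it up, state the completion of the square explicitly as above, since your text only gestures at it. Also note that $N_0\ge 0$, because $|\Lambda_L|^{-1}\sum_{p\ne0}s_p^2=O(\rho^{3/2})\ll\rho$.
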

The upper bound (with the same error term as above) follows then from the variational principle and the equivalence of ensembles. The adaptation of this well known argument for the discrete setting will be presented at the end of Section~\ref{sec:TrialState}, after the proof of Proposition \ref{main_thm_gc}. In particular, when considering the cubic lattice with nearest neighbor hopping one can reproduce the weak-coupling Lee-Huang-Yang upper bound analogous as in \cite{ErSchYao} (see also \cite{NapReuSol-18}).

As for the lower bound we will use the method of dividing the large (thermodynamic) lattice $\Lambda_L$ into smaller sub-lattices $\Lambda_\ell$ with a properly chosen length scale $\ell$. In the standard proof of the corresponding bound for the continuous (i.e. non-discrete) setting presented e.g. in \cite[Chapter 2.2]{MathBEC} the length scale $\ell$ is chosen is such a way that $\rho^{1/3} \ll \ell \ll \rho^{-1/2}$, which allows to effectively use the Dyson lemma and obtain the desired result. Since we were unable to prove the discrete analogue of this lemma that would be of use to us, we propose a different approach and choose $\ell \sim \rho^{-1/2}$, which is commonly known as the Gross-Pitaevskii length scale. Then we will use the method from \cite{OptimalRate} (also recently used in \cite{ChongLiangNam}) to get the operator inequality bounding $H_{n,\ell}$ from below for certain values of $n$ and $\ell$. In the end we will use the obtained bound and the method from \cite{BoccatoSeiringer} to get the following result.
\begin{prop}
\label{prop:lower_bound}
For small enough $\rho > 0$ the ground state energy density in the thermodynamic limit satisfies
$$e_0(\rho) \ge 4\pi \a \rho^2\left(1 - O(\rho^{1/6}) \right).$$
\end{prop}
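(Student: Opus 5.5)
We will prove Proposition~\ref{prop:lower_bound} by localizing to boxes at the Gross--Pitaevskii scale. We then bound each box from below by Bogoliubov theory, and glue the boxes together by the superadditivity argument of \cite{BoccatoSeiringer}.

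\textbf{Step 1: localization.} We partition $\Lambda_L$ into disjoint translates of a sub-lattice box $\Lambda_\ell$ with $\ell = K\rho^{-1/2}$, where $K$ is a large constant to be optimized. On each box we keep only the hopping terms $t(v)(u(x)-u(x+v))^2$ whose bonds lie entirely inside the box. Dropping the inter-box bonds only lowers the kinetic energy, which is a sum of nonnegative bond terms, and the interaction is on-site and hence exactly local. This gives
$$H_{N,L} \ge \bigoplus_{\text{boxes}} H^{\Neu}_{n,\ell},$$
where $H^{\Neu}_{n,\ell}$ is the box Hamiltonian built from the Neumann (graph) Laplacian $-\Delta^{\Neu}_\ell$ of the box. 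Hence
$$E_0(N,L)\ge \inf\Big\{\sum_{\text{boxes}} E^{\Neu}(n_j,\ell) \colon \sum_j n_j = N\Big\}.$$

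\textbf{Step 2: spectral estimates for the Neumann Laplacian.} The constant function is the ground state of $-\Delta^{\Neu}_\ell$ with eigenvalue $0$. We need two bounds.
\begin{itemize}
\item A spectral gap $\epsG \gtrsim \ell^{-2}$. This follows by comparison with the Neumann Laplacian built from the primitive bonds $D_1$ alone, which is a tensor-product (sheared cube) graph Laplacian with explicit spectrum $\sum_j 4t(a_j)\sin^2(\pi k_j/(2\ell))$.
\item A control of the Neumann eigenvalues $\epsN(k)$ against the bulk dispersion $\eps(p)$ at low momenta. This is needed so that the Bogoliubov sum reproduces $\gamma$ up to errors.
\end{itemize}
The second bound is the delicate part for general Bravais lattices. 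Bonds crossing the boundary are removed, so the Neumann eigenfunctions are not simply cosines. We plan to use the cosine functions $\prod_j \cos(\pi k_j(m_j+\tfrac12)/\ell)$ in the lattice coordinates as approximate eigenfunctions. We will estimate the error through a boundary layer of width $R_0(t)$, expecting an error of relative size $R_0(t)/\ell$.

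\textbf{Step 3: Bogoliubov lower bound in a box.} Following \cite{OptimalRate,ChongLiangNam}, we perform the c-number/excitation decomposition with respect to the constant mode. We then conjugate with a quadratic transformation built from the lattice scattering solution of Appendix~\ref{sec:Scattering}, truncated at scale $\ell$. This renormalizes $U$ into $8\pi\a$ and yields the operator inequality
$$H^{\Neu}_{n,\ell} \ge \frac{4\pi \a\, n(n-1)}{|\Lambda_\ell|}\,(1-C\eta) + c\,\epsG\,\NN_+ - \text{errors},$$
valid for $n \lesssim \rho|\Lambda_\ell|$. Here $\eta$ collects the errors from Step 2 and from the truncated scattering solution, $\NN_+$ denotes the number of excitations, and the positive gap term absorbs the cubic and quartic remainders. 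For $n$ much larger than $\rho|\Lambda_\ell|$ we use instead a crude bound, superadditivity in $n$, which shows that such configurations are unfavorable.

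\textbf{Step 4: gluing.} We apply the convexity argument of \cite{BoccatoSeiringer}. The box energy lower bound is essentially $n\mapsto 4\pi\a n^2/|\Lambda_\ell|$, which is convex, so the infimum over distributions $(n_j)$ is attained near the uniform filling $n_j=\rho|\Lambda_\ell|$. This gives $e_0(\rho)\ge 4\pi\a\rho^2(1-C\eta)$.

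\textbf{Optimization and main obstacle.} We then choose $K$ to balance the Bogoliubov and localization errors against the boundary error from Step 2, which we expect to lead to the rate $\rho^{1/6}$. The main obstacle is Step 2 on non-cubic lattices with longer-range hopping, namely comparing the Neumann spectrum with $\eps(p)$ uniformly enough to feed into Step 3. We will attack it first via the $D_1$-comparison for the gap, and via quadratic form bounds with boundary-layer cutoffs for the low modes.
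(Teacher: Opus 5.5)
Your architecture (Neumann localization at the Gross--Pitaevskii scale, the gap via the $D_1$ comparison, a Bogoliubov bound per box below a threshold $n\lesssim\ell$, superadditivity above it, convexity gluing) matches the paper. However, Step 2 as planned cannot supply what Step 3 needs, and this is the core of the proof.

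The box Bogoliubov correction is $-\frac{(8\pi\a)^2n^2}{4|\Lambda_\ell|^2}\sum_{k\neq0}\epsN(k)^{-1}$ up to lower order terms. To reach $4\pi\a$ you therefore need an \emph{upper} bound $\frac{1}{|\Lambda_\ell|}\Tr(-P_+\Delta^{\Neu}P_+)^{-1}\le 2\gamma+o(1)$, i.e.\ \emph{lower} bounds on the Neumann eigenvalues across the whole spectrum. The constant $\gamma$ comes from the entire Brillouin zone; the low modes you focus on contribute negligibly per unit volume and can be handled crudely. This plan fails on three counts:
\begin{itemize}
\item \textbf{Wrong direction.} Approximate eigenfunctions used as trial states only give bounds in the opposite direction: for any orthonormal basis $(f_k)$ you get $\Tr T^{-1}\ge\sum_k\langle f_k,Tf_k\rangle^{-1}$.
\item \textbf{Not approximate eigenfunctions in general.} For hopping along non-primitive directions, the cosine products are not approximate eigenfunctions even in the bulk. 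Take $t(a_1-a_2)>0$, e.g.\ fcc nearest neighbours in the primitive basis, where $a_1+a_2$ is not a neighbour. Then the bulk Laplacian maps $\prod_j\cos(\theta_j(m_j+\frac12))$ to a multiple of itself plus a term proportional to $t(a_1-a_2)\sin\theta_1\sin\theta_2$ times a sine--sine--cosine product. That term is of the same order $|\theta|^2$ as the eigenvalue.
\item \textbf{No mechanism for the rate.} A relative error $R_0/\ell$ at $\ell=K\rho^{-1/2}$ would be $O(\rho^{1/2})$. Tuning the constant $K$ cannot change an exponent, so nothing in your plan actually produces $\rho^{1/6}$.
\end{itemize}

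The missing idea is that no eigenfunctions are needed, only counting functions. From $-\Delta^{\Neu}_{\sp}\le-\Delta^{\Neu}\le-\Delta^{\Per}$ one gets $N_{\Per}(s)\le N_{\Neu}(s)\le N_{\sp}(s)$. Since $\Delta^{\Neu}-\Delta^{\Per}$ acts only on a boundary layer, $\rank(\Delta^{\Neu}-\Delta^{\Per})\lesssim\ell^2$, and eigenvalue interlacing gives $|N_{\Neu}(s)-N_{\Per}(s)|\lesssim\ell^2$. The explicit spectrum of $-\Delta^{\Neu}_{\sp}$ gives $N_{\sp}(s)\lesssim\ell^3s^{3/2}$. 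Insert these into $\Tr T^{-1}=\int_0^\infty N_T(s)s^{-2}\,ds$ and split at $\delta\sim\ell^{-2/3}$. This yields the two-sided bound $\frac{1}{|\Lambda_\ell|}\bigl|\Tr(-P_+\Delta^{\Neu}P_+)^{-1}-\Tr(-P_+\Delta^{\Per}P_+)^{-1}\bigr|\lesssim\ell^{-1/3}$, and the periodic trace is a Riemann sum for $2\gamma$. At $\ell\sim\rho^{-1/2}$, this $\ell^{-1/3}$ is exactly the $\rho^{1/6}$ of the statement. The same argument controls the higher inverse powers appearing in the Bogoliubov remainders.

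Separately, your Step 3 is heavier than necessary. For the on-site interaction, write $(1-P\otimes P\,m_\varphi)U\delta_{x,y}(1-m_\varphi P\otimes P)\ge0$ in a real Neumann eigenbasis. This yields only pairing terms with coefficient $U\varphi(0)=8\pi\a$ and a condensate term, with no cubic terms. The Lieb--Solovej inequality with $16\pi\a n/|\Lambda_\ell|<\mu<\frac12\epsG-8\pi\a n/|\Lambda_\ell|$ then finishes the box bound. No truncated scattering solution and no absorption of cubic or quartic remainders is needed.
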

Let us stress that the worse error term than in the upper bound is a consequence of spectral estimates that we derive for general Bravais lattices with general neighbor relations. For example, this error can be improved to be of order $O(\rho^{1/2} \ln \rho)$ if one considers cubic lattices with nearest neighbor hopping. The proof of Proposition \ref{prop:lower_bound} is given Section~\ref{sec:Lower_bound}. This will finish the proof of Theorem \ref{main_thm}.

In the rest of the paper we use the convention that $C$ denotes a generic constant (independent of relevant parameters) which may change from line to line. 


\section{The upper bound}
\label{sec:TrialState}
This section is devoted the proof of Proposition \ref{main_thm_gc} and the corresponding upper bound. As mentioned before, the idea of the proof will follow the one in \cite{ErSchYao}, but with some adaptation to the discrete setting. In particular, we will use methods that do not rely on the spherical symmetry of the system. The proof will be done in a few steps, many of them being by now standard in the field.

\subsection{Momentum representation of the Hamiltonian}

It will be convenient to rewrite the grand canonical Hamiltonian \eqref{def:gc_ham} in the formalism of creation and annihilation operators -  we refer to \cite{SolovejNotes} and \cite{Diag} for more details concerning second quantization and Bogoliubov transformations. We will also use notation from Appendix \ref{sec:Fourier}. 

We will fix $L \in 2\N$ satisfying $L \ge R_0(t)$ (recall definition \eqref{def:hopping_length}) and work within the momentum representation. Following Appendix \ref{sec:Fourier} (in particular section \ref{subsec:finite_lattice}) we denote
$$\Lh_L = \left\{\sum_{j=1}^d m_j \frac{b_j}{L+1} \colon m_j=-\frac{L}{2},-\frac{L}{2} + 1,\dots, \frac{L}{2} - 1, \frac{L}{2}\right\},$$
where $b_j$ are the primitive vectors of the reciprocal lattice $\Lambda^*$. For $p \in \Lh_L$ we denote by $a_p$ and $a_p^*$ annihilation and creation operators of a particle with a momentum $p \in \Lh_L$, that is
$$a_p = a(\chi_p), \quad a^*_p = a^*(\chi_p),$$
where
$$\chi_p(x) = \frac{1}{|\Lambda_L|^{1/2}}e^{ip \cdot x}, \quad x \in \Lambda_L.$$
Direct computation of the matrix elements of the one- and two-body operators in $H_{N,L}$ in the above basis yields
\begin{equation}
\label{Ham_mom}
H_L^\GC = \sum_{p} \eps(p) a^*_p a_p +\frac{U}{2 |\Lambda_L|}\sum_{p,q,k}  a_{p+k}^* a_{q-k}^* a_q a_p,  
\end{equation}
where indices $p,q,k$ run over $\Lh_L$ and $\eps(p)$ is defined analogously as in \eqref{def:eps} but only for discrete values of $p$:
\begin{equation}
\label{def:eps_finite}
\eps(p) = \sum_{v \in D} 2t(v)\big(1 - \cos (v \cdot p)\big), \quad p = (p_1,p_2,p_3) \in \Lh_L.  
\end{equation}
We also note that creation and annihilation operators satisfy standard canonical commutation relations
$$[a_p,a_k] = [a_p^*, a_k^*] = 0, \quad [a_p, a_k^*] = \delta_{p,k}.$$

\subsection{Construction of the states}

We proceed to the construction of the trial state $\Psi_L \in \F_L$ with $\la \NN \ra_{\Psi_L} = N$. As for this moment values of $N$ and $L$ are fixed, we will simplify notation and omit index $L$ in some of the objects (i.e. $\Psi_L = \Psi$ etc.).

Consider $N_0$ satisfying $0 \le N_0 \le N$ and for $p \in \Lh_L\setminus \{0\}$ consider a (finite) sequence of real numbers $(c_p) \subset \R$, such that $|c_p| < 1$ and $c_p = c_{-p}$. Define the state $\Psi$ as
\begin{equation}
\label{state_def}
\Psi =\left(e^{-N_0/2}\prod_{p \ne 0} (1-c_p^2)^{1/4} \right) e^{\frac12\sum_{p \ne 0} c_p a_p^* a_{-p}^* + \sqrt{N_0}a_0^*}\0,  
\end{equation}
where $p$ belongs to $\Lambda^*$. Here the exponent should be understood as a notation for the proper series expansion. One can recognize that $\Psi$ is a so-called Bogoliubov trial state, that is it is of the form
$$\Psi = W\U^*\0,$$
where
$$W = W\left(\sqrt{N_0/|\Lambda|}\right) = e^{\sqrt{N_0}(a_0^* - a_0)}$$
is the Weyl operator build upon constant function $\sqrt{N_0/|\Lambda|}$ and $\U$ is the Bogoliubov transformation given by
\begin{equation}
\label{Bog_transform}
\U = \exp \left(\sum_{p\in \Lh \setminus \{0\}} -\frac{\text{artanh } c_p}{2} \left(a^*_p a^*_{-p} - a_p a_{-p}\right)\right).   
\end{equation}
The state $\Psi$ is normalized and conserves momentum, meaning that 
\begin{equation}
\label{mom_cons1}
p \ne q \Rightarrow \la a_p^*a_q\ra_\Psi = 0 \qquad \text{and} \qquad 
p \ne -q \Rightarrow \la a_pa_q\ra_\Psi = \la a_p^*a_q^*\ra_\Psi= 0.  
\end{equation}

\subsection{Computation of the energy}
We will now find the energy $\la H_L^\GC \ra_\Psi$ of the system in the \mbox{state $\Psi$}. This is a well-known computation, so will only state the main steps. 

It follows from the properties of the Weyl and Bogoliubov transformations that
\begin{equation} \label{computation1}
\la a_0^*a_0\ra_\Psi = N_0, \qquad \la a^*_0a^*_0a_0a_0\ra_\Psi= \la a_0a_0\ra_\Psi = N_0^2
\end{equation}
and for $p \ne 0$
\begin{equation} \label{computation2}
\la a_p^*a_p\ra_\Psi = \frac{c_p^2}{1 - c_p^2}, \qquad \la a_p^*a_{-p}^*\ra_\Psi= \la a_pa_{-p}\ra_\Psi = \frac{c_p}{1 - c_p^2}.
\end{equation}
The same computation as in \cite[Appendix A.]{MokNap-24} leads, using \eqref{computation1} and \eqref{computation2}, to the following expression 
\begin{equation}
\label{energy}
\begin{split}
\la H_L^\GC \ra_\Psi &= \sum_{p \ne 0} \eps(p) \langle a^*_p a_p\rangle +\frac{U}{2 |\Lambda_L|}\sum_{p,q \ne 0}\left[ \langle a_{p}^* a_{-p}^*\rangle \langle a_q a_{-q} \rangle  + 2\langle a_{p}^* a_{p}\rangle \langle a^*_q a_{q} \rangle \right]
\\&+ \frac{UN_0}{2|\Lambda_L|} \sum_{p \ne 0}\left[ 2\langle a_p a_{-p} \rangle + 4\langle a_p^* a_{p} \rangle \right] + \frac{UN_0^2}{2|\Lambda_L|}
\\&= \sum_{p \ne 0} \frac{\eps(p) c_p^2}{1 - c_p^2} + \frac{U}{2|\Lambda_L|} \sum_{p,q \ne 0} \left[\frac{c_pc_q}{(1 - c_p^2)(1-c_q^2)} + \frac{2c_p^2c_q^2}{(1 - c_p^2)(1-c_q^2)}\right]
\\&+ \frac{UN_0}{|\Lambda_L|} \sum_{p \ne 0} \left[\frac{c_p}{1 - c_p^2} + \frac{2c_p^2}{1 - c_p^2}\right] + \frac{UN_0^2}{2|\Lambda_L|}.
\end{split}
\end{equation}

As mentioned in the statement of Proposition \ref{main_thm_gc} we will only consider states $\Psi \in \F$ with fixed  expectation value of particle numbers $N := \la \NN \ra_\Psi$ and later consider values of $N$ and $L$ such that
$$\frac{\la \NN \ra_\Psi}{|\Lambda_L|} = \frac{N}{L^3} \to \rho$$
when $N \to \infty$ and $L \to \infty$. By \eqref{computation1} and \eqref{computation2}  we have
$$\la \NN \ra_\Psi = N_0 + \sum_{p \ne 0}\frac{c_p^2}{1 - c_p^2}$$
hence for considered values of $N$ and $L$ we get
\begin{equation}
\label{rho_constraint}
\rho =  \frac{N_0}{|\Lambda_L|} + \frac1{|\Lambda_L|}\sum_{p \ne 0}\frac{c_p^2}{1 - c_p^2} + o(1)_{L \to \infty},
\end{equation}
where $o(1)_{L \to \infty}$ denotes the expression converging to zero as $L \to \infty$. We also observe that (as $\rho$ is fixed)
\begin{equation}
\label{rho^2_constraint}
\frac{1}{|\Lambda_L|^2}\left(N_0^2 + 2N_0\sum_{p \ne 0}\frac{c_p^2}{1 - c_p^2} + \left(\sum_{p \ne 0}\frac{c_p^2}{1 - c_p^2}\right)^2\right) = \rho^2 + o(1)_{L \to \infty}.  
\end{equation}
From now on we will assume that parameters $N_0$ and $c_p$ are chosen in such a way that \eqref{rho_constraint} is satisfied.

In the following part it will be convenient to rewrite the expectation value $\la H_L^{\GC} \ra_\Psi$ in \eqref{energy} to express it in terms of the total density $\rho$. The observations \eqref{rho_constraint}, \eqref{rho^2_constraint} and
$$\sum_{p,q \ne 0} \frac{c_p^2c_q^2}{(1 - c_p^2)(1- c_q^2)} = \left(\sum_{p\ne 0} \frac{c_p^2}{1 - c_p^2}\right)^2$$
imply that
\begin{equation*}
\begin{split}
\la H_L^\GC \ra_\Psi &= \sum_{p \ne 0} \frac{\eps(p) c_p^2}{1 - c_p^2} + \frac{U}{2 |\Lambda_L|} \sum_{p,q \ne 0} \frac{c_pc_q}{(1 - c_p^2)(1-c_q^2)} 
\\&+ U\left(\rho - \frac1{|\Lambda_L|}\sum_{q \ne 0}\frac{c_q^2}{1 - c_q^2} + o(1)_{L \to \infty}\right)\sum_{p \ne 0} \frac{c_p + c_p^2}{1 - c_p^2} + \frac{U}{2}|\Lambda_L|\left(\rho^2 +o(1)_{L \to \infty}\right),
\end{split}
\end{equation*}
which after rewriting gives
\begin{equation}
\label{energy_a}
\begin{split}
\la H_L^\GC \ra_\Psi &= \sum_{p \ne 0} \frac{\eps(p) c_p^2}{1 - c_p^2} + \frac{U\rho(c_p + c_p^2)}{1 - c_p^2}
\\&+ \frac{U}{2|\Lambda|}\sum_{p,q \ne 0} \frac{c_pc_q - 2c_q^2(c_p + c_p^2)}{(1 - c_p^2)(1-c_q^2)} + o(1)_{L \to \infty} \cdot \sum_{p \ne 0} \frac{c_p + c_p^2}{1 - c_p^2}
\\&+ \frac{U}{2}|\Lambda|\left(\rho^2 +o(1)_{L \to \infty}\right)
\end{split}
\end{equation}
We expect (and prove it in further steps) that with proper selection of coefficients $c_p$ the value of
$$\sum_{p,q \ne 0} \frac{c_p^2c_q^2}{(1 - c_p^2)(1-c_q^2)} = \left(\sum_{p \ne 0} \frac{c_p^2}{1 - c_p^2}\right)^2$$
will be negligible in the thermodynamic limit in the dilute regime. To this end we rewrite (using symmetry of summation with respect to indices $p$ and $q$)
\begin{align*}
\sum_{p,q \ne 0} \frac{c_pc_q - 2c_q^2(c_p + c_p^2)}{(1 - c_p^2)(1-c_q^2)} &= \sum_{p,q \ne 0}\frac{(c_p - c_p^2)(c_q -c_q^2)}{(1 - c_p^2)(1-c_q^2)} - 3\sum_{p,q \ne 0}\frac{c_p^2c_q^2}{(1 - c_p^2)(1-c_q^2)}
\\&= \left(\sum_{p \ne 0} \frac{c_p - c_p^2}{1 - c_p^2}\right)^2 - 3\left(\sum_{p \ne 0} \frac{c_p^2}{1 - c_p^2}\right)^2.
\end{align*}
With this result expression \eqref{energy_a}, after further simplifications $\frac{c_p - c_p^2}{1 - c_p^2} = \frac{c_p}{1 + c_p}$ and $\frac{c_p + c_p^2}{1 - c_p^2} = \frac{c_p}{1 - c_p}$ becomes
\begin{equation}
\begin{split}
\la H_L^\GC \ra_\Psi &= \sum_{p \ne 0} \frac{\eps(p) c_p^2}{1 - c_p^2} + \frac{U\rho c_p }{1 - c_p}
+ \frac{U}{2|\Lambda_L|}\left(\sum_{p \ne 0} \frac{c_p}{1 + c_p}\right)^2
\\&- \frac{3U}{2|\Lambda_L|}\left(\sum_{p \ne 0} \frac{c_p^2}{1 - c_p^2}\right)^2 + o(1)_{L \to \infty} \cdot \sum_{p \ne 0} \frac{c_p}{1 - c_p}
\\&+ \frac{U}{2}|\Lambda_L|\left(\rho^2 +o(1)_{L \to \infty}\right)
\end{split}
\end{equation}
As a final modification of this expression we will replace the squared term in the first line with a linear one at expense of some another negligible term in the low density limit. The main idea is to add and subtract a $\rho  w(0)$ term to every element of the sum (also recall that $w(0)$ is given in \eqref{useful}). We will denote
\begin{equation}
\label{s_p}
s_p = \frac{c_p}{1+c_p}   
\end{equation}
and for convenience we will additionally define $s_0 = 0$. Next we write
\begin{align*}
\left(\sum_{p \ne 0} s_p \right)^2 = \left(\sum_{p \in \Lh_L} s_p \right)^2 &= \left(\sum_{p \in \Lh_L} \big(s_p + \rho  w(0)\big)\right)^2 - 2|\Lambda_L|\rho w(0)\sum_{p \in \Lh_L} s_p  - |\Lambda_L|^2\rho^2 w(0)^2
\\&= \left(\sum_{p \in \Lh_L} \big(s_p + \rho  w(0)\big)\right)^2 - 2|\Lambda_L|\rho w(0)\sum_{p \ne 0} s_p  - |\Lambda_L|^2\rho^2 w(0)^2.
\end{align*}
Eventually we obtain the following expression for the energy in the state $\Psi$
\begin{equation}
\label{energy_final}
\begin{split}
\la H_L^\GC \ra_\Psi &= \sum_{p \ne 0}\left( \eps(p)\frac{c_p^2}{1 - c_p^2} + U\rho\frac{c_p}{1 - c_p} - U\rho w(0)\frac{c_p}{1 + c_p}\right) - \frac{U}{2}|\Lambda_L|\rho^2 w(0)^2
\\& + \frac{U}{2|\Lambda_L|}\left(\sum_{p \in \Lh_L} \big(s_p + \rho  w(0)\big)\right)^2- \frac{3U}{2|\Lambda_L|}\left(\sum_{p \ne 0}\frac{c_p^2}{1 - c_p^2}\right)^2 
\\&+ o(1)_{L \to \infty} \cdot \sum_{p \ne 0} \frac{c_p}{1 - c_p} + \frac{U}{2}|\Lambda_L|\left(\rho^2 + o(1)_{L \to \infty}\right).
\end{split}
\end{equation}

\subsection{Minimalization}

We proceed to the minimalization procedure. For a start, we are interested in term-by-term minimization of the sum in the first line of \eqref{energy_final}, that is the want to minimize:
\begin{equation}
\label{min_target}
\eps(p) \frac{c_p^2}{1 - c_p^2} + U\rho \frac{c_p + c_p^2}{1 - c_p^2} - U\rho w(0)\frac{c_p}{1 + c_p}.   
\end{equation}
As mentioned before, it will turn out that the quadratic terms (second line in \eqref{energy_final}) will be negligible with the selection of $c_p$ minimizing this expression. In a more concrete manner, we will start with proving the following Lemma.

\begin{lemma}
\label{min_prop}
The minimal value of \eqref{min_target} is
\begin{equation}
\label{energy_expression}
\frac12 \left(\sqrt{\left(\eps(p) + 2U\rho\right)\left(\eps(p) + 2U\rho w(0)\right)} - \eps(p) - U\rho\left(1 + w(0)\right)\right).
\end{equation}
The explicit values of $c_p$ can be recovered from relation \eqref{def:s_p}.
\end{lemma}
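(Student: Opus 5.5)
The plan is to remove the rational dependence on $c_p$ with a hyperbolic substitution. In the new variable, \eqref{min_target} should become an expression of the form $Au+B/u+\text{const}$ in a positive variable $u$, and its minimum is then given by the AM–GM inequality. Fix $p\neq 0$ and write $\eps=\eps(p)$. Since $|c_p|<1$, we may put $c_p=\tanh\theta$ with $\theta\in\R$, which is a bijection $\R\to(-1,1)$. Then
$$\frac{c_p^2}{1-c_p^2}=\sinh^2\theta,\qquad \frac{c_p}{1-c_p^2}=\sinh\theta\cosh\theta .$$

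The first step is to express each term of \eqref{min_target} through $u:=e^{2\theta}\in(0,\infty)$. Using $\frac{c_p}{1-c_p}=\frac{c_p+c_p^2}{1-c_p^2}$ and $\frac{c_p}{1+c_p}=\frac{c_p-c_p^2}{1-c_p^2}$, together with the double-angle formulas, one gets
$$\sinh^2\theta=\tfrac14\big(u+u^{-1}-2\big),\qquad \frac{c_p}{1-c_p}=\tfrac12(u-1),\qquad \frac{c_p}{1+c_p}=\tfrac12\big(1-u^{-1}\big).$$
Substituting these into \eqref{min_target} gives
$$\tfrac12\Big(A\,u+\frac{B}{u}-\eps-U\rho\big(1+w(0)\big)\Big),\qquad A=\frac{\eps}{2}+U\rho,\quad B=\frac{\eps}{2}+U\rho\, w(0).$$

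The second step is to minimize over $u>0$. This requires $A,B>0$. $A>0$ is automatic. $B>0$ follows from $\eps(p)>0$ for $p\neq0$, which is a consequence of \eqref{eps_bound}, once $w(0)\ge0$ is available from \eqref{useful}. I expect this positivity check on $w(0)$ to be the only point that needs care. Granting it, AM–GM gives $Au+B/u\ge 2\sqrt{AB}$, with equality exactly at $u_*=\sqrt{B/A}$. Moreover
$$2\sqrt{AB}=\sqrt{(\eps+2U\rho)(\eps+2U\rho w(0))},$$
so the minimal value is exactly \eqref{energy_expression}.

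Finally, the minimizer is recovered through $s_p=\frac{c_p}{1+c_p}=\frac12\big(1-u_*^{-1}\big)$, i.e.
$$s_p=\frac12\Big(1-\sqrt{\tfrac{\eps(p)+2U\rho}{\eps(p)+2U\rho w(0)}}\Big),$$
which should match the relation \eqref{def:s_p} referred to in the statement. The corresponding coefficient is $c_p=s_p/(1-s_p)$. It satisfies $|c_p|<1$ automatically, because $u_*\in(0,\infty)$ corresponds to a finite $\theta$, and $c_p=c_{-p}$ holds because $\eps(p)=\eps(-p)$.
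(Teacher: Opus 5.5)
Your proof is correct. It differs from the paper's only in how the one-variable minimization is carried out.

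The paper passes to $s_p=c_p/(1+c_p)$ and writes \eqref{min_target} as the function $x\mapsto \eps(p)\frac{x^2}{1-2x}+U\rho\frac{x}{1-2x}-U\rho w(0)x$ on $x<\frac12$. It then locates the critical point by differentiation. That this critical point is the global minimizer is left implicit in its ``straightforward computation''; it follows, e.g., from uniqueness of the critical point together with the function tending to $+\infty$ at both ends of $(-\infty,\frac12)$.

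Your variable $u=e^{2\theta}=\frac{1+c_p}{1-c_p}$ is exactly $(1-2s_p)^{-1}$, the reciprocal of the denominator in the paper's function. In this variable the objective becomes $\frac12(Au+B/u)$ plus a constant. AM--GM then gives the global lower bound and the unique equality case at once, with no calculus and no separate argument for global minimality. It is also the natural variable in view of the Bogoliubov parameter $\operatorname{artanh} c_p$ in \eqref{Bog_transform}.

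The positivity check you flag is harmless: by \eqref{useful}, $w(0)=U\gamma/(1+U\gamma)\in(0,1)$, so $B>0$ already because $U\rho w(0)>0$, even without using $\eps(p)>0$.

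The paper's route has the minor convenience of producing the minimizer directly in the variable $s_p$ used afterwards in Lemma~\ref{lemma:remainder}. You recover the same formula \eqref{def:s_p} at the end anyway.
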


\begin{proof}
In order to minimize \eqref{min_target} it will be convenient to rewrite it in terms of variable $s_p$ introduced in \eqref{s_p}. The inverse relation is given by
\begin{equation}
\label{c_p_inverse}
c_p = \frac{s_p}{1 - s_p}   
\end{equation}
and since $c_p \in (-1,1)$ we have $s_p \in (-\infty, \frac12)$. Direct computation yields
$$\frac{c_p^2}{1 - c_p^2} = \frac{s_p^2}{1 - 2s_p}, \quad \quad \frac{c_p}{1 - c_p} = \frac{s_p}{1 - 2s_p},$$
so \eqref{min_target} expressed in terms of $s_p$ becomes
\begin{equation}
\eps(p) \frac{s_p^2}{1 - 2s_p} + U\rho \frac{s_p}{1 - 2s_p} - U\rho w(0)s_p.
\end{equation}
Now the minimalization problem reduces to finding minimum of the function
$$F(x) := A\frac{x^2}{1 - 2x} + B\frac{x}{1 - 2x} - Cx.$$
for $A,B,C>0$ on the domain $x < \frac12$. 
Computing the derivative
$$F'(x) = A\frac{2x(1 - 2x) + 2x^2}{(1 - 2x)^2} + B\frac{1 - 2x + 2x}{(1 - 2x)^2} - C = \frac{2Ax(1 - x) + B}{(1 - 2x)^2} - C$$
and equating it to zero shows, after some straightforward computation, that the minimal value of the function $F$ is attained at
$$x_0 = \frac12 - \frac12\left(1 + 2\frac{B - C}{A + 2C}\right)^{1/2}$$
and is equal to
$$F(x_0) = \frac12\left(\sqrt{(A+2B)(A+2C)} - (A + B + C)\right).$$
Going back to the original minimalization problem, we have
$$A = \eps(p), \quad B = U\rho, \quad C = U\rho w(0),$$
so the minimal value of the expression \eqref{min_target} is exactly as in \eqref{energy_expression} and is attained at
\begin{equation}
\label{def:s_p}
s_p = \frac12 - \frac12\left(\frac{\eps(p) + 2U\rho}{\eps(p) + 2U\rho w(0)}\right)^{1/2} = \frac12 - \frac12\left(1 + \frac{2U\rho(1 - w(0))}{\eps(p) + 2U\rho w(0)}\right)^{1/2}.  
\end{equation}
\end{proof}

Having minimized the local part of the energy \eqref{energy_final} now we will show that the remaining parts are negligible in the dilute limit $\rho \to 0$. 
\begin{lemma}
\label{lemma:remainder}
For $s_p$ chosen as in \eqref{def:s_p} (and respectively chosen $c_p$ as in \eqref{c_p_inverse}) we have asymptotic bounds
\begin{equation}
\label{negligible}
\frac{1}{|\Lambda_L|}\left[\frac{U}{2|\Lambda_L|}\left(\sum_{p\ne 0} \big(s_p + \rho  w(0)\big)\right)^2- \frac{3U}{2|\Lambda_L|}\left(\sum_{p\ne 0}\frac{c_p^2}{1 - c_p^2}\right)^2\right] \lesssim \rho^3.   
\end{equation}
Moreover for the $\left|\sum_{p \ne 0} \frac{c_p}{1 - c_p}\right|$ term we have
\begin{equation}
\label{negligible_N}
\sum_{p \ne 0} \frac{c_p}{1 - c_p} \lesssim |\Lambda_L|.
\end{equation}
The notation $x \lesssim y$ means $x \le c y$ for some constant $c > 0$ independent of $L$ and $\rho$.
\end{lemma}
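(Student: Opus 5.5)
The plan is to reduce every sum in \eqref{negligible} and \eqref{negligible_N} to a Riemann sum over $\Lh_L$ and compare it with an integral over $\Lh$. The lower bound \eqref{eps_bound}, $\eps(p)\ge c|p|^2$, controls the singularity at $p=0$ uniformly in $L$.

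\textbf{Pointwise bounds on $s_p$.} I first record pointwise properties of $s_p$ from \eqref{def:s_p}. By \eqref{useful}, $0\le w(0)\le 1$, so the ratio under the square root is at least $1$. Hence $s_p\le 0$ and $1-2s_p\ge 1$. Moreover
$$|s_p|\le \frac{U\rho(1-w(0))}{2(\eps(p)+2U\rho w(0))}\lesssim \frac{\rho}{\eps(p)+\rho}\lesssim \frac{\rho}{|p|^2+\rho},$$
using $\sqrt{1+y}-1\le y/2$. Also $|s_p|\lesssim 1$ trivially. For the second-order term I use
$$0\le \sqrt{1+y}-1-\tfrac y2+\tfrac{y^2}{8}\lesssim y^3,$$
or simply $\bigl|\sqrt{1+y}-1-\tfrac y2\bigr|\le \tfrac{y^2}{8}$, which gives
$$\Bigl|s_p+\frac{U\rho(1-w(0))}{2(\eps(p)+2U\rho w(0))}\Bigr|\lesssim \frac{\rho^2}{(|p|^2+\rho)^2}.$$

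\textbf{Bound \eqref{negligible_N}.} Since $c_p/(1-c_p)=s_p/(1-2s_p)$, we have $|c_p/(1-c_p)|\le |s_p|\lesssim \rho/(|p|^2+\rho)\lesssim 1/|p|^2$. Since the points of $\Lh_L\setminus\{0\}$ have spacing $\sim 1/L$, $|\Lambda_L|^{-1}\sum_{p\ne0}|p|^{-2}$ is bounded by a constant times $|\Lh|^{-1}\int_{\Lh}|p|^{-2}\,dp<\infty$ plus $o(1)$. To make this uniform in $L$, I compare each summand with the integral over a neighbouring cell of the reciprocal grid, on which $|p|^{-2}$ changes by a bounded factor away from the origin cell. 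This gives \eqref{negligible_N}.

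\textbf{Second term of \eqref{negligible}.} Using $c_p^2/(1-c_p^2)=s_p^2/(1-2s_p)\le s_p^2$,
$$\frac1{|\Lambda_L|}\sum_{p\ne0}\frac{c_p^2}{1-c_p^2}\lesssim \int_{\R^3}\frac{\rho^2}{(|p|^2+\rho)^2}\,dp\lesssim \rho^{3/2},$$
by scaling $p=\sqrt\rho\,q$. The same cell comparison handles the Riemann sum, and the $p$ near $0$ contribute $O(1/|\Lambda_L|)$. Squaring and multiplying by $U/2$ gives $O(\rho^3)$. This term also enters with a negative sign, so it only helps the upper bound.

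\textbf{First term of \eqref{negligible} (main obstacle).} This is the cancellation step. I write
$$\frac1{|\Lambda_L|}\sum_{p}\bigl(s_p+\rho w(0)\bigr)=\rho w(0)-\frac{U\rho(1-w(0))}{2}\cdot\frac1{|\Lambda_L|}\sum_{p\ne0}\frac{1}{\eps(p)+2U\rho w(0)}+O(\rho^{3/2}),$$
where the error term uses the second-order bound above and the same $\rho^{3/2}$ integral. Next I replace the remaining sum by $2\gamma$. The difference is
$$\frac{1}{\eps(p)}-\frac{1}{\eps(p)+2U\rho w(0)}\lesssim \frac{\rho}{|p|^2(|p|^2+\rho)},$$
whose integral is $O(\rho^{1/2})$, plus an $o(1)_{L\to\infty}$ Riemann-sum error. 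The leading part is then $\rho\bigl(w(0)-U\gamma(1-w(0))\bigr)$. This vanishes exactly by the relation defining $w(0)$ in \eqref{useful}, namely $w(0)=U\gamma/(1+U\gamma)$, equivalently $8\pi\a=U(1-w(0))$; this is the identity I expect to have to invoke. What remains is $O(\rho^{3/2})$, so the squared term times $U/2$ is $O(\rho^3)$, which proves \eqref{negligible}.

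The delicate points are getting this cancellation exactly right and making all Riemann-sum-to-integral comparisons uniform near $p=0$. For the latter, the monotone cell comparison using \eqref{eps_bound} is what I would try first.
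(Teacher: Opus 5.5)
Your proposal is correct and follows essentially the paper's argument. It uses the same pointwise Taylor bounds on $s_p$ and on $c_p^2/(1-c_p^2)$, reduces everything to Riemann sums controlled by $\eps(p)\ge c|p|^2$, and for the first term relies on the same exact cancellation via $w(0)=U\gamma(1-w(0))$. The paper inserts this identity as $w(0)=|\Lh|^{-1}\int_{\Lh}\frac{U(1-w(0))}{2\eps(p)}\,dp$ before bounding the integrand pointwise, while you isolate the Taylor remainder and the $\rho$-shift separately; your cell-by-cell comparison is more explicit about uniformity near $p=0$ than the paper's ``for sufficiently large $L$'' step. Note also that \eqref{negligible_N} already follows from your own observation $|s_p|\lesssim 1$, since $|s_p|\le \frac{1}{4U\gamma}$, with no Riemann-sum argument needed.
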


\begin{proof}
We will start with analyzing the second term in \eqref{negligible}. First we check that
\begin{align*}
\frac{c_p^2}{1-c_p^2} = \frac{s_p^2}{1 - 2s_p} = \frac14 \left(1 + \frac{2U\rho(1 - w(0))}{\eps(p) + 2U\rho w(0)}\right)^{1/2} + \frac14 \left(1 + \frac{2U\rho(1 - w(0))}{\eps(p) + 2U\rho w(0)}\right)^{-1/2} - \frac12.
\end{align*}
Using the inequality (coming from the Taylor expansion)
$$(1 + x)^{1/2} + (1+x)^{-1/2} \le 2 + \frac{x^2}{4}$$
we can estimate
$$\frac{c_p^2}{1-c_p^2} \le \frac{U^2\rho^2(1 - w(0))^2}{4(\eps(p) + 2U\rho w(0))^2}.$$
Now we will deduce that
\begin{equation}
\label{reminder1}
\frac{1}{|\Lambda|}\sum_{p \ne 0}\frac{c_p^2}{1-c_p^2} \lesssim \rho^{3/2}  
\end{equation}
This result will follow from approximating the sum by the integral (as this is a Riemann sum of a continuous function on $\Lh$) and dividing the integration into regions with small momenta and momenta separated form zero (we use the fact that locally near $p=0$ the manifold $\Lh$ looks like a subset of the Euclidean space $\R^3$). For sufficiently large $L$ we have
\begin{equation}
\label{int_delta_split}
\begin{split}
\frac1{|\Lambda_L|}\sum_{p \ne 0} \frac{c_p^2}{1-c_p^2} &\lesssim |\Lh|^{-1}\int_{\Lh}\frac{U^2\rho^2(1 - w(0))^2}{(\eps(p) + 2U\rho w(0))^2}dp
\\&= |\Lh|^{-1}\int_{|p|\le p_0}\frac{U^2\rho^2(1 - w(0))^2}{(\eps(p) + 2U\rho w(0))^2}dp + |\Lh|^{-1} \int_{|p|>p_0}\frac{U^2\rho^2(1 - w(0))^2}{(\eps(p) + 2U\rho w(0))^2}dp.   
\end{split}
\end{equation}
Here we have chosen the same $p_0$ as used in \eqref{eps_bound}, in particular we have the bound $\eps(p) \ge c |p|^2$. We can simplify the upcoming bounds even more by noting
\begin{equation}
\label{gamma_ineq}
U\big(1-w(0)\big) \le \frac{1}{\gamma}. 
\end{equation}
Then, by using spherical coordinates, we get
\begin{equation}
\label{small_p}
\begin{split}
|\Lh|^{-1} \int_{|p|\le p_0}\frac{U^2\rho^2(1 - w(0))^2}{(\eps(p) + 2U\rho w(0))^2} dp &\le C \rho^2 \int_{|p|\le p_0}\frac{1}{(p^2 + 2U\rho w(0))^2} dp
\\&\le C \rho^2\int_{|p|\le p_0}\frac{1}{p^2(p^2 + 2U\rho w(0))} dp
\\&= C \rho^2 \int_0^{p_0} \frac{1}{(r^2 + 2U\rho w(0))}dr
\\&= C \rho^2 \frac{1}{\sqrt{2U\rho w(0)}} \arctan \left(\frac{p_0}{\sqrt{2U\rho w(0)}}\right)
\\&\le \frac{C}{\sqrt{Uw(0)}}\rho^{3/2}.
\end{split}
\end{equation}
The constant $C$ is dependent only on $c$ from \eqref{eps_bound}, $U$ and $|\Lh|$.

The second integral in \eqref{int_delta_split} can be estimated trivially as for $|p| > p_0$ we have $\eps(p) > c$ for some constant $c > 0$ (dependent only on the fixed $p_0$), so
$$|\Lh_L|^{-1}\int_{|p|>p_0}\frac{U^2\rho^2(1 - w(0))^2}{2(\eps(p) + 2U\rho w(0))^2} \le \frac{\rho^2}{2\gamma(c + 2U\rho w(0))^2} \le C\rho^2$$
Combining the above results inequality \eqref{reminder1} follows. We conclude that 
$$\frac{1}{|\Lambda_L|}\left(\sum_{p \ne 0}\frac{c_p^2}{1 - c_p^2}\right)^2 = |\Lambda_L| \left(\frac1{|\Lambda_L|}\sum_{p \ne 0}\frac{c_p^2}{1 - c_p^2}\right)^2 \lesssim  |\Lambda_L| \rho^3$$
hence in the thermodynamic limit $(L \to \infty, N \to \infty, N/|\Lambda_L| \to \rho)$ we have the asymptotics
$$\frac{1}{|\Lambda_L|^2}\left(\sum_{p \ne 0}\frac{c_p^2}{1 - c_p^2}\right)^2 \lesssim \rho^3,$$
which proves this term is indeed negligible.

Now we proceed to estimate the other term in \eqref{negligible}. Once again we are dealing with the continuous function on $\Lh$, hence for sufficiently large $L$ we can approximate the sum by the integral:
\begin{equation}
\label{sp_series}
\left|\frac{1}{|\Lambda_L|}\sum_{p \ne 0}\left(s_p + \rho  w(0)\right)\right|
\lesssim |\Lh|^{-1}\left|\int_{\Lh}\left(\frac12 - \frac12\left(1 + \frac{2U\rho(1 - w(0))}{\eps(p) + 2U\rho w(0)}\right)^{1/2} + \rho  w(0)\right)dp\right|
\end{equation}
Using equation \eqref{w_transform} we have
$$w(0) = \int_{\T^3} \widehat w(p) dp = \int_{\T^3}\frac{U(1 - w(0))}{2\eps(p)}dp$$
we can write
\begin{equation}
\label{w(0)_int}
\begin{split}
&\;|\Lh|^{-1}\left|\int_{\Lh}\left(\frac12 - \frac12\left(1 + \frac{2U\rho(1 - w(0))}{\eps(p) + 2U\rho w(0)}\right)^{1/2} + \rho  w(0)\right)dp\right|
\\&= |\Lh|^{-1}\left|\int_{\Lh}\left(\frac12 - \frac12\left(1 + \frac{2U\rho(1 - w(0))}{\eps(p) + 2U\rho w(0)}\right)^{1/2} +  \frac{U\rho(1-w(0))}{2\eps(p)}\right)dp\right|
\end{split}   
\end{equation}
Using the inequality
$$\sqrt{1+x} \ge 1 + \frac12 x - \frac14 x^2$$
we also have
$$\frac12 - \frac12\left(1 + \frac{2U\rho(1 - w(0))}{\eps(p) + 2U\rho w(0)}\right)^{1/2} \le -\frac{U\rho(1 - w(0))}{2(\eps(p) + 2U\rho w(0))} + \frac12\left(\frac{U\rho(1 - w(0))}{\eps(p) + 2U\rho w(0)}\right)^2.$$
Moreover
\begin{equation*}
-\frac{U\rho(1 - w(0))}{2(\eps(p) + 2U\rho w(0))} + \frac{U\rho(1-w(0))}{2\eps(p)} = \frac{U^2\rho^2 w(0)(1-w(0))}{\eps(p)(\eps(p) + 2U\rho w(0))},
\end{equation*}
hence, after some more straightforward estimates
$$\eqref{w(0)_int} \le |\Lh|^{-1}\int_{\Lh}\frac{U^2\rho^2 (1-w(0))}{\eps(p)(\eps(p) + 2U\rho w(0))}.$$
Similarly as before we will split the integration into the regions $|p| \le p_0$ and $|p| > p_0$, where $p_0$ is still the same as in \eqref{eps_bound}. For $|p| > p_0$ we have $\eps(p) > c$, hence
$$\frac{U^2\rho^2 w(0)(1-w(0))}{\eps(p)(\eps(p) + 2U\rho w(0))} \le \frac{U^2\rho^2 (1-w(0))}{c(c + 2U\rho w(0))} \le C \rho^2$$
for constant $C$ independent of $\rho$. Using this bound we get
$$\int_{|p| > p_0} \frac{U^2\rho^2 (1-w(0))}{\eps(p)(\eps(p) + 2U\rho w(0))} \le \int_{|p| > p_0} \frac{U^2\rho^2 (1-w(0))}{c(c + 2U\rho w(0))} \le\int_{\Lh} \frac{U^2\rho^2 (1-w(0))}{c(c + 2U\rho w(0))} \le C\rho^2.$$
For the integral with $|p| \le p_0$ we use analogous argument as in \eqref{small_p} to obtain
$$\int_{|p| \le p_0}\frac{U^2\rho^2 (1-w(0))}{\eps(p)(\eps(p) + 2U\rho w(0))} \le C U\rho^2 \frac{1}{\sqrt{2U\rho w(0)}} \arctan \left(\frac{p_0}{\sqrt{2U\rho w(0)}}\right)
\le C\rho^{3/2}.$$
Using this results in \eqref{sp_series} we conclude
$$\left|\frac{1}{|\Lambda_L|}\sum_{p \ne 0}\left(\frac{c_p}{1 + c_p} + \rho w(0)\right)\right| \le C\rho^{3/2}$$
and so
$$\frac{1}{|\Lambda_L|}\left(\sum_{p\ne 0} \left(\frac{c_p}{1 + c_p} + \rho  w(0)\right)\right)^2 = |\Lambda_L|\left(\frac{1}{|\Lambda_L|}\sum_{p\ne 0}\left(\frac{c_p}{1 + c_p} + \rho w(0)\right)\right)^2  \le C|\Lambda_L|\rho^3.$$
In the thermodynamic limit this gives the asymptotics
$$\frac{1}{|\Lambda_L|^2}\left(\sum_{p\ne 0} \big(s_p + \rho  w(0)\big)\right)^2  \lesssim \rho^3.$$
To prove \eqref{negligible_N} we perform a very similar argument as above: first we observe
$$\left|\sum_{p \ne 0}\frac{c_p}{1 - c_p}\right| = \left|\sum_{p \ne 0} \frac12\left(1 + \frac{2U\rho(1-w(0))}{\eps(p) + 2U\rho w(0)}\right)^{-1/2} - \frac12\right| \le \sum_{p \ne 0} \left[\frac12 - \frac12\left(1 + \frac{2U\rho(1-w(0))}{\eps(p) + 2U\rho w(0)}\right)^{-1/2}\right].$$
Using the inequality $(1 + x)^{-1/2} \ge 1 - \frac12x$ and approximating sum with the integral we get
$$\left|\frac{1}{|\Lambda_L|}\sum_{p \ne 0}\frac{c_p}{1 - c_p}\right| \lesssim \int_{\Lh} \frac{U\rho(1-w(0))}{\eps(p) + 2U\rho w(0)}dp.$$
The integral is convergent by similar arguments as before. Its value is independent of $L$ (it is dependent on $\rho$, but for this particular bound this fact is irrelevant), hence the proof of the lemma is finished.
\end{proof}

From Lemma \ref{min_prop} and Lemma \ref{lemma:remainder} we deduce the following corollary concerning the energy.

\begin{cor}
For the values of $c_p$ for which those minima of \eqref{min_target} are attained we have
\begin{equation}
\label{energy_final_error}
\begin{split}
\la H_L^\GC \ra_\Psi &= \sum_{p \ne 0} \frac12 \left(\sqrt{\left(\eps(p) + 2U\rho\right)\left(\eps(p) + 2U\rho w(0)\right)} - \eps(p) - U\rho\left(1 + w(0)\right)\right)
\\&+ \frac{U}{2}|\Lambda_L|\rho^2  - \frac{U}{2}|\Lambda_L|\rho^2w(0)^2 + |\Lambda_L| \cdot O(\rho^3)_{\rho \to 0} + |\Lambda_L|\cdot o(1)_{L \to \infty}.
\end{split}
\end{equation}
\end{cor}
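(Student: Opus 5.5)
The corollary follows by inserting the optimal coefficients of Lemma \ref{min_prop} into the exact energy identity \eqref{energy_final} and then controlling each remaining term with Lemma \ref{lemma:remainder}. I would go through \eqref{energy_final} line by line.

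\emph{First line.} The sum over $p\ne 0$ consists exactly of the expressions \eqref{min_target}, rewritten via $\frac{c_p}{1-c_p}=\frac{c_p+c_p^2}{1-c_p^2}$. With $s_p$ chosen as in \eqref{def:s_p} (equivalently $c_p$ from \eqref{c_p_inverse}), each summand equals \eqref{energy_expression}. This produces the first line of \eqref{energy_final_error}, together with the constant term $-\frac U2|\Lambda_L|\rho^2 w(0)^2$. Before using these $c_p$, I would check that they are admissible, i.e. $|c_p|<1$ and $c_p=c_{-p}$. Admissibility follows from $s_p\in(-\infty,\tfrac12)$ and $\eps(p)=\eps(-p)$, provided $w(0)<1$; the latter is implicit in \eqref{gamma_ineq}.

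\emph{Second line.} This is precisely the bracket estimated in \eqref{negligible}. There is one bookkeeping point: in \eqref{energy_final} the sum runs over all $p\in\Lh_L$ with $s_0=0$, whereas Lemma \ref{lemma:remainder} sums over $p\ne0$. The difference is the single term $\rho w(0)$ inside the square. Since
\[
\frac{U}{2|\Lambda_L|}\Bigl(\sum_{p\ne0}(s_p+\rho w(0))+\rho w(0)\Bigr)^2
=\frac{U}{2|\Lambda_L|}\Bigl(\sum_{p\ne0}(s_p+\rho w(0))\Bigr)^2+O\bigl(\rho^{5/2}\bigr)+O\bigl(\rho^2/|\Lambda_L|\bigr),
\]
using the $O(|\Lambda_L|\rho^{3/2})$ bound on the inner sum obtained in the proof of Lemma \ref{lemma:remainder}, this discrepancy is $o(|\Lambda_L|)$ as $L\to\infty$. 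Lemma \ref{lemma:remainder} then bounds the whole second line by $|\Lambda_L|\cdot O(\rho^3)$.

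\emph{Third line.} The term $o(1)_{L\to\infty}\cdot\sum_{p\ne0}\frac{c_p}{1-c_p}$ is at most $o(1)\cdot C|\Lambda_L|$ by \eqref{negligible_N}, so it is absorbed into $|\Lambda_L|\cdot o(1)_{L\to\infty}$. The remaining term $\frac U2|\Lambda_L|(\rho^2+o(1))$ gives $\frac U2|\Lambda_L|\rho^2$ plus another such error.

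The only genuinely delicate point is the error accounting. The $O(\rho^3)$ constant must be uniform in $L$, which holds because Lemma \ref{lemma:remainder} uses $L$-independent constants once $L$ is large enough that the Riemann sums are close to the integrals. Also, the product of an $o(1)_{L\to\infty}$ factor with a sum must be $O(|\Lambda_L|)$ rather than larger, and this is exactly what \eqref{negligible_N} provides. Collecting the three lines yields \eqref{energy_final_error}.
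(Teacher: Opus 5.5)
Your proposal is correct and is essentially the paper's own one-line argument: Lemma \ref{min_prop} evaluates each summand in the first line of \eqref{energy_final}, and Lemma \ref{lemma:remainder} disposes of the remaining lines. Your extra check that the single $p=0$ summand $\rho w(0)$ inside the square costs only $O(\rho^{5/2})+O(\rho^2/|\Lambda_L|)=o(|\Lambda_L|)$ is accurate, and it is a detail the paper skips.

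As a minor remark, $|c_p|<1$ holds for every $s_p<\tfrac12$ without needing $w(0)<1$ (which in any case follows from \eqref{useful} rather than \eqref{gamma_ineq}). The admissibility condition you did not mention, $N_0=N-\sum_{p\ne0}\frac{c_p^2}{1-c_p^2}\ge0$, follows from \eqref{reminder1} once $\rho$ is small and $L$ is large.
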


\subsection{Thermodynamic limit}
Now we pass with the expression \eqref{energy_final_error} divided by the volume $|\Lambda_L|$ to the thermodynamic limit. We will denote this limit as
$$\lim_{L \to \infty} \frac{1}{|\Lambda_L|}\la H_L^\GC \ra_\Psi = e_\Psi$$
The sums in \eqref{energy_final_error} are Riemann sums of a continuous function on $\Lh$, hence they converge to the integrals of the proper expression. More precisely, we obtain
\begin{equation}
\label{energy_thermo}
\begin{split}
e_\Psi &= |\Lh|^{-1}\int_{\Lh}\frac12 \left(\sqrt{\left(\eps(p) + 2U\rho\right)\left(\eps(p) + 2U\rho w(0)\right)} - \eps(p) - U\rho\left(1 + w(0)\right)\right)dp
\\&+ \frac{U}{2}\rho^2  - \frac{U\rho^2}{2}w(0)^2 + O(\rho^3)
\end{split}
\end{equation}
To see the dependence on the scattering length $a$ we recall the definition \eqref{def:scattering_len} and write
$$U = 8\pi \a (1 + U\gamma),$$
hence
$$\frac{U}{2}\rho^2 = 4\pi \a \rho^2 + 4\pi \a U\gamma \rho^2,$$
Next note that (e.g. from \eqref{useful})
$$4\pi \a = \frac{w(0)}{2\gamma},$$
so we can rewrite the last line of \eqref{energy_thermo} (besides the error term) as
\begin{align*}
\frac{U}{2}\rho^2  - \frac{U\rho^2}{2}w(0)^2 &= 4\pi \a \rho^2 + 4\pi \a U\gamma \rho^2  - \frac{U\rho}{2}w(0)^2
\\&= 4\pi \a \rho^2 + \frac{U\rho^2}{2}w(0)\big(1 - w(0)\big) 
\\&= 4\pi \a \rho^2 + \frac{1}{2} U^2\gamma \big(1 - w(0)\big)^2\rho^2.
\end{align*}
Recalling also that $\gamma$ is given by the integral of the function $1/2\eps(p)$, this additionally allows to rewrite \eqref{energy_thermo} as
\begin{equation}
\label{energy_thermo2}
\begin{split}
e_\Psi &= |\Lh|^{-1}\int_{\Lh}\frac12 \left(\sqrt{\left(\eps(p) + 2U\rho\right)\left(\eps(p) + 2U\rho w(0)\right)}  - \eps(p) - U\rho\left(1 + w(0)\right) + \frac{U^2(1 - w(0))^2\rho^2}{2\eps(p)} \right)dp
\\&+ 4\pi \a\rho^2  + O(\rho^3),
\end{split}
\end{equation}
where we have joined the previous integral with the integral defining $\gamma$.

Let us now focus on evaluating the above integral. We note that the integrand is a positive function, which follows from the computation
\begin{equation}
\label{positive_integrand}
\begin{split}
&\;\sqrt{\left(\eps(p) + 2U\rho\right)\left(\eps(p) + 2U\rho w(0)\right)} - \eps(p) - U\rho\left(1 + w(0)\right) + \frac{U^2(1 - w(0))^2\rho^2}{2\eps(p)}
\\&= \frac{4U^2\rho^2w(0) - U^2\rho^2(1+w(0))^2}{\sqrt{\left(\eps(p) + 2U\rho\right)\left(\eps(p) + 2U\rho w(0)\right)} + \eps(p) + U\rho\left(1 + w(0)\right)} + \frac{U^2(1 - w(0))^2\rho^2}{2\eps(p)}
\\&= -\frac{U^2\rho^2(1-w(0))^2}{\sqrt{\left(\eps(p) + 2U\rho\right)\left(\eps(p) + 2U\rho w(0)\right)} + \eps(p) + U\rho\left(1 + w(0)\right)} + \frac{U^2(1 - w(0))^2\rho^2}{2\eps(p)}
\\&= \frac{U^2\rho^2(1-w(0))^2\left(\sqrt{\left(\eps(p) + 2U\rho\right)\left(\eps(p) + 2U\rho w(0)\right)} - \eps(p) + U\rho\left(1 + w(0)\right)\right)}{2\eps(p)\left(\sqrt{\left(\eps(p) + 2U\rho\right)\left(\eps(p) + 2U\rho w(0)\right)} + \eps(p) + U\rho\left(1 + w(0)\right)\right)}>0.
\end{split}
\end{equation}
Next, similarly as before, we will split the integration into two regions, but this time into regions $\eps(p) \ge \delta$ and $\eps(p) < \delta$ where $\delta = \delta(\rho)$ will be chosen as a certain function of $\rho$. For the first region if $\rho$ is sufficiently small (with respect to $U$ and $w(0)$) we can Taylor expand the square root up to the terms of order $\rho^3$. Using the inequality
$$(1+x)^{1/2} \le 1 + \frac12 x - \frac18x^2 + \frac{1}{16}x^3$$
we get
\begin{align*}
\sqrt{\left(\eps(p) + 2U\rho\right)\left(\eps(p) + 2U\rho w(0)\right)} &= \eps(p)\sqrt{\left(1 + \frac{2U\rho}{\eps(p)}\right)\left(1 + \frac{2U\rho w(0)}{\eps(p)}\right)}
\\&= \eps(p)\sqrt{1 + \frac{2U\rho(1+w(0))}{\eps(p)} + \frac{4U^2\rho^2w(0)}{\eps(p)^2}}
\\&\le \eps(p) + U\rho(1+w(0)) + \frac{2U^2\rho^2w(0) - \frac12U^2\rho^2(1+w(0))^2}{\eps(p)} + \frac{C}{\eps(p)^2}\rho^3
\\&\le \eps(p) + U\rho(1+w(0)) - \frac{\frac12U^2\rho^2(1-w(0))^2}{\eps(p)} + C(\delta)\rho^3,
\end{align*}
where $C > 0$ is a constant coming from the Taylor expansion, independent of $\rho$, and
\begin{equation}
\label{C_delta}
C(\delta) := \max_{\eps(p) \ge \delta} \frac{C}{\eps(p)^2} = \frac{C}{\delta^2},    
\end{equation}
which is another constant, dependent only on $\delta$. Now we can estimate the integrand as follows
\begin{align*}
&\quad \sqrt{\left(\eps(p) + 2U\rho\right)\left(\eps(p) + 2U\rho w(0)\right)} - \eps(p) -  U\rho\left(1 + w(0)\right) +  \frac{U^2(1 - w(0))^2\rho^2}{2\eps(p)}  
\\&\le \eps(p) + U\rho(1+w(0)) - \frac{\frac12U^2\rho^2(1-w(0))^2}{\eps(p)} + C(\delta)\rho^3 -\eps(p) - U\rho\left(1 + w(0)\right) + \frac{U^2(1 - w(0))^2\rho^2}{2\eps(p)}
\\&= C(\delta)\rho^3,
\end{align*}
hence
\begin{equation}
\label{integral>delta}
\begin{split}
&\;|\Lh|^{-1}\int_{\eps(p) \ge \delta}\frac12 \left(\sqrt{\left(\eps(p) + 2U\rho\right)\left(\eps(p) + 2U\rho w(0)\right)} - \eps(p) - U\rho\left(1 + w(0)\right) + \frac{U^2(1 - w(0))^2\rho^2}{2\eps(p)}\right)dp
\\&\le |\Lh|^{-1}\int_{\eps(p) \ge \delta}\frac12 C(\delta) \rho^3dp
\le |\Lh|^{-1}\int_{\Lh}\frac12 C(\delta) \rho^3dp
=\frac12 C(\delta)\rho^3.
\end{split}
\end{equation}
We will explicitly choose $\delta = \delta(\rho)$ after the next step.

Now we proceed to the integral on the domain $\eps(p) < \delta$. Using the coarea formula (see e.g. \cite[Section 3.4.3]{EvansMeasure}) for some general and sufficiently regular function $f$ integrable near zero we have
\begin{equation*}
\int_{\eps(p) < \delta} f(\eps(p))dp = \int_0^\delta f(r) \left(\int_{\eps(p) = r} \frac{d \mathcal H^2(\xi)}{|\nabla \eps(\xi)|}\right)dr,
\end{equation*}
where $\H^2$ is the two dimensional Hausdorff (surface) measure. By direct investigation of the formula \eqref{def:eps} and \eqref{eps_bound} we see that
$$c |p|^2 \le \eps(p) \le C|p|^2$$
for some constants $c$ and $C$ independent of $p$, hence
$$\H^2\big(\{p \colon \eps(p) = r\}\big) \le Cr$$
for some constant $C > 0$. Moreover
$$|\nabla \eps(p)| > c|p|$$
for some constant $c > 0$, hence for $\xi \in \{p \colon \eps(p) = r\}$ we have
$$\frac{1}{|\nabla \eps(\xi)|} \le \frac{1}{c|\xi|} \le \frac{C}{\sqrt{r}}$$
and therefore
$$\int_{\eps(p) = r} \frac{d \mathcal H^2(\xi)}{|\nabla \eps(\xi)|} \le Cr^{1/2}.$$
As a result, if the function $f$ is non-negative, we get
\begin{equation*}
\int_{\eps(p) < \delta} f(\eps(p))dp \le C \int_0^\delta r^{1/2} f(r) dr
\end{equation*}
for a constant $C > 0$ independent of $\delta$. We are going to use this observation for the integrand as in \eqref{energy_thermo2}, that is
$$f(\eps(p)) = \frac12 \left(\sqrt{\left(\eps(p) + 2U\rho\right)\left(\eps(p) + 2U\rho w(0)\right)} - \eps(p) - U\rho\left(1 + w(0)\right) + \frac{U^2(1 - w(0))^2\rho^2}{2\eps(p)}\right).$$
In \eqref{positive_integrand} we have already noted that this function is positive. We have
\begin{align*}
&\;|\Lh|^{-1}\int_{\eps(p) \le \delta}f(\eps(p))dp
\le C\int_0^{\delta} r^{1/2}f(r)dr
\\&= C\rho\int_0^{\delta} r^{1/2}\left(\sqrt{\left(\frac{r}{\rho} + 2U\right)\left(\frac{r}{\rho} + 2U w(0)\right)} - \frac{r}{\rho} - U\left(1 + w(0)\right) + \frac{U^2(1 - w(0))^2\rho}{2r}\right)dr
\\&= C\rho^{5/2}\int_0^{\delta/\rho} s^{1/2}\left(\sqrt{\left(s + 2U\right)\left(s + 2U w(0)\right)} - s - U(1+w(0)) + \frac{U^2(1-w(0))^2}{2s}\right)ds
\\& \le C\rho^{5/2}\int_0^{+\infty} s^{1/2}\left(\sqrt{\left(s + 2U\right)\left(s + 2U w(0)\right)} - s - U(1+w(0)) + \frac{U^2(1-w(0))^2}{2s}\right)ds,
\end{align*}
where in the second to last equality we have changed the variable $r := \rho s$ and in the last inequality we used the fact the integrand is well defined and positive on $\R_+$. Performing similar computation as in \eqref{positive_integrand} one can check that this integral is convergent, in particular we can conclude
$$|\Lh|^{-1}\int_{\eps(p) \le \delta}f(\eps(p))dp \le C\rho^{5/2}.$$
Now we combine this result with \eqref{integral>delta} and \eqref{C_delta} where we choose $\delta = \rho^{1/4}$. We eventually obtain 
$$|\Lh|^{-1}\int_{\Lh}f(\eps(p))dp \le C\rho^{5/2}.$$
This allows us to estimate $e_\Psi$ in \eqref{energy_thermo2} as
$$e_\Psi \le 4\pi \a \rho^2(1 + C\rho^{1/2}).$$
This finishes the proof of Proposition \ref{main_thm_gc}.

\subsection{Equivalence of ensembles}
\label{sec:Equiv}
We will now recall the well known argument which shows how to use Proposition \ref{main_thm_gc} in order to obtain the lower bound in Theorem \ref{main_thm}. We follow the proof from \cite[Lemma 3.3.2]{Aaen} with some inspiration from \cite[Lemma A.4]{BasCenSch}. Here, however, we will not assume that $\frac{N}{|\Lambda_L|}$ is constant. We start with the following lemma.
\begin{lemma}
\label{E(N,L)_lower}
For any $N$ and $L$ larger than the hopping length $R_0(t)$ of the lattice (recall the definition \eqref{def:hopping_length}) we have an inequality
$$\frac{E_0(N,L)}{|\Lambda_L|} \ge e_0 \left({\frac{N}{|\Lambda_L|}}\right).$$
\end{lemma}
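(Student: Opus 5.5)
We prove the inequality by a standard superadditivity (tiling) argument: a ground state on a small torus is copied onto a large torus made of many copies of it. Fix $N$ and $L \ge R_0(t)$, and let $\psi$ be a normalized ground state of $H_{N,L}$ on $\Lambda_L$, so that $\la H_{N,L}\ra_\psi = E_0(N,L)$. For $k \in \N$ let $M = k(L+1)-1$; with suitable parity adjustments, or by working directly with the torus $A\Z^3/A(k(L+1)\Z)^3$, the group $A\Z^3/A(M+1)\Z^3$ covers $\Lambda_L \simeq A\Z^3/A(L+1)\Z^3$ with $k^3$ sheets. We lift $\psi$ periodically: define $\tilde\psi$ on $(\Lambda_M)^{kN}$? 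No, that route fails, because the Laplacian is nonlocal across copies. Instead we use periodic extension at the level of the particle number. The ground state on the big torus with $k^3 N$ particles is bounded above by a trial state, but that gives the wrong inequality direction. The correct direction is the following. By translation-invariance, $E_0(N,L)$ on the $(L+1)$-periodic torus equals the infimum of the energy over $(L+1)$-periodic configurations on a larger torus. Conversely, the key fact is that the $N$-body problem on $\Lambda_L$ is unitarily equivalent to the restriction of $H_{k^3N, M}$ to states that are invariant under translations by $A(L+1)\Z^3$ and carry the same particle density. This yields $E_0(k^3N,M) \le k^3 E_0(N,L)$.

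To make this rigorous, we construct a trial state on $\Lambda_M$ with $k^3N$ particles. In the Fock representation we take $\Psi = \prod$ over the $k^3$ cells? Products of localized states again cost boundary energy. The cleaner route goes through Fourier space. Momenta in $\Lh_L$ embed into $\Lh_M$, since $\frac{m}{L+1} = \frac{km}{M+1}$, and the dispersion $\eps(p)$ in \eqref{def:eps_finite} coincides on this embedded set because $L \ge R_0(t)$ (this is exactly where the hopping-length assumption enters: every hop is realized on both tori with the same weight). The interaction coefficient, however, scales as $U/|\Lambda_L|$ versus $U/|\Lambda_M|$, and so does the density of modes. Hence we work instead with second-quantized coherent/product structure: pick any state on $\Lambda_L$, lift each one-particle orbital $\phi$ to its periodic extension $k^{-3/2}\phi$ on $\Lambda_M$, and take the $N$-particle state built from these lifted orbitals, tensored $k^3$ times. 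This is the step I expect to be the main obstacle. A symmetric tensor product of $k^3$ copies of a periodic $N$-particle state has interaction energy $k^{-3}$ times the original rather than $1$, so the naive lift does not reproduce $k^3E_0(N,L)$ exactly.

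Given that difficulty, I would first try the opposite, more standard decomposition, which has the right direction for a lower bound of $E_0$. We write $\Lambda_L$ itself as a union of boxes and use the Neumann-type subadditivity from below. The inequality $E_0(N,L)/|\Lambda_L| \ge e_0(N/|\Lambda_L|)$ then follows from \emph{subadditivity} of the thermodynamic limit, $e_0(\rho) = \inf$ over finite periodic volumes. Concretely, set $M+1 = k(L+1)$ and $N' = k^3N$, and build the trial state on $\Lambda_M$ as the symmetrization of the $(L+1)$-periodic function
$$\Psi(x_1,\dots,x_{N'}) = \prod_{j=1}^{k^3}\psi\bigl(x_{(j-1)N+1},\dots,x_{jN}\bigr),$$
where each factor $\psi$ is read modulo $A(L+1)\Z^3$. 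The kinetic terms are local and periodic, so they give $k^3$ times the kinetic energy of $\psi$, again by $L \ge R_0(t)$. The interaction terms between different blocks are nonnegative, since $U>0$, and this is what must be controlled. Because they are nonnegative, this route gives only an upper bound on $E_0(N',M)$, i.e. $e_0 \le E_0(N,L)/|\Lambda_L| + (\text{cross terms})$. The plan is therefore to show that the cross terms vanish. This is achieved by replacing the product $\prod\psi$ with the bosonic state in which all $N'$ particles occupy the lifted, and hence exactly equivalent, problem. I would check via the Fourier embedding above that $H_{N',M}$ restricted to $(L+1)$-periodic symmetric functions is unitarily equivalent to $k^3 H_{N,L}$ plus nonnegative terms, and then let $k\to\infty$ to obtain $e_0(N/|\Lambda_L|) \le E_0(N,L)/|\Lambda_L|$.
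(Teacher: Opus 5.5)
Your proposal never establishes the inequality it hinges on, $E_0(k^3N,M)\le k^3E_0(N,L)$, and the mechanism you end with cannot supply it. Restricted to functions that are $(L+1)$-periodic in every variable, $H_{k^3N,M}$ is, after normalization, the $k^3N$-boson Hamiltonian on $\Lambda_L$ with coupling $U/k^3$. That is a mean-field problem whose energy per unit volume tends, as $k\to\infty$, to the Born value $\frac U2\rho^2$ with $\rho=N/|\Lambda_L|$, not to $E_0(N,L)/|\Lambda_L|$. An identity of the form ``$k^3H_{N,L}$ plus nonnegative terms'' would only bound the energy in that subspace from below, which is the wrong direction. The formula ``$e_0(\rho)=\inf$ over finite periodic volumes'' is just a restatement of the lemma.

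In fact no repair is possible, because the inequality is false. For $N=1$ the constant function gives $E_0(1,L)=0$, whereas $E_0(k^3,M)>0$ for $k\ge2$. Indeed, a zero-energy state must have zero kinetic energy, hence be constant by the connectivity coming from \eqref{t_1_assumption}, and then its interaction energy is strictly positive. The lemma itself fails too. For $N=1$ its left side is $0$, while $e_0(|\Lambda_L|^{-1})\ge4\pi\a|\Lambda_L|^{-2}(1-C|\Lambda_L|^{-1/6})>0$ for large $L$ by Proposition~\ref{prop:lower_bound}, whose proof does not use this lemma. The case $N=1$ is in scope, since the lemma is applied to every sector $n$ in the equivalence-of-ensembles argument. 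Nor is this a small-$N$ artifact: for fixed $N$ and $L\to\infty$ one expects $E_0(N,L)\approx4\pi\a N(N-1)/|\Lambda_L|$, which lies below $|\Lambda_L|\,e_0(N/|\Lambda_L|)\approx4\pi\a N^2/|\Lambda_L|$.

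The paper's proof takes the tiling route you discarded. It puts into each of the $k^3$ sub-boxes of $\Lambda_{L(k)}$ a translated copy of $\psi_N$ supported in that box (which is why the copies do not interact), symmetrizes, and asserts that hopping between sub-boxes contributes exactly like periodic hopping within one copy of $\Lambda_L$. That assertion is precisely the obstruction you noticed (``products of localized states again cost boundary energy''). When the $i$-th particle hops from $x_i$ out of its box, the amplitude of $x[i\to x_i+v]$ vanishes. So the term contributes $t(v)|\Psi(x)|^2$, instead of the periodic $t(v)|\psi_N(x[i\to x_i+v])-\psi_N(x)|^2$ with $x_i+v$ read modulo $A(L+1)\Z^3$.

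What the construction actually proves is $E_0(k^3N,L(k))\le k^3E_0^{\mathrm{Dir}}(N,L)$. Here $E_0^{\mathrm{Dir}}(N,L)$ is the ground state energy in the box $\Lambda_L$ with the wave function extended by zero and all outgoing hoppings kept. This gives the lemma only with $E_0^{\mathrm{Dir}}$ in place of $E_0$. So your diagnosis of both obstructions (boundary cost for localized copies, positive cross-interaction for periodic copies) was right. Neither your argument nor the paper's proves the statement as written. It must be weakened, e.g.\ to the Dirichlet version or to a periodic version with an error vanishing in the thermodynamic limit, and the equivalence-of-ensembles step adapted accordingly.
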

\begin{proof}
For fixed $L$ and for any $k \in \N$  denote
$$L(k) := k(L+1) -1$$
and consider a (periodic) lattice $\Lambda_{L(k)}$. This lattice can be divided into $k^3$ sub-lattices with each sub-lattice being the translation of the original lattice $\Lambda_L$.\footnote{Note that the definition of $L(k)$ is correct as the number of points in $\Lambda_L$ is $|\Lambda_L| =(L+1)^3$, hence $|\Lambda_{L(k)}| = k^3(L+1)^3$.}

Next take any $N$-particle state $\psi_N \in \H^N_{L}$. Using this state for any $k \in \N$ we will construct a state on $\H^{k^3N}_{L(k)}$, i.e. the $k^3N$-particle Hilbert space based on a larger lattice $\Lambda_{L(k)}$. To this end on each of sub-lattices that $\Lambda_{L(k)}$ can be divided into we put a translated, independent copy of $\psi_N$ and define a state $\psi_{k^3N} \in L^2\left(\H^{k^3N}_{L(k)}\right)$ as the symmetrized result of this procedure. 

Now, since the interaction potential of the Bose-Hubbard model has zero range, different sub-lattices do not interact with each other, hence the total interaction potential energy is the sum of potential energies of each sub-lattice. Furthermore, due to the construction of the state, hopping between different sub-lattices gives the same contribution to the kinetic energy as hopping within a single sub-lattice with imposed periodic boundary condition (here we also use the condition $L \ge R_0(t)$), hence the kinetic energy of the state $\psi_{k^3N}$ is equal to the sum of kinetic energies of copies of $\psi_N$ from each sub-lattice. This allows us to conclude
$$E_0(k^3N, L(k)) \le \la H_{k^3 N,L(k)}\ra_{\psi_{k^3 N}} = k^3\la H_{L}\ra_{\psi_N}.$$
Minimizing over $\psi_N$ gives
$$E_0(k^3N, L(k)) \le k^3E_0(N,L),$$
and therefore
$$\frac{E_0(N,L)}{|\Lambda_L|} \ge \frac{E_0(k^3N, L(k))}{|\Lambda_{L(k)}|} = \frac{E_0(k^3N, L(k))}{k^3|\Lambda_{L}|}.$$
Since this inequality is valid for any $k$, we can pass to the limit $k \to \infty$ and obtain
$$\frac{E_0(N,L)}{|\Lambda_L|} \ge e_0\left(\frac{N}{|\Lambda_L|}\right)$$
as desired.
\end{proof}

Now we can proceed to the main problem. We first observe that trivially
$$E_0^{\GC}(N,L) \le E_0(N,L) $$
as any canonical trial state with $N$ particles can be lifted to the grand canonical one, occupying only the $N$-particle sector of the Fock space (in particular having $N$ as the expected number of particles). It follows that
$$\limsup_{\substack{N \to \infty \\ L \to \infty \\ N/|\Lambda_L| \to \rho}} \frac{E_0^{GC}(N,L)}{|\Lambda_L|}\le \limsup_{\substack{N \to \infty \\ L \to \infty \\ N/|\Lambda_L| \to \rho}} \frac{E_0(N,L)}{|\Lambda_L|} = e_0(\rho).$$
It remains to prove
$$\liminf_{\substack{N \to \infty \\ L \to \infty \\ N/|\Lambda_L| \to \rho}} \frac{E_0^{\GC}(N,L)}{|\Lambda_L|}\ge e_0(\rho).$$
To this end we introduce a variable $\mu \in \R$ (that can be interpreted as the chemical potential) and for any normalized $\Psi \in \F_L$ with $\la \NN\ra_{\Psi} = N$ we write
\begin{align*}
\frac{\la H\ra_\Psi}{|\Lambda_L|} &= \frac1{|\Lambda_L|} \left[\mu\la \NN\ra_{\Psi} + \la H - \mu \NN\ra_{\Psi}\right]
\\&= \frac1{|\Lambda_L|}\left[\mu N + \sum_{n=1}^\infty \|\Psi^{(n)}\|^2\left(\la H_n\ra_{\Psi^{(n)}} - \mu n\right)\right]
\\&= \mu \frac{N}{|\Lambda_L|} + \sum_{n=0}^\infty \|\Psi^{(n)}\|^2\left(\frac{\la H_n\ra_{\Psi^{(n)}}}{|\Lambda_L|} - \mu \frac{n}{|\Lambda_L|} \right)
\\&\ge \mu \frac{N}{|\Lambda_L|} + \sum_{n=0}^\infty \|\Psi^{(n)}\|^2\left(\frac{E_0(n,L)}{|\Lambda_L|} - \mu \frac{n}{|\Lambda_L|}\right)
\\&\ge \mu \frac{N}{|\Lambda_L|} + \sum_{n=0}^\infty \|\Psi^{(n)}\|^2\left(e_0\left(\frac{n}{|\Lambda_L|}\right) - \mu \frac{n}{|\Lambda_L|}\right)
\\&\ge \mu \frac{N}{|\Lambda_L|} + \sum_{n=0}^\infty \|\Psi^{(n)}\|^2 \inf_{\tilde \rho \ge 0}\left(e_0\left(\tilde \rho\right) - \mu \tilde \rho\right)
\\&= \mu\frac{N}{|\Lambda_L|} + \inf_{\tilde \rho \ge 0}\left(e_0\left(\tilde \rho\right) - \mu \tilde \rho\right),
\end{align*}
where in one of the steps we have used Lemma \ref{E(N,L)_lower}. We also recognize
$$\inf_{\tilde \rho \ge 0}\left(e_0\left(\tilde \rho\right) - \mu \tilde \rho\right) = -e_0^*(\mu),$$
where $e_0^*(\mu)$ is the Legendre transform of $e_0(\tilde \rho)$ (we also use notation $\tilde \rho$ in order not to confuse it with $\rho$ fixed in the statement of the Theorem \ref{main_thm}). Since $\Psi \in \F_L$ above was arbitrary, we conclude
$$\frac{E_0^{\GC}(N,L)}{|\Lambda_L|} \ge \mu\frac{N}{|\Lambda_L|} - e_0^*(\mu).$$
Taking the limes inferior of both sides gives
$$\liminf_{\substack{N \to \infty \\ L \to \infty \\ N/|\Lambda_L| \to \rho}} \frac{E_0^{\GC}(N,L)}{|\Lambda_L|} \ge \mu \rho - e_0^*(\mu).$$
Furthermore, as the left hand side is independent of $\mu$, we additionally get
$$\liminf_{\substack{N \to \infty \\ L \to \infty \\ N/|\Lambda_L| \to \rho}} \frac{E_0^{\GC}(N,L)}{|\Lambda_L|} \ge \sup_{\mu \in \R} \left(\mu \rho - e_0^*(\mu)\right) = e^{**}_0(\rho) = e_0(\rho),$$
where the last equality follows from the fact $e_0$ is a convex and continuous (up to a boundary) function and for such functions Legendre transform is an involution (i.e. $f^{**} = f$, see e.g. \cite[Lemma A.3]{BasCenSch} for a simple proof). The ends the proof of the upper bound in Theorem \ref{main_thm}.

\section{The lower bound}
\label{sec:Lower_bound}

In this section we will prove Proposition \ref{prop:lower_bound}. We will follow the strategy described at the beginning of the paper.

\subsection{Division into sub-lattices}
Similarly as in the proof of the lower bound in the continuous setting (see e.g. \cite[Chapter 2]{MathBEC}) we will divide the large (thermodynamic) lattice $\Lambda_L$ into smaller ones. The upcoming lemma is a well-known result, here will give a proof based on \cite[Lemma 5.21]{Rougerie} adapted to the lattice setting. Beforehand, in analogy to the definition \eqref{def:GSE}, we will denote the ground state energy of the $N$ particle system in the box of side-length $L$ with Neumann Laplacian as
$$E_0^\Neu(N,L) = \inf_{\substack{\psi \in \H^N_L\\\|\psi\| = 1}} \la\psi, H_{N,L}^\Neu\psi\ra$$
with
$$H_{N,L}^\Neu = -\sum_{i = 1}^N\Delta_{\Lambda_L,i}^\Neu + U\sum_{i<j}^N\delta_{x_i,x_j}$$
and $\Delta^\Neu$ defined (in accordance to the Appendix \ref{subsec:Neumann_bd}) as
$$-\Delta^\Neu_{\Lambda_L} u(x) = \sum_{\substack{y \sim x\\y \in \Lambda_L}} t(y-x)\big(u(x) - u(y)\big), \quad u \in L^2(\Lambda_L)$$
In the following proof we will use the fact the ground state of the Hamiltonian $H_{N,L}$ with either periodic or Neumann Laplacian over the symmetric wave functions is the same as the ground state over all wave functions, in particular the ground state energies are the same with or without imposing the symmetry constraint. This statement for the continuous case is proven e.g. in \cite[Corollary 3.1]{StabilityofMatter}, for discrete systems this proof is also valid with some straightforward modifications. 
\begin{lemma}
\label{box_division}
Choose $L$ and $\ell \in 2\N$ such that $\ell \ge R_0(t)$ and $\frac{L+1}{\ell+1} \in \N$. We have the following estimate of the ground state energy
$$E_0(N,L) \ge \inf \left\{\sum_{n=0}^N c_n E^{\Neu}_0(n,\ell) \colon c_n \ge 0, \; \sum_{n=0}^N c_n = \frac{(L+1)^3}{(\ell+1)^3}, \; \sum_{n=0}^Nn c_n = N\right\}.$$
\end{lemma}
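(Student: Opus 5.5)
The plan is to localize the kinetic energy into the sub-boxes and then read off the constrained minimization from the particle-number distribution. Since $\frac{L+1}{\ell+1}=k\in\N$, the periodic lattice $\Lambda_L$ (which has $(L+1)^3$ sites) splits into $M:=k^3$ disjoint translates $B_1,\dots,B_M$ of $\Lambda_\ell$, each with $(\ell+1)^3$ sites. The first step is the operator inequality
$$-\Delta_{\Lambda_L} \ge \sum_{j=1}^M -\Delta^\Neu_{B_j}.$$
This holds because the quadratic form of $-\Delta$ is $\frac12\sum_{x}\sum_{y\sim x} t(y-x)|u(x)-u(y)|^2$, a sum of nonnegative bond terms. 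The right-hand side keeps only the bonds with both endpoints in the same box $B_j$, and dropping the bonds that cross between boxes can only decrease the form. Here $\ell\ge R_0(t)$ ensures that the Neumann Laplacian on each box is the one defined above, with hoppings restricted to $B_j$. The interaction $U\sum_{i<j}\delta_{x_i,x_j}$ is on-site, so it splits exactly as a sum over boxes of the interactions among particles in the same box. Consequently
$$H_{N,L}\ge \bigoplus_{\text{configurations}} \sum_{j=1}^M H^{\Neu}_{n_j,B_j}.$$

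Next I decompose $L^2(\Lambda_L^N)$ according to which box each particle occupies. For a function $\psi$ of $N$ variables, let $\1_\alpha$, for $\alpha\in\{1,\dots,M\}^N$, be the projection onto the set $x_i\in B_{\alpha_i}$ for all $i$. These projections commute with the localized operator above and sum to the identity. On the range of $\1_\alpha$ the localized Hamiltonian acts as $\sum_j H^\Neu_{n_j(\alpha),B_j}$, acting on the particles labelled $\{i:\alpha_i=j\}$, where $n_j(\alpha)=\#\{i:\alpha_i=j\}$. By the fact recalled before the lemma, the ground state energy of a Neumann Hamiltonian is the same over symmetric and over unrestricted wave functions, and it is translation invariant. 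Hence on that range the operator is bounded below by $\sum_j E_0^\Neu(n_j(\alpha),\ell)$. For a normalized $\psi$ this gives
$$\la\psi,H_{N,L}\psi\ra\ge\sum_\alpha \|\1_\alpha\psi\|^2\sum_{j=1}^M E_0^\Neu(n_j(\alpha),\ell).$$

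Finally I set $c_n:=\sum_\alpha\|\1_\alpha\psi\|^2\,\#\{j:n_j(\alpha)=n\}$. Then $c_n\ge0$ and the right-hand side above equals $\sum_n c_nE_0^\Neu(n,\ell)$. Since $\sum_\alpha\|\1_\alpha\psi\|^2=1$, we get $\sum_n c_n=M=(L+1)^3/(\ell+1)^3$. Since $\sum_j n_j(\alpha)=N$ for every $\alpha$, we get $\sum_n nc_n=N$. So the lower bound lies in the admissible set, and taking the infimum over $\psi$ gives the lemma.

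I expect the main obstacle to be making the first step rigorous with periodic boundary conditions. One has to check that with $\ell\ge R_0(t)$ each bond of $\Lambda_L$ is counted exactly once and is either internal to a box or crossing, including bonds that wrap around the torus. One also has to check that the internal bonds of a box coincide with the Neumann hoppings, which uses the finite-range assumption \eqref{finite_range_assumption}. The rest is bookkeeping.
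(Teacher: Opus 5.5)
Your proposal is correct and follows essentially the same route as the paper: drop the inter-box bonds, split configuration space according to which box each particle occupies, bound each piece by $\sum_j E_0^\Neu(n_j,\ell)$ using that bosonic and unrestricted Neumann ground state energies coincide, and regroup into the coefficients $c_n$. The differences are cosmetic. You keep the weights $\|\1_\alpha\psi\|^2$ and so produce real $c_n$ directly, whereas the paper takes the minimum over occupation patterns and then relaxes integrality. Your indexing $\alpha\in\{1,\dots,M\}^N$ is also more careful than the paper's product sets $\Lambda_1^{\alpha_1}\times\dots\times\Lambda_J^{\alpha_J}$, which cover $\Lambda^N$ only up to permutation of coordinates.
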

\begin{proof}
Divide the (periodic) lattice $\Lambda_L =: \Lambda$ into $J$ smaller sub-lattices $\Lambda_j$, $j=1,\dots,J$ of side-length $\ell$, i.e. translated lattices $\Lambda_\ell$ (also note that $J = (\frac{L+1}{\ell+1})^3$). Next decompose the $N$-particle space $\Lambda^N_L$ with regard how many particles are in some box $\Lambda_j$, that is
\begin{equation}
\label{division_properties}
\begin{split}
\Lambda^N &= \bigcup_{\alpha} \Lambda^N_\alpha ,\\
\Lambda^N_\alpha &:= \Lambda_1^{\alpha_1} \times \dots \times \Lambda_J^{\alpha_J}
\end{split}
\end{equation}
where the union is taken with respect to all multi-indices $\alpha = (\alpha_1,\dots,\alpha_J)$ with $\alpha_j \in \N_0$ and $|\alpha| = \sum_{j=1}^J \alpha_J = N$, with the convention that if $\alpha_j = 0$ for some $j$ then $\Lambda_j$ is omitted in the cartesian product. The value $\alpha_j$ is the number of particles in box $\Lambda_j$. We also note that $\Lambda^N_\alpha \cap \Lambda^N_{\alpha'} = \emptyset$ for $\alpha \ne \alpha'$.

Denote by $\psi_N$ the ground state of the Hamiltonian $H_N$ on the (large) lattice $\Lambda_L$. For $x \in \Lambda^N$ and $y \in \Lambda$ we will denote $x[i \to y] = (x_1,\dots,y,\dots,x_N)$ where $y$ replaces $x_i$ on the $i$-th coordinate. We will also denote $V(x) = U\delta_{0,x}$, i.e. the on-site interaction potential. Then we have
\begin{equation}
\label{box_division_computation}
\begin{split}
&E_0(N,L) =  \sum_{i=1}^N \la \psi_N, - \Delta_i \psi_N\ra  + \frac{1}{2}\sum_{i,j=1}^N \la \psi_N, V(x_i-x_j) \psi_N\ra
\\&= \sum_{i=1}^N \frac12\sum_{x \in \Lambda^N} \sum_{y \sim x_i} t(x_i -y)|\psi_N(x[i \to y]) - \psi_N(x)|^2 + \frac{1}{2}\sum_{i,j=1}^N \sum_{x \in \Lambda^N}V(x_i - x_j)|\psi_N(x)|^2
\\&= \sum_{\alpha} \left(\sum_{i=1}^N \frac12\sum_{x \in \Lambda^N_\alpha} \sum_{y \sim x_i}t(x_i - y)|\psi_N(x[i \to y]) - \psi_N(x)|^2 + \frac{1}{2}\sum_{i,j=1}^N \sum_{x \in \Lambda^N_\alpha}V(x_i - x_j)|\psi_N(x)|^2\right)
\\&\ge \sum_{\alpha} \left(\sum_{i=1}^N \frac12\sum_{x \in \Lambda^N_\alpha} \sum_{\substack{y \sim x_i\\x[i \to y] \in \Lambda^N_\alpha}}t(x_i -y)|\psi_N(x[i \to y]) - \psi_N(x)|^2 + \frac{1}{2}\sum_{i,j=1}^N \sum_{x \in \Lambda^N_\alpha}V(x_i - x_j)|\psi_N(x)|^2\right).
\end{split}
\end{equation}
In the last inequality we simply neglected all the graph edges that connect the point $x_i$ to points lying outside the particular lattice in $\Lambda^N_\alpha$ in which point $x_i$ is included. For fixed $\alpha$ we recognize the expression under the sum as the evaluation in the state\footnote{Note that this is the moment where it is important that we do not consider only symmetric wave functions.} $\1_{\Lambda^N_\alpha}(x)\psi_N(x)$ of the expectation of $H_{N}\big|_{L^2(
\Lambda^N_\alpha)}$, i.e the $N$-body Hamiltonian restricted to the space $L^2(\Lambda^N_\alpha)$ with Laplacian being the Neumann Laplacian $\Delta^\Neu$ (see Appendix \ref{subsec:Neumann_bd}). Due to the translation invariance of the system and the fact that the different sub-lattices in $\Lambda^N_\alpha$ do not interact between one another (there is no hopping due to the Neumann Laplacian and there is no interaction due to the fact the interaction has zero range) we have a bound on the ground state energy of this system
$$ \inf_{\|\psi\|=1}\left\la \psi, H_{N}\big|_{L^2(
\Lambda^N_\alpha)}\psi\right\ra \ge \sum_{j = 1}^J E_0^\Neu(\alpha_j, \ell)$$
hence we can further bound \eqref{box_division_computation} as
$$E_0(N,L) \ge \min_\alpha \inf_{\|\psi\|=1} \left\la H_{N}\big|_{L^2(
\Lambda^N_\alpha)} \right\ra_{\psi} \sum_\alpha\|\1_{\Lambda^N_\alpha}\psi_N\| = \min_\alpha \inf_{\|\psi\|=1} \left\la H_{N}\big|_{L^2(
\Lambda^N_\alpha)} \right\ra_{\psi} \ge \min_\alpha \sum_{j = 1}^J E_0^\Neu(\alpha_j, \ell)$$
where the equality in the middle follows from \eqref{division_properties} and the fact $\psi_N$ is normalized. Next we can regroup the terms in the sum with respect to the number $c_n$ ($n = 1,\dots,N$) of boxes with exactly $n$  particles inside, which gives:
$$E_0(N,L) \ge \min\left\{\sum_{n=0}^N c_n E_0^{\Neu}(n,\ell) \colon c_n \in \N_0, \quad \sum_{n=0}^N c_n = J, \quad \sum_{n=0}^Nnc_n = N\right\},$$
where the first constraint means that there need to be exactly $J = (\frac{L+1}{\ell+1})^3$ boxes, the second constraint means the total number of particles needs to be $N$. For now coefficients $c_n$ needed to be integers, however we can extend the minimum to the infimum over real positive $c_n$'s satisfying given constraints. This extension may only lower the infimum and finishes the proof.
\end{proof}

Sometimes it is useful to express the inequality from the Lemma by the coefficients of relative number of boxes, i.e. we replace $c_n$ with $(\frac{\ell+1}{L+1})^3c_n$. Under this replacement we get the following corollary.
\begin{cor}
With the same assumption as in Lemma \ref{box_division}
\label{box_relative}
$$E_0(N,L) \ge \frac{(L+1)^3}{(\ell+1)^3}\inf \left\{\sum_{n=0}^N c_n E^{\Neu}_0(n,\ell) \colon 0 \le c_n \le 1, \; \sum_{n=0}^N c_n = 1, \; \sum_{n=0}^Nn c_n = \frac{N(\ell+1)^3}{(L+1)^3}\right\}.$$    
\end{cor}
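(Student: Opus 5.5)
This follows from Lemma \ref{box_division} by a linear change of variables in the infimum. Set $J = \frac{(L+1)^3}{(\ell+1)^3}$, which is a positive integer since $\frac{L+1}{\ell+1} \in \N$. I will show that the admissible set in Lemma \ref{box_division} is mapped bijectively onto the admissible set of the corollary by $c_n \mapsto \tilde c_n := c_n/J$. The objective then transforms linearly, and the factor $J$ comes out of the infimum.

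The key steps, in order, are as follows.

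First, take any admissible family $(c_n)_{n=0}^N$ from Lemma \ref{box_division}, so $c_n \ge 0$, $\sum_n c_n = J$ and $\sum_n n c_n = N$. Define $\tilde c_n = c_n/J$. Then $\tilde c_n \ge 0$, $\sum_n \tilde c_n = 1$ and $\sum_n n\tilde c_n = N/J = \frac{N(\ell+1)^3}{(L+1)^3}$. The upper bound $\tilde c_n \le 1$ is automatic from nonnegativity together with $\sum_n \tilde c_n = 1$, so adding it to the constraints changes nothing.

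Second, for the converse, any $(\tilde c_n)$ satisfying the constraints of the corollary gives, via $c_n = J\tilde c_n$, a family admissible in Lemma \ref{box_division}. Hence the correspondence between the two admissible sets is bijective.

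Third, since $\sum_n c_n E_0^{\Neu}(n,\ell) = J\sum_n \tilde c_n E_0^{\Neu}(n,\ell)$ and $J>0$, the infimum over the first set equals $J$ times the infimum over the second. Combining this with the inequality of Lemma \ref{box_division} gives the claim.

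There is no real obstacle here. The only point needing care is the observation that $0 \le \tilde c_n \le 1$ is implied by the other constraints, so the two constraint sets correspond exactly.
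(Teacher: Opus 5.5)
Your proposal is correct and takes the same approach as the paper, which rescales $c_n \mapsto \left(\frac{\ell+1}{L+1}\right)^3 c_n$ in Lemma~\ref{box_division} and pulls the factor $\frac{(L+1)^3}{(\ell+1)^3}$ out of the infimum. Your observation that $c_n \le 1$ follows from nonnegativity and $\sum_n c_n = 1$, so the two constraint sets correspond exactly, is a detail the paper leaves implicit.
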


\subsection{Neumann eigenvalues}

In order to use the above corollary successfully we need to understand the spectrum of the Neumann Laplacian $-\Delta^\Neu_{\Lambda_L}$ on the lattice $\Lambda_L$. In full generality it won't be possible to derive explicit formula for the eigenvalues, hence we will need to estimate them in a proper way. Before doing that we will derive the explicit form of the spectrum for a very special case of the neighborhood relation on the lattice $\Lambda$.
\begin{lemma}
\label{lemma:special_Neumann}
Assume that $t(v) = 0$ for $v \not \in D_1$ (recall the notation from \eqref{D_1_def}), i.e. the $x \sim y$ if and only if $x-y$ or $y-x$ is a primitive vector of the lattice $\Lambda$. Then the eigenvalues of the Neumann Laplacian $-\Delta^\Neu_{\Lambda_L}$ are given by
\begin{equation}
\label{Neumann_eigenvalue_special}
\epsN_\sp(k) = \sum_{i=1}^3t(a_i)\Bigg(2 - 2\cos \left(\frac{k_i\pi}{L+1}\right) \Bigg) = 4\sum_{i=1}^3t(a_i) \sin^2 \left(\frac{k_i\pi}{2(L+1)}\right),
\end{equation}
where $k = (k_1,k_2,k_3) \in \{0,1,\dots,L\}^3$.
\end{lemma}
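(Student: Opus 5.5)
Under the assumption of the lemma, the Neumann Laplacian on $\Lambda_L$ decouples along the three primitive directions. Writing a point of $\Lambda_L$ as $x = A m$ with $m=(m_1,m_2,m_3)\in\{-L/2,\dots,L/2\}^3$, the only neighbors of $x$ inside $\Lambda_L$ are $x\pm a_i$, and they belong to the box exactly when $m_i\pm1\in\{-L/2,\dots,L/2\}$. Hence, after identifying $L^2(\Lambda_L)\cong \bigotimes_{i=1}^3 \ell^2(\{-L/2,\dots,L/2\})$ via $m$, we get
$$-\Delta^\Neu_{\Lambda_L} = \sum_{i=1}^3 t(a_i)\, \Id\otimes\cdots\otimes h_L\otimes\cdots\otimes\Id,$$
with $h_L$ in the $i$-th slot. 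Here $h_L$ is the one-dimensional Neumann path Laplacian on $L+1$ sites:
$$(h_L u)(m)=\sum_{\substack{m'=m\pm1\\ |m'|\le L/2}}\big(u(m)-u(m')\big).$$
The coordinates of $A^{-1}x$ are what enter, so the geometry of $A$ plays no role. The spectrum of a sum of commuting tensor-product operators consists of the sums of eigenvalues, with the tensor products of one-dimensional eigenvectors forming an orthonormal eigenbasis of size $(L+1)^3$. It therefore suffices to show that $h_L$ has the $L+1$ eigenvalues $2-2\cos\big(\tfrac{k\pi}{L+1}\big)$, $k=0,\dots,L$.

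For the one-dimensional problem I relabel the sites as $n=m+L/2\in\{0,\dots,L\}$ and use the cosine ansatz
$$u_k(n)=\cos\Big(\frac{k\pi(n+\tfrac12)}{L+1}\Big).$$
At interior sites $1\le n\le L-1$, the identity $\cos(\theta(n+\tfrac12-1))+\cos(\theta(n+\tfrac12+1))=2\cos\theta\,\cos(\theta(n+\tfrac12))$ with $\theta=k\pi/(L+1)$ gives $h_L u_k=(2-2\cos\theta)u_k$. At the boundary site $n=0$ only one neighbor is present, so $(h_Lu_k)(0)=u_k(0)-u_k(1)$. This equals $(2-2\cos\theta)u_k(0)$ precisely when the ghost value satisfies $u_k(-1)=u_k(0)$. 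That reflection condition holds because $u_k$ is symmetric about $n=-\tfrac12$. The condition at $n=L$, namely $u_k(L+1)=u_k(L)$, holds because $u_k$ is symmetric about $n=L+\tfrac12$, which follows from $\theta(L+1)=k\pi$.

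The main point needing care is completeness. For $k=0,\dots,L$ the values $2-2\cos\theta$ are pairwise distinct, since $\theta\in[0,\pi)$ and cosine is injective there. The $u_k$ are also nonzero: $u_k(0)=\cos\big(\tfrac{k\pi}{2(L+1)}\big)\neq0$ because $k<L+1$. So we have $L+1$ eigenvectors of the symmetric $(L+1)\times(L+1)$ matrix $h_L$ with distinct eigenvalues, and these exhaust the spectrum. Tensorizing then gives formula \eqref{Neumann_eigenvalue_special}, and the half-angle identity $2-2\cos\theta=4\sin^2(\theta/2)$ yields the second form.
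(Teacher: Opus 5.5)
Your proposal is correct and follows essentially the same route as the paper: you decompose $-\Delta^\Neu_{\Lambda_L}$ into a tensor sum of one-dimensional Neumann path Laplacians, take the cosine eigenfunctions $\cos\bigl(k\pi(n+\tfrac12)/(L+1)\bigr)$ with the reflection (ghost-point) boundary conditions, get completeness from the $L+1$ distinct eigenvalues, and tensorize. Your explicit checks of the two reflection symmetries and of $u_k \neq 0$ are slightly more careful than the paper's, which instead appeals to Remark~\ref{Laplacian_rem} and the normalization constants.
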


\begin{proof}
At first we will consider the Neumann Laplacian on the one-dimensional interval $\Omega := \left[0, L\right] \cap \Z$. Our goal is to find functions $u \in L^2(\Omega)$ such that
\begin{equation}
\label{Neumann_problem}
-\Delta_\Omega^{\Neu} u = \lambda u
\end{equation}
for some $\lambda \in \R$ (real as this operator is self-adjoint). In fact, as noted in Remark \ref{Laplacian_rem}, to solve this problem it is enough to consider the equation
$$-\Delta u(x) = \lambda u(x) \text{ for } x = 0,1,\dots,L$$
on the extended lattice $[-1,L+1] \cap \Z$ with additional (boundary) conditions
\begin{equation}
\label{Neumann_cond_1d}
u(-1) = u(0), \quad u(L) = u(L+1),
\end{equation}
where $\Delta$ is the standard lattice Laplacian defined in \eqref{Laplacian_def}. This problem has a well-known solution. One obtains a family $\{u_k\}_{k=0,\dots,L}$ of $L+1$ functions satisfying Neumann eigenvalue problem \eqref{Neumann_problem}
$$u_k(x) = A_k \cos \frac{k\pi(x+\frac12)}{L+1}, \quad A_k \text{ - normalization constant}$$
corresponding to eigenvalues
\begin{equation}
\label{Neuman_eigenvalue_1D}
\lambda_k = 2\left(1 - \cos \frac{k\pi}{L+1}\right) = 4 \sin^2 \left(\frac{k\pi}{2(L+1)}\right).
\end{equation}
The normalization constants of $u_k$ can be computed explicitly: for $k = 0$ we have $A_0 = (L+1)^{-1/2}$ and for $k \ne 0$ we have $A_k = \left(\frac{2}{L+1}\right)^{1/2}$.

Due to the fact that the Neumann Laplacian is self-adjoint, functions $u_k$ are orthogonal to each other as each of them corresponds to a different eigenvalue. Moreover, as their number is equal to the dimension of the space $L^2(\Omega)$, this system is in fact an orthogonal basis. 

Going back to the interval $[-\frac{L}{2},\frac{L}{2}] \cap \Z$, due to the translation invariance, shifted functions
$$w_k := u_k\left(\cdot + \frac{L}{2}\right)$$
are the orthogonal eigenfunctions of the Neumann Laplacian on this domain. In particular eigenvalues \eqref{Neuman_eigenvalue_1D} remain unchanged.

Now we return to the problem of finding eigenvalues of the Neumann Laplacian on $\Lambda_L$. Motivated by the previous results we make an ansatz: for $x = m_1 a_1 + m_2a_2 + m_3a_3 \in \Lambda_L$ and for $k = (k_1, k_2, k_3)$, $k_j = 0,1,\dots, L$ we define
$$\psi_k(x)= w_{k_1}(m_1)w_{k_2}(m_2)w_{k_3}(m_3).$$
Using the assumption that the hopping is allowed only in the directions of the primitive vectors of lattice $\Lambda$ function $\psi_k$ satisfies the Neumann boundary condition on the nearest neighbor closure $(\Lambda_L)_\nn$ (in the sense of equations \eqref{Neumann_cond_1d}), hence to compute $-\Delta^{\Neu}\psi_k(x)$ for $x \in \Lambda_L$ we can use the standard graph Laplacian $-\Delta$ as in \eqref{Laplacian_def} in the interior of the extended lattice $(\Lambda_L)_{\nn}$ (here we once again refer to Remark \ref{Laplacian_rem}). Moreover, once again using the assumption on weights, the graphs $\Lambda_L$ and  $(\Lambda_L)_{\nn}$ have a structure of graph cartesian product of graphs coming from one-dimensional discrete intervals, hence $L^2(\Lambda_L)$ is isomorphic to the tensor product of $L^2$ spaces on those intervals. This in particular implies that the system of tensor products of the basis vectors of one-dimensional spaces, which is exactly the system $(\psi_k)_{k}$, forms an orthogonal basis of $L^2(\Lambda_L)$.
 
To compute the eigenvalues explicitly we note from the part concerning the one dimensional intervals that the functions $w_{k_i}$ satisfy
\begin{equation}
\label{1dim_cor}
w_{k_i}(m+1) + w_{k_i}(m-1) = 2\cos \left(\frac{k_i\pi}{L+1}\right)w_{k}(m),
\end{equation}
therefore we have
\begin{equation}
\label{direction_sum}
\begin{split}
-\Delta \psi_k(x) &=  \left(\sum_{i=1}^3 2t(a_i) \left(1 - \cos \left(\frac{k_i\pi}{L+1}\right)\right) \right)\psi_k(x)
\\&= \left(4\sum_{i=1}^3 t(a_i)\sin^2 \left(\frac{k_i \pi}{2(L+1)}\right)\right) \psi_k(x):= \eps^\Neu_\sp(k) \psi_k(x).
\end{split}
\end{equation}
This ends the proof.
\end{proof}

Observe the similarity of the expression $\epsN_\sp(k)$ to the dispersion relation \eqref{def:eps} or its version \eqref{def:eps_finite} for a finite lattice. Recalling that $\pi \delta_{i,j} = \frac12 a_i \cdot b_j$ we can write
\begin{equation}
\label{Neu_per_relation}
\epsN_\sp(k) = \eps\left(k_1\frac{b_1}{2(L+1)} + k_2\frac{b_2}{2(L+1)} + k_3\frac{b_3}{2(L+1)}\right) = \eps\left(\frac{1}{2(L+1)}Bk\right).
\end{equation}

For a more general neighborhood relation in $\Lambda$ the approach used in the proof of Lemma \ref{lemma:special_Neumann} will be unsuccessful as in general the graph $\Lambda_L$ with edges corresponding to the Neumann Laplacian will not have a structure of a graph product of one-dimensional graphs. However we can still show that in the general case the Neumann eigenvalues are in some sense close to the periodic eigenvalues and derive some bounds. The following Lemma will give the precise statement of those observations and will be used in the next section to prove an appropriate bound on the Hamiltonian.
\begin{lemma}
\label{Laplace_comparison}
For fixed $L \in 2\N$ denote $-\Delta^\Neu$ as the Neumann Laplacian on $\Lambda_L$, $-\Delta^\Per$ as the periodic Laplacian and $-\Delta^\Neu_\sp$ as the Neumann Laplacian considered in Lemma \ref{lemma:special_Neumann}, that is with hopping only in the direction of the primitive translation vectors. Then the following statements hold true:
\begin{enumerate}[a)]
    \item $-\Delta^\Neu_\sp \le -\Delta^\Neu \le -\Delta^\Per$ in the sense of operator inequalities on $L^2(\Lambda_L)$.
    \item The spectral gap $\epsG(L)$ of $-\Delta^\Neu$ (i.e. the difference between the lowest, here zero, eigenvalue and the second lowest eigenvalue) can be bounded as
    \begin{equation}
    \label{epsG_asymptotics}
    \frac{c_\gap}{(L+1)^2} \le \epsG(L) \le \frac{C_\gap}{(L+1)^2},   
    \end{equation}
    where $c_\gap$ and $C_\gap$ are some constants independent of $L$.
    \item Denote by $P_+$ the projection onto $\{\chi_0\}^\perp$, i.e the space orthogonal to the space spanned by the constant function in $L^2(\Lambda_L)$. Then for sufficiently large $L$
    $$\frac{1}{|\Lambda_L|}|\Tr \left[(-P_+\Delta^\Neu P_+)^{-1}\right] - \Tr\left[(-P_+\Delta^\Per P_+)^{-1}\right]| \le C L^{-1/3}$$
    for some constant $C$ independent of $L$. The inverses of the operators are considered on the subspace $P_+L^2(\Lambda_L) = \{\chi_0\}^\perp$.
    
\end{enumerate}
\end{lemma}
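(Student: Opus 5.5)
The plan is to prove the three parts in order, with (a) providing the comparisons used in (b) and (c). Part (c) is the main obstacle.

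\emph{Part (a).} I will compare quadratic forms. For $u\in L^2(\Lambda_L)$,
$$\la u,-\Delta^{\Neu}u\ra=\tfrac12\sum_{x\in\Lambda_L}\sum_{y\sim x,\,y\in\Lambda_L}t(y-x)|u(x)-u(y)|^2 .$$
The periodic form is the same sum, extended to all edges taken modulo the period; here $L\ge R_0(t)$ guarantees that distinct edges stay distinct. The special Neumann form keeps only the non-wrapping edges with $y-x\in\pm D_1$. So the edge sets are nested and every term is nonnegative, which gives the operator inequalities immediately.

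\emph{Part (b).} All three operators are Laplacians of connected weighted graphs, so their kernel is exactly $\operatorname{span}\{\chi_0\}$. By min-max on $\{\chi_0\}^\perp$, part (a) yields
$$\epsilon^{\Neu}_{\gap,\sp}(L)\le\epsG(L)\le\epsilon^{\Per}_{\gap}(L).$$
For the lower bound, Lemma \ref{lemma:special_Neumann} gives the left side as $\min_i 4t(a_i)\sin^2\!\big(\tfrac{\pi}{2(L+1)}\big)\ge c_\gap (L+1)^{-2}$. For the upper bound, the right side is $\min_{p\in\Lh_L\setminus\{0\}}\eps(p)\le\eps\big(b_1/(L+1)\big)\le C(L+1)^{-2}$, using $\eps(p)\le C|p|^2$.

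\emph{Part (c).} I will write $A=-P_+\Delta^{\Neu}P_+$ and $A'=-P_+\Delta^{\Per}P_+$ on $\{\chi_0\}^\perp$, and fix a threshold $\eta>0$. I split $1/x=g_\eta(x)+h_\eta(x)$, where $g_\eta(x)=\min(1/x,1/\eta)$ and $h_\eta(x)=(1/x-1/\eta)_+$. Then I treat the two pieces separately.

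\emph{Low-energy piece $h_\eta$.} By (a) and min-max, the $j$-th eigenvalue of $A$ and of $A'$ is at least the $j$-th eigenvalue of the special operator, and $h_\eta$ is decreasing. Hence both $\Tr h_\eta(A)$ and $\Tr h_\eta(A')$ are at most $\sum 1/\epsN_\sp(k)$, summed over $k\neq0$ with $\epsN_\sp(k)\le\eta$. Using \eqref{Neu_per_relation} and $\epsN_\sp(k)\ge c|k|^2/(L+1)^2$, this sum is at most
$$C L^2\sum_{0<|k|\le CL\sqrt\eta}|k|^{-2}\le C L^3\sqrt\eta .$$

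\emph{High-energy piece $g_\eta$.} The difference $-\Delta^{\Per}+\Delta^{\Neu}$ is a sum over the wrap-around edges only, which lie within distance $R_0(t)$ of the boundary. Its rank is therefore at most $CL^2$, and after compressing by $P_+$ the rank is still $O(L^2)$. For a monotone function $g$ bounded by $\|g\|_\infty$, interlacing for rank-$r$ perturbations gives $|\Tr g(A)-\Tr g(A')|\le r\|g\|_\infty$. With $g=g_\eta$ this is at most $CL^2/\eta$.

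Combining, the trace difference is $O(L^3\sqrt\eta+L^2/\eta)$. Choosing $\eta=L^{-2/3}$ gives $O(L^{3-1/3})$, and dividing by $|\Lambda_L|=(L+1)^3$ yields $CL^{-1/3}$.

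The main technical care goes into two points: the interlacing/rank argument on $\{\chi_0\}^\perp$, and the lattice-point count for the special Neumann spectrum.
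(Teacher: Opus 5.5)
Your proposal is correct and follows the paper's proof: (a) by nesting the edge sets of the three quadratic forms, (b) by min-max combined with the explicit special-Neumann and periodic spectra, and (c) by splitting at a threshold and optimizing it at $L^{-2/3}$. Your decomposition $1/x=\min(1/x,1/\eta)+(1/x-1/\eta)_+$ is exactly the paper's Cavalieri split $\int_\delta^\infty+\int_0^\delta$ of $\int_0^\infty N(s)s^{-2}\,ds$ with $\delta=\eta$. The high part is then handled by the same rank-$O(L^2)$ interlacing bound, and the low part by the same domination by the special Neumann spectrum, which gives $O(L^3\sqrt{\eta})$.
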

\begin{proof} To prove  \textit{a)}, we recall that if $u \in L^2(\Lambda_L)$ then
$$\la u, -\Delta^\Neu u\ra = Q^\Neu(u) = \frac12\sum_{x \in \Lambda_L} \sum_{\substack{y \sim x\\y \in \Lambda_L}} t(y-x)|u(x) - u(y)|^2,$$
therefore
\begin{equation*}
\begin{split}
\frac12\sum_{x \in \Lambda_L} \sum_{\substack{y \in \Lambda_L\\y - x \in \pm D_1}} t(y-x)|u(x) - u(y)|^2 = \la u, -\Delta^\Neu_\sp u\ra \le \la u, -\Delta^\Neu u\ra
\end{split}
\end{equation*}
and
\begin{equation*}
\la \psi, -\Delta^\Neu \psi\ra \le \la \psi ,-\Delta^\Per \psi\ra = \frac12\sum_{x \in \Lambda_L} \sum_{\substack{y \in \Lambda_L\\(y - x)_\Per \in \pm D}} t(y-x)|\psi(x) - \psi(y)|^2,
\end{equation*}
where $(y-x)_\Per$ is the difference $y-x$ interpreted as an element of the group $(A\Z^3)/((L+1)A\Z^3)$, i.e. hopping through the boundary is included as wrapping to the other side of the lattice. Those inequalities prove the first point of the lemma.

To prove \textit{b)} we note that by the previous point and the min-max principle (see e.g \cite[Theorem 4.2.6]{Horn} we have
$$\lambda_2(-\Delta^\Neu_\sp) \le \lambda_2(-\Delta^\Neu) \le \lambda_2(-\Delta^{\Per}),$$
i.e. the spectral gap of $-\Delta^\Neu$ is bounded below by the spectral gap of $-\Delta^\Neu_\sp$ and bounded above by the spectral gap of $-\Delta_\Per$. Using explicit formulas \eqref{Neumann_eigenvalue_special} and \eqref{def:eps_finite} for eigenvalues of those operators as well as a similar argument that was used to prove inequality \eqref{eps_bound} the desired bounds follows easily.

In order to prove \textit{c)} we first recall Cavalieri's principle: for measurable space $(\Omega,\mu)$ and non-negative function $f$ on $\Omega$ we have
$$\int_\Omega f(x) d\mu(x) = \int_0^\infty \mu(\{x \in \Omega\colon f(x) > s\})ds.$$
Here we will denote by $\lambda_j(T)$, $j=1,2\dots$, the eigenvalues of $T$ arranged in a non-decreasing order and use this principle for the eigenvalue counting measure of some positive-definite matrix $T$ and function $f(x) = 1/x$, that is
$$\Tr T^{-1} = \int_0^\infty \#\left\{j \colon \frac1{\lambda_j(T)} > s\right\}ds = \int_0^\infty \frac{N_T(s)}{s^2}ds,$$
where
$$N_T(t) := \#\{j \colon \lambda_j(T)< t\} = \max \{j \colon \lambda_j(T) < t\}$$
is the spectral function counting the eigenvalues (with multiplicity and the convention that $\max \emptyset = 0)$. Denoting by $N_\Neu(t)$ and $N_\Per(t)$ the spectral functions of operators $(-P_+\Delta^\Neu P_+)$ and $(-P_+\Delta^\Per P_+)$ respectively we obtain
\begin{equation}
\label{Cavalieri}
\begin{split}
&|\Tr \left[(-P_+\Delta^\Neu P_+)^{-1}\right] - \Tr\left[(-P_+\Delta^\Per P_+)^{-1}\right]| \le \int_0^\infty \frac{|N_\Neu(s) -N_\Per(s)|}{s^2}ds \\&= \int_0^\delta \frac{|N_\Neu(s) -N_\Per(s)|}{s^2}ds + \int_\delta^\infty \frac{|N_\Neu(s) -N_\Per(s)|}{s^2}ds,   
\end{split}
\end{equation}
where $\delta = \delta(L)$ will be chosen later. We will bound each of those integrals separately.

To bound the second integral we will use the following fact \cite[Corollary 4.3.5]{Horn}: if $A$ and $B$ are Hermitian $n \times n$ matrices with $\rank (A - B) \le r$ then
$$\lambda_j(B) \le \lambda_{j+r}(A) \text{ for } j = 1,\dots, n-r$$
and
$$\lambda_j(B) \ge \lambda_{j-r}(A) \text{ for } j = r+1 ,\dots, n.$$
From this fact we can deduce the following bound on the difference of spectral functions of $A$ and $B$: for every $s > 0$
$$|N_A(s) - N_B(s)| \le r.$$
To see it we denote $N_A(s) = k$. If $k \le r$ then trivially
$$N_A(s) - N_B(s) \le r,$$
whereas if $k > r$ then
$$s\geq \lambda_k(A) \ge \lambda_{k-r}(B) \Rightarrow N_B(s) \ge k - r = N_A(s) - r.$$
To obtain the second inequality we reverse the roles of $A$ and $B$.

Applying this fact to $N_\Neu(s)$ and $N_\Per(s)$ we get
$$|N_\Neu(s) -N_\Per(s)| \le \rank (\Delta^\Neu - \Delta^\Per) \le CL^2$$
for some constant $C$ independent of $L$ as the difference $\Delta^\Neu - \Delta^\Per$ acts non-trivially only on the boundary $\partial \Lambda_L$ of $\Lambda_L$ and its nearest neighbors, which is the set of cardinality of order $L^2$. It follows that
$$\int_\delta^\infty \frac{|N_\Neu(s) -N_\Per(s)|}{s^2}ds \le \frac{CL^2}{\delta}.$$
To bound the first integral in \eqref{Cavalieri} we note that by point $(a)$ we have
$$|N_\Neu(s) - N_\Per(s)| \le N_\Neu(s) - N_\Per(s) \le N_\sp(s),$$
where $N_\sp(s)$ is the spectral function of $-P_+\Delta^\Neu_\sp P_+$. Moreover by point $(b)$ we have
$$N_\sp(s) = 0 \text{ for } s < \frac{c_\gap}{L^2},$$
hence we can only consider $s$ satisfying
\begin{equation}
\label{s_gap}
s \ge \frac{c_\gap}{L^2},
\end{equation}
which also implicitly imposes a condition $\delta \ge \frac{c_\gap}{L^2}$. Recall the explicit formula for the eigenvalues $\eps^\Neu_\sp(k)$ of the operator $-\Delta^\Neu_\sp$ given in \eqref{direction_sum}:
$$\eps^\Neu_\sp(k) = \left(4 \sum_{i=1}^3 t(a_i)\sin^2 \frac{p_i}{2}\right) \Bigg|_{p = \pi k/(L+1)}, \; k \in \{0,1\dots,L\}^3.$$
Let  $|\cdot|_\infty$ be the supremum norm of a vector in $\R^3$. By equivalence of norms in $\R^3$ and the same argument as the one used to prove inequality \eqref{eps_bound} there exist a constant $c > 0$ such that
$$4 \sum_{i=1}^3 t(a_i)\sin^2 \frac{p_i}{2} \ge c|p|^2_{\infty},$$
for all $p \in \Lh$. We further observe that
\begin{align*}
N_\sp(s) &= \# \left\{k \in \{0,1\dots,L\}^3\setminus\{(0,0,0)\} \colon 4 \sum_{i=1}^3 t(a_i)\sin^2 \frac{k_i\pi}{2(L+1)} < s\right\}
\\&\le \# \left\{k \in \{0,1\dots,L\}^3 \colon \frac{c\pi^2}{(L+1)^2}|k|^2_\infty < s\right\}    
\\&= \# \left\{k \in \{0,1\dots,L\}^3 \colon |k|_\infty < \frac{(L+1)}{\pi}\sqrt{\frac{s}{c}}\right\}
\\&\le C(Ls^{1/2} + 1)^3 = C s^{3/2}\left(L + \frac{1}{s^{1/2}}\right)^3 \le CL^3 s^{3/2},
\end{align*}
where the last inequality follows from \eqref{s_gap}. As a final result we get
$$\int_0^\delta \frac{|N_\Neu(s) - N_\Per(s)|}{s^2}ds \le \int_{c_\gap/L^2}^\delta \frac{N_\sp(s)}{s^2} \le CL^3 \int_{c_\gap/L^2}^\delta s^{-1/2}ds \le CL^3\delta^{1/2}.$$
Combining this with the previous estimate we eventually obtain
$$\frac{1}{|\Lambda_L|}|\Tr \left[(-P_+\Delta^\Neu P_+)^{-1}\right] - \Tr\left[(-P_+\Delta^\Per P_+)^{-1}\right]| \le C\left(\delta^{1/2} + \frac{1}{L\delta}\right),$$
which after optimizing in $\delta$ yields $\delta \sim L^{-2/3}$ and
$$\frac{1}{|\Lambda_L|}|\Tr \left[(-P_+\Delta^\Neu P_+)^{-1}\right] - \Tr\left[(-P_+\Delta^\Per P_+)^{-1}\right]| \le CL^{-1/3}.$$
This bound is valid for sufficiently large $L$ when $\delta \ge \frac{c_\gap}{L^2}$.
\end{proof}

The method used in the proof of point $(c)$ above can be used to obtain the following useful corollary.
\begin{cor}
\label{trace_power}
Within the setting like in Lemma \ref{Laplace_comparison} for any power $\nu > \frac32$ and for sufficiently large $L$ we have the bound
$$\frac{1}{|\Lambda_L|}\Tr (-P_+\Delta^\Neu P_+)^\nu \le C L^{2 - 3/\nu}$$
for some constant $C$ independent of $L$.
\end{cor}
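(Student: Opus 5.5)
The plan follows the method of point (c) of Lemma \ref{Laplace_comparison}: write the trace as an integral over the eigenvalue counting function $N_\Neu(s)$ of $T:=-P_+\Delta^\Neu P_+$, and control $N_\Neu$ by the explicitly known special Neumann spectrum. Since the exponent $2-3/\nu$ is positive for $\nu>\frac32$, it also helps to record first a crude bound that fixes the scale of the left-hand side.

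\emph{Step 1 (operator norm).} By \eqref{def:discrete_laplacian} and Schur's test, $\|-\Delta^\Neu\|\le 4\sum_{v\in D}t(v)=:M$, which is finite by \eqref{finite_range_assumption}. Consequently $\frac1{|\Lambda_L|}\Tr T^\nu\le M^\nu$. For $L\ge 1$ we have $L^{2-3/\nu}\ge 1$, so the stated inequality already follows with $C=M^\nu$. This is the literal statement as written, and it is the argument I would put down first.

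\emph{Step 2 (Cavalieri, as in (c)).} To match the method of point (c) and to control how the eigenvalues near the gap contribute, write the trace through the layer-cake formula:
$$\Tr T^{\nu}=\int_0^{M}\nu s^{\nu-1}\,\#\{j\colon\lambda_j(T)>s\}\,ds.$$
Compare $\#\{\lambda_j(T)\le s\}\ge N_\Neu(s)$ with $N_\sp(s)$ via Lemma \ref{Laplace_comparison}(a); this gives $N_\Neu\le N_\sp$. Use the bound $N_\sp(s)\le CL^3s^{3/2}$ established in the proof of (c), together with the gap from (b), which says $N_\sp(s)=0$ for $s<c_\gap/L^2$. Cut the integral at a threshold $\delta$. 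Below $\delta$, use $\#\{\lambda_j>s\}\le|\Lambda_L|$. Above $\delta$, use the counting bound together with the trivial bound $\lambda_j\le M$. Then optimise $\delta$ against a power of $L$, exactly as $\delta\sim L^{-2/3}$ was chosen in (c).

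\emph{Main obstacle.} The only delicate point is identifying which regime actually produces the $L^{2-3/\nu}$ scaling. For positive $\nu$ every regime is dominated by $|\Lambda_L|M^\nu$. If the intended object is instead an inverse power $T^{-\nu}$, the low-energy region near the gap dominates: the integral $\int_{c_\gap/L^2}\nu s^{-\nu-1}CL^3s^{3/2}\,ds$ diverges at the lower end precisely when $\nu>\frac32$. This is where the hypothesis $\nu>\frac32$ enters, and this integral is what needs careful bookkeeping using (b). In either reading, the ingredients are exactly parts (a), (b) and the counting estimate from the proof of (c).
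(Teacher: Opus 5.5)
Your Step~1 is a complete and correct proof of the statement \emph{as printed}. Indeed, $T:=-P_+\Delta^{\Neu}P_+$ has spectrum in $[0,M]$ with $M=4\sum_{v\in D}t(v)$, so $|\Lambda_L|^{-1}\Tr T^\nu\le M^\nu\le M^\nu L^{2-3/\nu}$ for $L\ge1$. This route is different from the paper's and trivial, because the printed statement is evidently a typo for the inverse power. The paper's proof starts from $\Tr T^{-\nu}=\int_0^\infty N(s^{1/\nu})\,s^{-2}\,ds$, which is the layer-cake formula for $T^{-\nu}$. The corollary is used in Step~3 of Proposition~\ref{H_N_lower_bound} only to bound $(\ell+1)^{-3}\sum_{k\ne0}\epsN(k)^{-2}$ and $(\ell+1)^{-3}\sum_{k\ne0}\epsN(k)^{-3}$, i.e.\ $\nu=2,3$. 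So Step~1 proves something the paper never needs. Your Step~2 adds nothing for positive powers: the comparison $N_{\Neu}\le N_{\sp}$ from (a) bounds $\Tr T^{\nu}$ from below, not from above.

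For the intended inverse-power statement your proposal stops short of a proof. Your remark that ``in either reading'' parts (a), (b) and the counting estimate suffice is wrong, because for $T^{-\nu}$ the claimed rate is false.
\begin{itemize}
\item By the upper bound in (b), $\Tr T^{-\nu}\ge\epsG(L)^{-\nu}\ge C_{\gap}^{-\nu}(L+1)^{2\nu}$, so $|\Lambda_L|^{-1}\Tr T^{-\nu}\ge c(L+1)^{2\nu-3}$.
\item Since $2\nu-3-(2-3/\nu)=(2\nu-3)(\nu-1)/\nu>0$ for $\nu>\frac32$, this contradicts $CL^{2-3/\nu}$ for large $L$.
\end{itemize}
Your own integral, once evaluated, gives the correct answer. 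Since $N_{\sp}(s)=0$ for $s<c_{\gap}/(L+1)^2$ and $N_{\sp}(s)\le CL^3s^{3/2}$ above that threshold, you get $\Tr T^{-\nu}\le\int_{c_{\gap}/(L+1)^2}^\infty \nu s^{-\nu-1}\,CL^3s^{3/2}\,ds\le C(L+1)^{2\nu}$. Hence $|\Lambda_L|^{-1}\Tr T^{-\nu}\le CL^{2\nu-3}$, which is sharp. A more direct route: by (a) and min-max, $\Tr T^{-\nu}\le\sum_{k\ne0}\epsN_{\sp}(k)^{-\nu}$, where $\epsN_{\sp}(k)\ge 4t_{\min}|k|^2/(L+1)^2$ and $\sum_{k\in\Z^3\setminus\{0\}}|k|^{-2\nu}<\infty$ for $\nu>\frac32$.

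The paper's proof reaches $L^{2-3/\nu}$ only through a cut-off error. Its integrand $N_{\sp}(s^{1/\nu})/s^2$ vanishes for all $s<(c_{\gap}/L^2)^{\nu}$, but the integral is started at $s=c_{\gap}/L^2$. With the correct cut-off the same computation gives $L^{2\nu-3}$.

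The downstream damage is harmless. With $\ell$ and $\ell^3$ in place of $\ell^{1/2}$ and $\ell$, both error terms in Step~3 of Proposition~\ref{H_N_lower_bound} become $O(n^2/\ell^4)$, still negligible against $n^2/\ell^{10/3}$. What you should write down is the inverse-power bound with exponent $2\nu-3$, not Step~1.
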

\begin{proof}
Mimicking the previous proof we get
\begin{align*}
\Tr (-P_+\Delta^\Neu P_+)^\nu &= \int_{c_\gap/L^2}^\delta \frac{N_\sp(s^{1/\nu})}{s^2}ds + \int_{\delta}^\infty \frac{N_\sp(s^{1/\nu})}{s^2}ds
\\&\le CL^3 \int_{c_\gap/L^2}^\delta \frac{s^{3/2\nu}}{s^2}ds + \int_{\delta}^\infty \frac{(L+1)^3}{s^2}ds
\\&\le CL^3 \cdot (L^{-2})^{-1 + 3/2\nu} + CL^3
\\&\le C L^3 \cdot L^{2 - 3/\nu}.
\end{align*}
\end{proof}

\subsection{Analysis on sub-lattices}

In this section we will work with the eigenbasis of the Neumann Laplacian $-\Delta^\Neu_{\Lambda_\ell}$ for the box of size $\ell$, denoted as $\{\psi_k\}_{k \in \{0,1\dots,\ell\}^3}$. Such basis exists as this is a self-adjoint operator on a finite dimensional space. Moreover it is easy to check that the constant function, here denoted as $\psi_0$, is an eigenvector with eigenvalue zero. Furthermore, as the hopping constants defining $\Delta^\Neu_{\Lambda_L}$ are real, one has $\overline{\Delta^\Neu_{\Lambda_\ell} u(x)} = \Delta^\Neu_{\Lambda_\ell} \overline{u(x)}$ (in other words this operator is a complexification of a symmetric operator over the real vector space $L^2_\R(\Lambda_\ell))$) and therefore the eigenfunctions $\psi_k$ can be chosen as real-valued. Our goal is to prove the following proposition.
\begin{prop}
\label{H_N_lower_bound}
Assume that\begin{equation}
\label{N/L_bound}
\frac{n}{\ell+1} < \frac{c_\gap}{48\pi \a},   
\end{equation}
where $c_\gap$ is defined below \eqref{epsG_asymptotics}. Then the following operator inequality holds:
$$H_{n,\ell} \ge 4\pi \a \frac{n^2}{|\Lambda_\ell|} - C\left(\frac{n^2}{\ell^4}\log \ell\right) - C\left(\frac{n}{\ell^3}\right),$$
for some constant $C$ independent of $n$ and $\ell$. In particular
$$E_0^\Neu(n,\ell) \ge 4\pi \a \frac{n^2}{|\Lambda_\ell|} - C\left(\frac{n^2}{\ell^{10/3}}\right) - C\left(\frac{n}{\ell^3}\right).$$
\end{prop}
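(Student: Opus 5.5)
I will work on the Fock space over $L^2(\Lambda_\ell)$ in the real Neumann eigenbasis $\{\psi_k\}$, writing $a_k=a(\psi_k)$ and $P_+$ for the projection onto $\{\psi_0\}^\perp$. The plan follows \cite{OptimalRate,ChongLiangNam}: use the zero-range interaction to reduce $H_{n,\ell}$ to a quadratic Hamiltonian in the excitations, and then bound that quadratic form from below by completing the square with the lattice scattering solution. This is what produces the scattering length $\a$ in place of the bare coupling $U/2$.

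\textbf{Step 1: reduction to a quadratic form.} I decompose $a_x=|\Lambda_\ell|^{-1/2}a_0+b_x$ with $b_x=\sum_{k\ne0}\psi_k(x)a_k$. I expand $\frac U2\sum_x a_x^*a_x^*a_xa_x$ into terms with four, three, two, one and zero excitation operators. Since the four-excitation term $\frac U2\sum_x b_x^*b_x^*b_xb_x$ is nonnegative, I drop it. I control the cubic terms by Cauchy--Schwarz against this quartic term and the kinetic energy, using $\NN_+=\sum_{k\neq0}a_k^*a_k\le \epsG(\ell)^{-1}\sum_k\eps^\Neu(k)a_k^*a_k$ from Lemma~\ref{Laplace_comparison}(b). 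Replacing $a_0^*a_0$ by $n-\NN_+$ then gives a lower bound by
$$\frac{Un^2}{2|\Lambda_\ell|}+\sum_{k\ne0}\eps^\Neu(k)a_k^*a_k+\frac{U n}{2|\Lambda_\ell|}\sum_{k,m\ne0}\Big(\la\psi_k^2\psi_m\ra\,(a_k^*a_m^*+h.c.)+\dots\Big),$$
up to errors of order $\frac{n}{\ell^3}(1+\NN_+)$. The hypothesis \eqref{N/L_bound}, namely $n/(\ell+1)<c_\gap/(48\pi\a)$, is what lets the positive $\frac{n}{\ell^3}\NN_+$-type errors be absorbed into half of the kinetic energy, since $\epsG\ge c_\gap/\ell^2$.

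\textbf{Step 2: completing the square.} For the quadratic Bogoliubov form with a rank-one-type pairing $\sum_{k,m}\phi_k\phi_m a_k^*a_m^*$, I use the standard operator bound
$$\sum_k \eps_k a_k^*a_k+\tfrac12\sum(B_{km}a_k^*a_m^*+h.c.)\ge-\tfrac12\Tr\big(\eps^{-1}B^2\big)\text{-type terms},$$
which I realise by pairing against $(-P_+\Delta^\Neu P_+)^{-1}$. This produces the second-order correction
$$-\frac{U^2 n^2}{2|\Lambda_\ell|^2}\cdot\tfrac12\Tr\big[(-P_+\Delta^\Neu P_+)^{-1}\big]+\text{error},$$
with the error involving $\Tr(-P_+\Delta^\Neu P_+)^{-\nu}$-type sums, which Corollary~\ref{trace_power} controls. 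Doing this at second order only would give $\frac U2-\frac{U^2}{2}\gamma$ rather than $\frac{U}{2(1+U\gamma)}$. To get the full resummation I will instead conjugate, or complete the square, with the kernel of the scattering solution $w$ (Appendix~\ref{sec:Scattering}), whose Fourier transform is $\widehat w(p)=U(1-w(0))/(2\eps(p))$. This makes the constant become $\frac{U}{2}(1-w(0))^2+\dots$, which equals $4\pi\a$ by the identities used in Section~\ref{sec:TrialState}.

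\textbf{Step 3: from Neumann to periodic traces.} Everything so far is expressed through the Neumann resolvent. The normalized trace $|\Lambda_\ell|^{-1}\Tr[(-P_+\Delta^\Neu P_+)^{-1}]$ must be compared with its periodic counterpart, which is a Riemann sum converging to $2\gamma$ with error $O(\ell^{-1}\log\ell)$. This comparison is where the $\frac{n^2}{\ell^4}\log\ell$ term comes from. For the ``in particular'' statement I replace the Neumann trace by the periodic one via Lemma~\ref{Laplace_comparison}(c), at a cost of $\frac{n^2}{\ell^3}\cdot\ell^{-1/3}=\frac{n^2}{\ell^{10/3}}$, which dominates the logarithmic term.

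The main obstacle is Step 2 in the discrete, non-product Neumann geometry. There the Neumann eigenfunctions are not plane waves, so the vertex coefficients $\la\psi_k\psi_m\psi_{k'}\ra$ are not momentum-diagonal, and the square must be completed with the operator $(-P_+\Delta^\Neu P_+)^{-1}$ directly rather than mode by mode. I would first try the operator-level completion of squares of \cite{OptimalRate}, with the quadratic form written in position space as $\sum_{x,y}$ using the Neumann Green's function. The plan is then to verify that every remaining error is expressible through the traces estimated in Lemma~\ref{Laplace_comparison} and Corollary~\ref{trace_power}.
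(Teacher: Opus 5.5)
There is a genuine gap in Step 1. You discard the quartic excitation term $\frac U2\sum_x b_x^*b_x^*b_xb_x$ (or spend it on the cubic terms). What remains is a quadratic Hamiltonian with the \emph{bare} pairing coupling $Un/|\Lambda_\ell|$. In this Gross--Pitaevskii regime ($n\lesssim\ell$) the quartic term contributes at leading order, since it is what resums the Born series. The ground state energy of the quadratic operator you keep is, to leading order, $\frac{n^2}{2|\Lambda_\ell|}(U-U^2\gamma)$. Since $1-U\gamma<(1+U\gamma)^{-1}$, this falls short of $4\pi\a\frac{n^2}{|\Lambda_\ell|}=\frac{U}{2(1+U\gamma)}\frac{n^2}{|\Lambda_\ell|}$ by $\frac{U^3\gamma^2}{2(1+U\gamma)}\frac{n^2}{|\Lambda_\ell|}$, which is of the same order as the main term. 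You notice this in Step 2, but your proposed cure cannot work at that stage. Conjugating the quadratic operator by a unitary does not change its spectrum, and completing squares only gives lower bounds on it. So nothing done after Step 1 can recover the missing amount.

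The missing idea is to renormalize the coupling at the two-body level, using the whole interaction, before second quantizing. The paper starts from $(\1-P\otimes P\,m_\varphi)U\delta_{x,y}(\1-m_\varphi\,P\otimes P)\ge0$. For an on-site interaction $m_\varphi\delta_{x,y}=\varphi(0)\delta_{x,y}$, so this is a Cauchy--Schwarz inequality with the optimal constant $\varphi(0)=(1+U\gamma)^{-1}$. It bounds the full interaction, cubic and quartic parts included, from below by $\frac{U\varphi(0)}{2|\Lambda_\ell|}\sum_{k\ne0}(a_k^*a_k^*a_0a_0+\mathrm{h.c.})+\frac{U(2\varphi(0)-\varphi(0)^2)}{2|\Lambda_\ell|}a_0^*a_0^*a_0a_0$, so no cubic terms ever appear.

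The pairing is also diagonal in the Neumann basis. Since $\psi_0$ is constant and the $\psi_k$ are real and orthonormal, $\sum_x\psi_0(x)^2\psi_k(x)\psi_m(x)=|\Lambda_\ell|^{-1}\delta_{km}$. Hence the non-diagonal-vertex obstacle you single out does not occur, and the Lieb--Solovej bound applies mode by mode with $b_k=n^{-1/2}a_0^*a_k$. This yields $-\frac{n^2U^2\varphi(0)^2\gamma}{2|\Lambda_\ell|}$ up to errors, and $\frac U2(2\varphi(0)-\varphi(0)^2)-\frac{U^2\varphi(0)^2\gamma}{2}=\frac{U\varphi(0)}{2}=4\pi\a$. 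The $\NN_+$ terms are absorbed by choosing $\mu\in\bigl(16\pi\a\frac{n}{|\Lambda_\ell|},\ \frac12\epsG(\ell)-8\pi\a\frac{n}{|\Lambda_\ell|}\bigr)$, which is where the hypothesis on $n/(\ell+1)$ enters.

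Your Step 3 bookkeeping is also inconsistent. The $\log\ell$ comes from the periodic Riemann sum, whereas the Neumann-to-periodic comparison of Lemma~\ref{Laplace_comparison}(c) costs $n^2\ell^{-10/3}$. That comparison is already needed to identify the constant in the operator inequality. So what this route delivers is the $n^2\ell^{-10/3}$ error in both statements, which is also what the paper's own Step 4 ends with.
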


\begin{proof} The idea of the proof below is based on \cite[Section 1]{OptimalRate} and \cite[Lemma 5]{ChongLiangNam}.

\noindent \textit{Step 1.} We will express the Hamiltonian $H_{n,\ell}$ in terms of creation and annihilation operators in the Neumann basis introduced above: for $k = (k_1,k_2,k_3)$, $k_j = 0,\dots,L$ denote $a_k = a(\psi_k)$ and $a_k^* = a^*(\psi_k)$. Next denote by $P$ the projection onto the constant function $\psi_0 = \frac{1}{|\Lambda_\ell|^{1/2}} \in L^2(\Lambda_\ell)$ and by $m_\varphi$ the multiplication operator by a function $\varphi(x-y)$ (i.e. the scattering equation solution) acting on the two body Hilbert space $\H_{\ell}^{\otimes 2}$. We first note that we have an operator inequality
$$(\1 - P \otimes P m_\varphi) U \delta_{x,y} (\1 - m_\varphi P \otimes P) \ge 0.$$
This is equivalent to
$$U \delta_{x,y} \ge U m_\varphi \delta_{x,y} (P \otimes P) + U (P \otimes P) m_\varphi \delta_{x,y}- (P \otimes P) m_\varphi^2\delta_{x,y} (P \otimes P).$$
Now we will take the second quantization of both sides of this inequality and expressing it in the Neumann basis representation. As the system of $\psi_k(x)$ is the orthonormal basis of the one-body space we have
\begin{align*}
\la \psi_p \otimes \psi_q, & (U m_\varphi \delta_{x,y} \; P \otimes P) \psi_k \otimes \psi_r\ra = \la \psi_p \otimes \psi_q, (U m_\varphi \delta_{x,y}) \psi_0 \otimes \psi_0\ra \delta_{k,0}\delta_{r,0}
\\&= \delta_{k,0}\delta_{r,0}\cdot \frac{U}{|\Lambda_\ell|}\sum_{x, y \in \Lambda_L}\psi_p(x)\psi_q(y)\varphi(x-y) \delta_{x,y}
 = \delta_{k,0}\delta_{r,0}\delta_{q,p} \cdot \frac{U\varphi(0)}{|\Lambda_\ell|}.
\end{align*}
Here we have used the fact $\psi_k$ can be chosen to be real-valued, so that the complex conjugation in the inner product can be omitted. A similar computation for other matrix elements leads to
$$\la \psi_p \otimes \psi_q, (P \otimes P \;U m_\varphi \delta_{x,y}) \psi_k \otimes \psi_r\ra = \delta_{p,0}\delta_{q,0}\delta_{r,k} \cdot \frac{U\varphi(0)}{|\Lambda_\ell|}$$
and
$$\la \psi_p \otimes \psi_q, (P \otimes P \;U m_\varphi^2 \delta_{x,y} \; P \otimes P) \psi_k \otimes \psi_r\ra = \delta_{p,0}\delta_{q,0}\delta_{k,0}\delta_{r,0} \cdot \frac{U\varphi(0)^2}{|\Lambda_\ell|}.$$
As the end result we obtain
$$H_{n,\ell} \ge \sum_{k \ne 0} \epsN(p)a_k^*a_k + \frac{U \varphi(0)}{2|\Lambda_\ell|}\sum_{k \ne 0}\left(a^*_ka^*_ka_0a_0 + a^*_0a^*_0a_ka_k\right) + \frac{U(2\varphi(0) -\varphi(0)^2)}{2|\Lambda_\ell|}a^*_0a^*_0a_0a_0.$$

\noindent \textit{Step 2.} We will use the following operator inequality (see \cite[Theorem 6.3]{LiebSolovej}):
$$A(b_k^*b_k + b^*_{-k}b_{-k}) + B(b^*_kb^*_{-k} + b_kb_{-k}) \ge -(A - \sqrt{A^2 - B^2})\frac{[b_k,b_k^*] + [b_{-k},b^*_{-k}]}{2},$$
valid for any operators $b_k$, $b_{-k}$, $b^*_k$ and $b^*_{-k}$ on the Fock space satisfying $[b_k,b_{-k}] = [b^*_k,b^*_{-k}] = 0$ and any constants $0 \le B \le A$. Here we will use it for
$$b_k = n^{-1/2}a_0^*a_k, \quad b_k^* = n^{-1/2}a_k^*a_0, \quad k \ne 0$$
and $b_{-k} := b_k$. One can check that 
$$b_k^*b_k \le a_k^*a_k, \quad [b_k, b^*_k] \le \1, \quad a^*_ka^*_ka_0a_0 = nb^*_kb^*_k, \quad  a^*_0a^*_{0}a_ka_k = nb_kb_k.$$
Now we note that the assumption on $\frac{n}{\ell+1}$ and the lower bound in \eqref{epsG_asymptotics} imply that 
$$24 \pi \a \frac{n}{|\Lambda_\ell|} = 24 \pi \a \frac{n}{(\ell+1)^3} < \frac12 \frac{c_\gap}{(\ell+1)^2} < \frac12 \epsG(\ell),$$
hence the (open) interval
$$\left(16 \pi \a \frac{n}{|\Lambda_\ell|}, \frac12 \epsG(\ell) - 8\pi \a\frac{n}{|\Lambda_\ell|}\right)$$
is nonempty. This allows to choose parameter $\mu$ satisfying
\begin{equation}
\label{mu_assumption}
16 \pi \a \frac{n}{|\Lambda_\ell|} < \mu <  \frac12\epsG(\ell) - 8 \pi \a \frac{n}{|\Lambda_\ell|}.
\end{equation}
In the end we obtain (with denoting $n_0 := a_0^* a_0$):
\begin{align*}
H_{n,\ell} &= \sum_{k \ne 0} (\epsN(k) - \mu)a^*_ka_k + \frac{U \varphi(0)}{2|\Lambda_\ell|}\sum_{k \ne 0}\left(a^*_ka^*_ka_0a_0 + a^*_0a^*_0a_ka_k\right) + \frac{U(2\varphi(0) -\varphi(0)^2)}{2|\Lambda_\ell|}a^*_0a^*_0a_0a_0 + \mu \NN_+
\\&\ge \frac12\sum_{k \ne 0}\left[ (\epsN(k) - \mu) (b^*_kb_k + b^*_{k}b_k) + \frac{nU\varphi(0)}{|\Lambda_\ell|}(b^*_kb^*_k + b_kb_k)\right] + \frac{U(2\varphi(0) -\varphi(0)^2)}{2|\Lambda_\ell|}n_0(n_0-1) + \mu \NN_+
\\&\ge - \frac{1}{2}\sum_{k \ne 0} \left[\epsN(k) - \mu - \sqrt{(\epsN(k) - \mu)^2 - \frac{n^2U^2\varphi(0)^2}{|\Lambda_\ell|^2}}\right] + \frac{U(2\varphi(0) -\varphi(0)^2)}{2|\Lambda_\ell|}n_0(n_0-1) + \mu \NN_+.
\end{align*}
Recalling that
$$8 \pi \a  = U \varphi(0) = \frac{U}{1 + U\gamma}$$
we see that the square root is well defined as
$$\epsN(k) - \mu - \frac{nU\varphi(0)}{|\Lambda_\ell|} = \epsN(k) - \mu - 8\pi \a \frac{n}{|\Lambda_\ell|} > \frac12 \epsG(\ell) > 0,$$
in which we have used the upper bound on $\mu$ from \eqref{mu_assumption}.

\noindent \textit{Step 3.} Using the inequality
$$1 - \sqrt{1 - x} \le \frac{1}{2}x + \frac18 x^2$$
 we get
\begin{align*}
\epsN(k) - \mu - \sqrt{(\epsN(k) - \mu)^2 - \frac{n^2U^2\varphi(0)^2}{|\Lambda_\ell|^2}} \le \frac{n^2U^2\varphi(0)^2}{2|\Lambda_\ell|^2(\epsN(k)-\mu)} + \frac18 \left(\frac{n^4U^4\varphi(0)^4}{|\Lambda_\ell|^4(\epsN(k) - \mu)^3}\right).
\end{align*}
By the condition $\mu < \frac12\epsG(\ell)$ and the lower bound in \eqref{epsG_asymptotics} we get
\begin{equation*}
\begin{split}
\frac{1}{\epsN(k) - \mu} & =\frac{1}{\epsN(k)} + \frac{\mu}{\epsN(k)^2 \left(1 - \frac{\mu}{\epsN(k)}\right)}
\\&\le \frac{1}{\epsN(k)} + \frac{\frac12 \epsG(\ell)}{\epsN(k)^2\left(1 - \frac12\frac{\epsG(\ell)}{\epsN(k)}\right)} \le \frac{1}{\epsN(k)} + \frac{C}{\ell^2}\frac{1}{\epsN(k)^2}
\end{split} 
\end{equation*}
for some constant $C > 0$ independent of $\ell$. Next, using $U^2\varphi(0)^2 \le \frac{1}{\gamma}<C$, we get:
\begin{align*}
\sum_{k \ne 0}\frac{n^2U^2\varphi(0)^2}{4|\Lambda_\ell|^2(\epsN(k) - \mu)} &\le \sum_{k \ne 0}\frac{n^2U^2\varphi(0)^2}{4|\Lambda_\ell|^2\epsN(k)} + \frac{C}{\ell^2}\sum_{k \ne 0}\frac{n^2}{|\Lambda_\ell|^2\epsN(k)^2}
\\&\le \sum_{k \ne 0}\frac{n^2U^2\varphi(0)^2}{4|\Lambda_\ell|^2\epsN(k)} + C\frac{n^2}{\ell^{9/2}} ,
\end{align*}
since by Corollary \ref{trace_power}
$$\frac{1}{(\ell+1)^3}\sum_{k \ne 0} \frac{1}{\epsN(k)^2} \le C\ell^{1/2}.$$
Moreover, using $\frac{n}{\ell+1} \le C$ and an inequality
$$\frac{1}{\epsN(k) - \mu} \le \frac{C}{\epsN(k)}$$
that can be justified by similar arguments as above we obtain
$$\sum_{k \ne 0}\frac18 \left(\frac{n^4U^4\varphi(0)^4}{|\Lambda_\ell|^4(\epsN(k) - \mu)^3}\right) \le C \frac{n^2}{(\ell+1)^{10}} \sum_{k \ne 0} \frac{1}{\epsN(k)^3} \le C\frac{n^2}{\ell^6}$$
as (once again from Corollary \ref{trace_power})
$$\frac{1}{(\ell+1)^3}\sum_{k \ne 0} \frac{1}{\epsN(k)^3} \le C\ell.$$
From the above it follows that
$$\frac{1}{2}\sum_{k \ne 0} \left[\epsN(k) - \mu - \sqrt{(\epsN(k) - \mu)^2 - \frac{n^2U^2\varphi(0)^2}{|\Lambda_\ell|^2}}\right] \le \sum_{k \ne 0} \frac{n^2U^2\varphi(0)^2}{4|\Lambda_\ell|^2\epsN(k)} + C\frac{n^2}{\ell^{9/2}}.$$
Now, using the first point of Lemma \ref{Laplace_comparison} and the min-max principle, we have
$$\sum_{k \ne 0} \frac{n^2U^2\varphi(0)^2}{4|\Lambda_\ell|^2\epsN(k)} = \sum_{k \ne 0} \frac{n^2U^2\varphi(0)^2}{4|\Lambda_\ell|^2\eps(k)} + O\left(\frac{n^2}{\ell^{10/3}}\right).$$
By observation \eqref{Neu_per_relation}, the above sum is a Riemann sum for the integral of the function $k \mapsto \eps(Bk)$ on the domain $k \in [0,1/2]^3$ hence (by the standard Riemann sum approximation argument)
\begin{equation}
\label{Riemann_approx}
\sum_{k \ne 0}\frac{n^2U^2\varphi(0)^2}{2|\Lambda_\ell|^2\epsN(k)} = \frac{n^2U^2\varphi(0)^2}{2\left(\frac{1}{2}\right)^3|\Lambda_\ell|}\int_{[0,\frac{1}{2}]^3}\frac{dk}{\eps(Bk)} + O\left(\frac{n^2}{\ell^4}\log \ell\right).
\end{equation}
Using the symmetry $k_j \leftrightarrow (-k_j)$ of the function under the integral and then changing variables $p = Bk$ we get
$$\int_{[0,\frac{1}{2}]^3}\frac{dk}{\eps(Bk)} = \frac{1}{8}\int_{[-\frac{1}{2},\frac{1}{2}]^3}\frac{dk}{\eps(Bk)} = \frac{1}{8|\det B|}\int_{B[-\frac12,\frac12]^3}\frac{dp}{\eps(p)} = \frac{1}{8|\Lh|}\int_{\Lh}\frac{dp}{\eps(p)}.$$
As a result
$$\eqref{Riemann_approx} = \frac{n^2U^2\varphi(0)^2}{2|\Lambda_\ell| |\Lh|}\int_{\Lh}\frac{dp}{\eps(p)} + O\left(\frac{n^2}{\ell^4}\log \ell\right) = \frac{n^2U^2\varphi(0)^2\gamma}{|\Lambda_\ell|} + O\left(\frac{n^2}{\ell^4}\log \ell\right).$$
We also note that all of the obtained previously error terms decay faster than $\frac{n^2}{\ell^{10/3}}$, therefore in the next step we will include all of them in the $O\left(\frac{n^2}{\ell^{10/3}}\right)$ term.

\noindent \textit{Step 4.} Gathering all of the estimates we conclude
\begin{align*}
H_{n,\ell} &\ge \mu \NN_+ - \frac12\frac{n^2U^2\varphi(0)^2\gamma}{|\Lambda_\ell|}  + \frac{U(2\varphi(0) -\varphi(0)^2)}{2|\Lambda_\ell|}(n - \NN_+)(n - \NN_+-1) - C\left(\frac{n^2}{\ell^{10/3}}\right)
\\&= \mu \NN_+ + \frac{n^2}{2|\Lambda_\ell|}\left(-\frac{U^2\gamma}{(1 + U\gamma)^2} + \frac{U + 2U^2\gamma}{(1 + U\gamma)^2}\right)
\\&\;+  \frac{U + 2U^2\gamma}{2|\Lambda_\ell|(1 + U\gamma)^2} \left(-2n \NN_+ + \NN_+^2 - n + \NN_+\right) - C\left(\frac{n^2}{\ell^{10/3}}\right)
\\&\ge \mu \NN_+ + \frac{n^2}{2|\Lambda_\ell|} \frac{U}{1 + U\gamma} - \frac{2Un \NN_+}{|\Lambda_\ell|(1 + U\gamma)} - C\left(\frac{n^2}{\ell^{10/3}}\right) - C\left(\frac{n}{\ell^3}\right)
\\&= (\mu - 16\frac{n}{|\Lambda_\ell|} \pi \a) \NN_+  + 4\pi \a \frac{n^2}{|\Lambda_\ell|} - C\left(\frac{n^2}{\ell^{10/3}}\right) - C\left(\frac{n}{\ell^3}\right)
&\\&\ge 4\pi \a \frac{n^2}{|\Lambda_\ell|} - C\left(\frac{n^2}{\ell^{10/3}}\right) - C\left(\frac{n}{\ell^3}\right),
\end{align*}
where in the last inequality we used the lower bound from \eqref{mu_assumption} and non-negativity of $\NN_+$. This ends the proof.
\end{proof}

\begin{remark}
In the proof above we did not have to use the Neumann symmetrization technique used in \cite{ChongLiangNam} as in the discrete setting the Neumann Laplacian \eqref{Neumann_Laplacian_def} is defined for all functions on $\Lambda_\ell$, in particular for the restriction $\varphi|_{\Lambda_\ell}$ of the scattering equation solution. We refer to the Remark \ref{Laplacian_rem} in the Appendix for more discussion concerning this fact.
\end{remark}

\subsection{Conclusion}

Now we can conclude the proof of Proposition \ref{prop:lower_bound}, hence finishing the proof of the main Theorem \ref{main_thm}. The following proof is based on \cite[Corollary 1.3]{BoccatoSeiringer} (which itself is similar to the proof given in \cite[Chapter 2.2]{MathBEC}).

\begin{proof}[Proof of Proposition \ref{prop:lower_bound}]
For fixed $\rho > 0$ we define
\begin{equation}
\label{l_GP_def}
\ell = \ell(\rho) = \left\lceil\left(\frac{192 \pi \a}{c_\gap} \rho\right)^{-1/2}\right\rceil - 1,
\end{equation}
where $c_\gap$ is once again the constant from \eqref{epsG_asymptotics}. As the thermodynamic limit does not depend on the choice of sequences $N \to \infty$, $L \to \infty$ with $N/|\Lambda_L| \to \rho$ we will  consider only the values of $L$ such that $\frac{L+1}{\ell+1}$ is an integer. This will allow us to use the localization method. We will also assume that the sequence $\frac{N}{|\Lambda_L|}$ tends to $\rho$ from below, i.e $\frac{N}{|\Lambda_L|} \le \rho$ for every considered $N$ and $L$. This is a purely technical assumption, related to the fact that we are dealing with only discrete values of $N$ and $L$, hence we cannot assume that $\frac{N}{|\Lambda_L|} = \rho$ for every $N$ and $L$, as this would significantly restrict the possible values of $\rho$.

As in the proof of Lemma \ref{box_division} we split the thermodynamic lattice (of side length $L$) into sub-lattices of side length $\ell$ and introduce parameter $p$ defined as
\begin{equation}
\label{p_def}
p := \frac{c_\gap(\ell + 1)}{48\pi \a}.
\end{equation}
By Proposition \ref{H_N_lower_bound}, for $n$ satisfying $n < p$ we have
$$E_0^\Neu(n,\ell) \ge 4\pi \a \left(\frac{n^2}{|\Lambda_\ell|} - C\frac{n^2}{\ell^{10/3}} - C\frac{n}{\ell^3}\right)$$
For $n \ge p$ we use the fact the interaction potential is non-negative (in particular $H_{n,\ell}$ is a non-negative operator), so that the ground state energy is super-additive:
$$E_0^\Neu(n_1 + n_2,\ell) \ge E_0^\Neu(n_1,\ell) + E_0^\Neu(n_2,\ell).$$
With this fact for $n \ge p$ we have
$$E_0^\Neu(n,\ell) \ge \left\lfloor \frac{n}{p}\right\rfloor E_0^\Neu(p,\ell) \ge \frac{n}{2p}E_0^\Neu(p,\ell).$$
Using Corollary \eqref{box_relative} we obtain
\begin{equation}
\label{box_bound}
E_0(N,L) \ge \frac{4 \pi \a |\Lambda_L|}{|\Lambda_\ell|^2} \inf \left\{\sum_{n < p} c_n (n^2 - C\frac{n^2}{\ell^{1/3}} - Cn) + \frac12\sum_{n \ge p} c_n n\left(p - C\frac{p}{\ell^{1/3}} - C\right)\right\},  
\end{equation}
where the infimum is taken with the constraints stated in the Corollary. Defining
$$\xi:= 1 - \frac{C}{\ell^{1/3}}$$
we rephrase the minimization problem to finding minimum of
$$\sum_{n < p} c_n (\xi n^2 - Cn) + \frac12\sum_{n \ge p} c_n n\left(\xi p - C\right).$$
To this end we additionally define
$$r: = \sum_{n < p} c_n n.$$
We note that $r \le \frac{N(\ell+1)^3}{(L+1)^3}$ and, by convexity of the function
$$x \mapsto F(x): = \xi x^2 - Cx, \quad F(0) = 0$$
we have
$$\sum_{n < p} c_n (\xi n^2 - Cn) = \sum_{n < p} c_nF(n) + \sum_{n \ge p} c_n F(0) \ge F\left(\sum_{n < p} c_nn\right) = F(r) = \xi r^2 - Cr.$$
As a result
\begin{equation}
\label{quadratic_estimate}
\sum_{n < p} c_n (n^2 - Cn) + \frac12\sum_{n \geq p} c_n n\left(\xi p - C\right) \ge \xi r^2 - Cr + \frac12\left(\frac{N(\ell+1)^3}{(L+1)^3} - r\right)(\xi p-C).    
\end{equation}
When considering $r \in \R$ the above quadratic function attains its minimum at $r = \frac{\xi p + C}{4\xi}$. Restricting the domain to $r \in [0,\frac{N(\ell+1)^3}{(L+1)^3}]$ with our choice of $p$ the minium is attained at $r = \frac{N(\ell+1)^3}{(L+1)^3}$ as
$$\frac{N(\ell+1)^3}{(L+1)^3} \le \frac14 p,$$
which can be seen from equivalent (by \eqref{l_GP_def} and \eqref{p_def}) inequality
$$\rho \ge \frac{N}{|\Lambda_L|} \cdot  \frac{192 \pi \a}{c_\gap}\left(\left\lceil\frac{192 \pi \a}{c_\gap}\right\rceil\right)^{-1},$$
which is true by the technical assumption made at the beginning. The minimal value of the right hand side of \eqref{quadratic_estimate} on this interval is
$$\xi \left(\frac{N(\ell+1)^3}{(L+1)^3}\right)^2 - C\frac{N(\ell+1)^3}{(L+1)^3} = \xi \left(\frac{N|\Lambda_\ell|}{|\Lambda_L|}\right)^2 - C\frac{N|\Lambda_\ell|}{|\Lambda_L|}.$$
Recalling the inequality \eqref{box_bound}, the definition of $\xi$ and the definition \eqref{l_GP_def}, this implies
$$E_0(N,L) \ge \frac{4 \pi \a N^2}{|\Lambda_L|}\left(1 - \frac{C}{\ell^{1/3}} - C\frac{|\Lambda_L|}{N|\Lambda_\ell|}\right) \ge \frac{4 \pi \a N^2}{|\Lambda_L|}\left(1 - C\rho^{1/6} - C\frac{|\Lambda_L|}{N}\rho^{3/2}\right).$$
In the thermodynamic limit we get
\begin{equation*}
\begin{split}
e_0(\rho) &= \lim_{\substack{N \to \infty \\ L \to \infty \\ N/L^3 \to \rho}}\frac{E_0(N,L)}{|\Lambda_L|} \ge \lim_{\substack{N \to \infty \\ L \to \infty \\ N/L^3 \to \rho}} \frac{4 \pi a N^2}{|\Lambda_L|^2}\left(1 - C\rho^{1/6} - C\frac{|\Lambda_L|}{N}\rho^{3/2}\right)
\\&\ge 4\pi \a \rho^2\left(1 - C\rho^{1/6} - C\rho^{1/2}\right)
\\& \ge 4\pi \a \rho^2\left(1 - C\rho^{1/6}\right)
\end{split}
\end{equation*}
for $\rho$ small enough. This ends the proof of Proposition \ref{prop:lower_bound}.
\end{proof}

\appendix

\section{Fourier analysis on a lattice}
\label{sec:Fourier}

\subsection{General theory}
In this subsection we will briefly recall the abstract setting in which the Fourier transform can be defined. We refer to e.g. \cite{Rudin_groups} for more systematic approach.

Let $G$ be locally compact abelian group, let $\mu$ be Haar measure on $G$ (that is $\mu(gA) = \mu(A)$ for any measurable subset $A \subset G$ and any element $g \in G$). We define the (Pontryagin) dual group $\widehat G$ as
$$\widehat G = \text{Hom}(G, S^1),$$
where $S^1$ is a unit circle. The group $\widehat G$ is also abelian and locally compact, moreover group $\widehat {\widehat G}$ is isomorphic to $G$.

For a function $f \in L^1(G,\mu)$ we define its Fourier transform $\widehat f:\widehat G \to \C$ as
$$\widehat f(\chi) = \int_G f(g)\overline{\chi(g)}d\mu(g).$$
As the Haar measure on $G$ is fixed, once can choose the normalization of the Haar measure $\nu$ on $\widehat G$ such that the following inversion formula (defined for a certain class of functions $f$) holds
$$f(x) = \int_{\widehat G }\widehat f(\chi) \chi(x) d \nu(\chi).$$
The Fourier transform also extends to the unitary operator from $L^2(G,\mu)$ to $L^2(\widehat G, \nu)$, which is sometimes referred to as the (generalization of) Plancherel theorem.

\subsection{Infinite Bravais lattice}

We will present the theory of the $d$-dimensional Bravais lattices in the context of the abstract Fourier analysis on groups. Similarly as in \eqref{def:Bravais_lattice} we define Bravais lattice $\Lambda$ as
$$\Lambda = A\Z^d = \left\{\sum_{j=1}^d m_j a_j \; \colon \; m_j \in \Z\right\},$$
where $A$ is a $d \times d$ invertible real matrix and vectors $a_j$ are its columns. In this context vectors $a_j$ are called the primitive (translation) vectors of the lattice $\Lambda$. We treat $\Lambda$ as an additive topological group with discrete topology and the standard counting measure as its Haar measure.

The reciprocal lattice $\Lambda^*$ is then defined as
$$\Lambda^* = \left\{y \in \R^d \colon y \cdot x \in 2\pi \Z \text{ for all } x \in \Lambda\right\}.$$
One can check that $\Lambda^*$ is also a Bravais lattice generated by primitive vectors $b_j$, $j=1,\dots,d$, defined by the relation
$$a_i \cdot b_j = 2\pi\delta_{i,j}$$
or equivalently
$$b_j = \frac{1}{2\pi}(A^T)^{-1}e_j, \quad e_j \text{ - standard basis vector in } \R^d.$$

The Pontryagin dual group $\Lh$ in this context is called the Brillouin zone of the lattice $\Lambda$. Since $\Lambda$ is finitely generated every homomorphism $\chi \in \text{Hom}(\Lambda,S^1)$ is uniquely determined by values
$$\chi(a_i) := e^{i\theta_j}$$
for some $\theta_j \in \R$, hence every such $\chi$ is of the form
$$\chi = \chi_\theta, \quad \theta \in \R^d$$
with
\begin{equation}
\label{chi_theta_def}
\chi_\theta\left(\sum_{j=1}^dm_ja_j\right) = e^{i\sum_j m_j\theta_j}.   
\end{equation}
We note that if $y \in \Lambda^*$ then $\chi_\theta \equiv \chi_{\theta + y}$ as for any $x = \sum_{j}m_ja_j\in \Lambda$ we have
$$\chi_{\theta+y}(x) = e^{i\sum_j m_j(\theta_j + y_j)} = \chi_\theta(x) e^{ix \cdot y} = \chi_\theta(x)$$
as $x \cdot y \in 2\pi\Z$. It follows that the Brillouin zone $\Lh$ can be isomorphically identified as a quotient group
$$\Lh = \R^d/ \Lambda^*.$$
We will further identify $\Lh$ as
\begin{equation}
\label{Lh_def}
\Lh = B\T^d = \left\{\sum_{j=1}^d b_jt_j \; \colon \; t_j \in \left[-\frac12,\frac12\right)\right\} \text{ with periodic boundary condition}.   
\end{equation}
Here $B$ is a matrix whose columns are the primitive vectors of the reciprocal lattice $b_j$ and $\T^d$ is the $d$-dimensional unit torus $\left[-\frac12,\frac12\right)$. This identification allows to identify the correct (in the sense of the Fourier inversion formula) Haar measure on $\Lh$ as the normalized Lebesgue measure on the set $B\T^d$. With this fact we can write the Fourier transform formula on $\Lambda$ and the inverse transform formula on $\Lh$ explicitly: for $f \in L^1(\Lambda)$ and $g \in L^1(\Lh)$ we have
\begin{equation}
\label{lattice_Fourier_transform}
\widehat f(p) = \sum_{x \in \Lambda}f(x)e^{-ip\cdot x}, \quad p \in \Lh
\end{equation}
and
\begin{equation}
\label{inverse_transform}
\check g(x) = |B\T^3|^{-1}\int_{B\T^3} g(p)e^{i p\cdot x}dp, \quad x \in \Lambda.
\end{equation}
By Plancherel theorem $f \mapsto \widehat f$ and $g \mapsto \check g$ extend to unitary maps between $L^2(\Lambda)$ and $L^2(\Lh)$ and the extensions are each others inverses. We also note that $|B\T^3| = |\det B|$.

\subsection{Distributional Fourier transform}

As we have already identified $\Lh \simeq \R^d/\Lambda^*$, we can use the fact that the latter has a structure of a compact manifold (diffeomorphic to the torus $\T^d$) and extend the definition of the Fourier transform on $\Lambda$ to a distributional one.

Let $\psi \colon \Lambda \to \C$ be a function with at most polynomial growth, that is
$$|\psi(x)| \le C(1 + |x|)^s$$
for some constants $C > 0$ and $s \in \R$. We define the distributional Fourier transform of $\psi$, also denoted by $\widehat \psi$, as a distribution on $\Lh$ given by
\begin{equation}
\label{Fourier_def}
\langle \widehat \psi, f \rangle = \sum_{x \in \Lambda} \overline{\psi(x)} \check{f}(x),
\end{equation}
where $f \in \mathcal{D}(\Lh) = C^{\infty}(\Lh)$ and $\check{f}$ denotes the inverse Fourier transform \eqref{inverse_transform}. Note that if $\psi \in L^1(\Lambda)$ then this definition coincides with the standard one \eqref{lattice_Fourier_transform}.

\subsection{Finite Bravais lattice}
\label{subsec:finite_lattice}
For a Bravais lattice $\Lambda$ as before and for $L \in 2\N$ we define a finite Bravais lattice $\Lambda_L$ as
\begin{equation*}
\begin{split}
\Lambda_L &= (A\Z^d)/(LA\Z^d)
\\&= \left\{\sum_{j=1}^d m_ja_j \colon \; m_j = -\frac{L}{2}, -\frac{L}{2} + 1, \dots, \frac{L}{2}, \; j=1,\dots,d\right\} \text{ with periodic boundary condition.} 
\end{split}
\end{equation*}
Once again this is an additive group with discrete topology.

Using similar arguments as for the infinite lattice we can show that every $\chi \in \text{Hom}(\Lambda_L,S^1)$ is of the form
\begin{equation}
\label{momentum_representation}
\chi(x) = \chi_p(x) = \frac{1}{|\Lambda_L|^{1/2}}e^{i p \cdot x}
\end{equation}
where $|\Lambda_L| = (L+1)^3$ is the number of points in $\Lambda_L$ and $p$ is the element of
\begin{equation}
\label{LhL_def}
\widehat \Lambda_L := \left\{\sum_{j=1}^d m_j \frac{b_j}{L+1} \colon m_j=-\frac{L}{2},-\frac{L}{2} + 1,\dots, \frac{L}{2} - 1, \frac{L}{2}\right\},
\end{equation}
where $b_j$ are primitive vectors of the reciprocal lattice $\Lambda^*$. We will use this identification of $\Lh_L$ for the entire paper.

Not that in \eqref{momentum_representation} we have introduced an additional normalization factor. The reason for it is that when we consider a standard counting measure on $\Lambda_L$ as its Haar measure then the system $\{\chi_p\}_{p \in \Lh_L}$ forms an orthonormal basis of $L^2(\Lambda_L)$. We will refer to this system as the momentum basis of $L^2(\Lambda_L)$.

The correct choice for the Haar measure on $\Lh_L$ is again the standard counting measure on $\Lh_L$. Once again we can write the formulae for the Fourier transform on $\Lambda_L$ and its inverse on $\Lh_L$ explicitly as
\begin{equation}
\label{Fourier_transform_finite}
\widehat f(p) = \frac{1}{|\Lambda_L|^{1/2}}\sum_{x \in \Lambda_L} f(x) e^{-i p \cdot x} = \la \chi_p, f \ra_{L^2(\Lambda_L)}, \quad p \in \Lh_L
\end{equation}
and
\begin{equation}
\check g(x) = \frac{1}{|\Lambda_L|^{1/2}} \sum_{p \in \Lh_L} g(p) e^{ip \cdot x} = \la\overline \chi_p, g\ra_{L^2(\Lh_L)} \quad x \in \Lambda_L.
\end{equation}
As the Fourier transform is unitary we also note the Parseval identity
$$\sum_{x \in \Lambda_L}\overline{f(x)}g(x) = \sum_{p \in \Lh_L} \overline{\widehat f(p)}\widehat g(p).$$

\section{Graph calculus}

\subsection{Basic definitions}
It will be useful to work within the graph calculus formalism. A graph $G$ is a couple $G = (V,E)$, where $V$ is a finite\footnote{We can also consider infinite, but countable sets of vertices. This however requires adding some technical assumptions on summability of functions on vertices and edges.} set of vertices and $E \subset V \times V$ is the set of edges. Note that the edges are directed, that is $(x,y) \ne (y,x)$ for $x \ne y$. This approach will allow to define directional derivative. We will consider only non-oriented graphs, which in this setting means
$$(x,y) \in E \Rightarrow (y,x) \in E.$$
We will also assume that there are no self-loops (that is there are no edges of the form $(x,x)$). We will say that $x$ and $y$ are nearest neighbors if $(x,y) \in E$. This defines a symmetric relation on $V$ that will be denoted as $x \sim y$. Moreover to each edge $(x,y) \in E$ we will assign a positive real number $t(x,y)$, which gives rise to the weighted graph structure. Here we will also assume that $t(x,y) = t(y,x)$ for every edge $(x,y)$.

Consider a subset $\Omega \subset V$. We define the boundary of $\Omega$, denoted $\partial \Omega$, as
$$\partial \Omega = \{x \in \Omega \colon \text{there exists } y \not \in \Omega, y\sim x\}$$
We also define the set of interior edges $E_\Omega$ of the set $\Omega$ as
$$E_\Omega = \{(x,y) \in E \colon x,y \in \Omega\}.$$
It will be useful to also define the "nearest neighbors boundary" of the set $\Omega$ defined as
$$\partial_\nn \Omega := \{y \not \in \Omega \colon y \sim x \text{ for some }x \in \partial \Omega\}$$
and the "nearest neighbors closure" of $\Omega$
$$\Omega_{\nn} := \Omega \cup \partial_\nn\Omega.$$

With a graph we can associate two Hilbert spaces: the space of functions on the vertices $L^2(V)$ and functions on the edges $L^2(E)$ (both with counting measure). For a function $f: V \to \C$ we define its (discrete) gradient $\nabla f: E \to \C$ as
$$\nabla f(x,y) = \sqrt{t(x,y)}\left(f(y) - f(x)\right).$$
For a given edge $(x,y) \in E$ the value $\nabla f (x,y))$ may be considered as the directional derivative in direction $x \to y$. As $\nabla : L^2(V) \to L^2(E)$ we can consider its dual $\nabla^*:L^2(E) \to L^2(V)$ which satisfies the property that for any $f \in L^2(V)$ and $F \in L^2(E)$ we have
$$\la F, \nabla f\ra_{L^2(E)} = \la \nabla^*F, f\ra_{L^2(V)}.$$
We can also define $\nabla^*$ explicitly by the formula
\begin{equation}
\label{nabla_star_def}
\nabla^* F (x) = \sum_{y \sim x} \sqrt{t(x,y)}\left(F(y,x) - F(x,y)\right).
\end{equation}
Next we can define the discrete divergence $\dv : L^2(E) \to L^2(V)$ as $\dv =-\frac12 \nabla^*$ and the discrete Laplacian $\Delta: L^2(V) \to L^2(V)$ as $\Delta = \dv \circ \nabla$. We can check that the action of $\Delta$ can be written explicitly
\begin{equation}
\label{Laplacian_def}
\Delta f (x)= \sum_{y \sim x} t(x,y)(f(y) - f(x)) = \sum_{y \sim x} \nabla f(x,y).
\end{equation}
From the definition it is easy to see that the Laplace operator is self-adjoint on $L^2(V)$. We will show it for completeness: for $f,g \in L^2(V)$ we have
$$\la f, \Delta g\ra_{L^2(\Omega)} = \la g, -\frac12\nabla^*\nabla f\ra_{L^2(E)} = -\frac12 \la \nabla g, \nabla f\ra_{L^2(E)} = \la -\frac12 \nabla^* \nabla g, f\ra_{L^2(V)} = \la \Delta g, f\ra_{L^2(V)}$$

We are interested in deriving some properties of the discrete Laplace operator resembling the Green identities that hold for the standard (continuous) Laplacian. Let $\Omega \subset V$ be a fixed subset. Then we have
\begin{equation}
\label{discrete_by_parts}
\begin{split}
\sum_{x \in \Omega}\overline{g(x)}\Delta f(x) &= \la \1_{\Omega}g, \Delta f\ra_{L^2(V)} = -\frac12\la \1_\Omega g, \nabla^*\nabla f\ra_{L^2(V)} = -\frac12 \la \nabla \1_\Omega g, \nabla f\ra_{L^2(E)}
\\& = -\frac12\sum_{(x,y) \in E}\nabla (\1_\Omega \overline{g})(x,y) \cdot \nabla f(x,y)
\\& = -\frac12\sum_{(x,y) \in E_\Omega} \overline{\nabla g(x,y)} \nabla f(x,y) + \sum_{x \in \partial \Omega}\sum_{\substack{y \not \in \Omega\\ y \sim x}}\overline{g(x)}\nabla f(x,y).
\end{split}
\end{equation}
There is no factor $\frac12$ in the second term as cases $x \in \Omega$, $y \not \in \Omega$ and $x \not \in \Omega$, $y \in \Omega$ are symmetric and give the same contribution. The factor $\frac12$ in the first term is the side effect of considering the ordered pairs in the definition of the edge $(x,y)$, which essentially means every bond between $x$ and $y$ is counted twice. This result is the discrete analogue to the standard integration by parts formula
$$\int_{\Omega}\overline{g(x)}\Delta f(x) dx = - \int_{\Omega} \overline{\nabla g(x)}\nabla f(x) dx + \int_{\partial \Omega} \overline{g(x)} \frac{\partial f}{\partial n}(x) d\sigma(x).$$

\subsection{Neumann Laplacian}
\label{subsec:Neumann_bd}
We will define the Neumann Laplacian on some set $\Omega \subset V$. To this end we first define a quadratic form
\begin{equation}
\label{Neumann_form}
Q^{\Neu}(f) = \frac12\sum_{(x,y) \in E_\Omega} |\nabla f(x,y)|^2.   
\end{equation}
We note that the value of $Q^{\Neu}(f)$ depends only on the restriction of $f$ to the set $\Omega$. The Neumann Laplacian $-\Delta_{\Omega}^{\Neu}$ is defined as the operator on $L^2(\Omega)$ associated with this quadratic form, meaning that for every $f \in L^2(\Omega)$ there holds
$$\la f, -\Delta_\Omega^{\Neu}f\ra_{L^2(\Omega)} = Q(f).$$
We can write the action of $-\Delta_\Omega^{\Neu}$ explicitly: for $f \in L^2(\Omega)$
we have
\begin{equation}
\label{Neumann_Laplacian_def}
-\Delta_\Omega^{\Neu} f(x) = \sum_{\substack{y \in \Omega\\y \sim x}} t(x,y)(f(x) - f(y))  = \sum_{\substack{y \in \Omega\\y \sim x}} \nabla f(y,x).
\end{equation}

\begin{remark}
\label{Laplacian_rem}
Note that if the point $x$ is in the interior (i.e. not on the boundary) of $\Omega$ then the action of $-\Delta_\Omega^{\Neu}$ coincides with the action of the standard discrete Laplacian.  If $x \in \partial \Omega$ then the action of $-\Delta_\Omega^{\Neu}$ \textbf{looks as if} the function $f$ satisfied an additional condition
\begin{equation}
\label{Neumann_boundary}
\forall_{x \in \partial \Omega} \forall_{\substack{y \not \in \Omega\\y \sim x}} \; f(y) = f(x).   
\end{equation}
This can be interpreted as a discrete version of the standard Neumann condition $\frac{\partial f}{\partial n} = 0$ on $\partial \Omega$. We emphasize however that here the function $f$ needs to be defined only on the set $\Omega$ and not on the set of its nearest neighbors. Moreover, in some cases, imposing condition \eqref{Neumann_boundary} might be impossible -- a simple example of such situation is $\Omega = V \setminus \{v_0\}$ for some $v_0 \in V$, i.e. the set of all but one vertices. Then for a function $f \in L^2(\Omega)$ it is possible to impose \eqref{Neumann_boundary} if and only if the value of $f$ on all neighbors of $v_0$ is the same. This example illustrates the fact that the Neumann Laplacian is \textbf{not} the same as the standard Laplacian restricted to the functions satisfying Neumann boundary condition \eqref{Neumann_boundary}. However, if some function $f$ is supported on $\Omega_\nn$ and satisfies \eqref{Neumann_boundary} then it is true (by computation similar to the one in \eqref{discrete_by_parts}) that
$$-\Delta_\Omega^{\Neu}f(x) = -\Delta f(x) \text{ for } x\in \Omega.$$
As the above example shows, using the phrase "Neumann boundary conditions" is misleading, hence we will restrain from using that phrase and use the phrase "Neumann Laplacian" instead.
\end{remark}

Finally we will verify that the operator $-\Delta_\Omega^{\Neu}$ is self-adjoint, meaning that for every $f,g \in L^2(\Omega)$ we have
$$\la f, -\Delta_\Omega^{\Neu} g\ra = \la -\Delta_\Omega^{\Neu} f, g\ra.$$
This follows from the fact that $-\Delta_\Omega^{\Neu}$ is the Laplace operator defined as in \eqref{Laplacian_def} in the previous subsection for the graph $(\Omega, E_\Omega)$, so self-adjointness follows from the general consideration of graph Laplace operators.

\section{The scattering equation on a lattice}
\label{sec:Scattering}

We will start with deriving the formula for the scattering length \eqref{def:scattering_len}. To this end we are interested in a solution to the equation (defined on $\Lambda = A\Z^3$)
\begin{equation}
\label{def:scattering_eq}
-\Delta \varphi(x) + \frac{U}{2}\delta_{x,0} \varphi(x) = 0, 
\end{equation}
with the condition
\begin{equation}
\label{scattering_boundary}
\lim_{|x| \to +\infty} \varphi(x) = 1. 
\end{equation}
This equation is called the (zero-energy) scattering equation. We will see that this equation has a unique solution, hence it is possible to define the scattering length in a following way.
\begin{defi}
\label{def:scattering_len_gen}
The scattering length $\a$ is defined as
\begin{equation*}
4\pi \a = \sum_{x \in \Lambda} \Delta \varphi(x) = \frac{U}{2}\varphi(0),  
\end{equation*}
where $\varphi$ is the solution to the scattering equation \eqref{def:scattering_eq} with condition \eqref{scattering_boundary}.
\end{defi}
In order to solve the scattering equation for the moment we will ignore the condition \eqref{scattering_boundary} and take the (distributional) Fourier transform (see Appendix \ref{sec:Fourier}) of its both sides. A simple computation leads to
\begin{equation}
\label{scattering_transform}
\eps(p) \widehat \varphi + \frac{U}{2}\varphi(0) = 0,
\end{equation}
where $\eps(p)$ is the dispersion relation, defined in \eqref{def:eps}. This equation is satisfied in the sense of distributions, that is after testing against some smooth function on $\Lh$.

For now we will restrict ourselves to the set $\Lh \setminus \{0\}$ and test the above equation with the test function $\phi$ with $\supp \phi$ not including zero. On this set $\left(2\eps(p)\right)^{-1}$ is a well-defined smooth function and therefore we can multiply both sides of the equation \eqref{scattering_transform} by it. It follows that
$$\widehat \varphi = -\frac{U\varphi(0)}{2\eps(p)}.$$
Thus, on this set, we can identify $\widehat \varphi$ as a $L^1(\Lh)$ function (note that this function would not be integrable in the dimensions $d=1$ and $d=2$).

By restricting our considerations to the set not containing zero, we might have neglected distributions whose support is the one-point set $\{0\}$. Since  distributions supported on one point are the sums of Dirac deltas and their derivatives, we conclude that
$$\widehat \varphi = -\frac{U\varphi(0)}{2\eps(p)} + \sum_{\alpha \colon |\alpha| \le M} c_\alpha \partial^\alpha \delta_0,$$
for some $M \ge 0$ and $c_\alpha \in \C$. By equation \eqref{scattering_transform} we need to have
$$\eps(p) \cdot  \left(\sum_{\alpha \colon |\alpha| \le M} c_\alpha \partial^\alpha \delta_0\right) = 0.$$
It follows that $M = 1$ as the value of function $\eps(p)$ and all of its first order derivatives are zero at $p=0$, whereas values of second order derivatives at $p=0$ are non-zero. A consequence of this observation is that
$$\widehat \varphi = -\frac{U\varphi(0)}{2\eps(p)} + C_0 \delta_0 + \sum_{j=1}^3C_j \partial_{p_j}\delta_0.$$
Using the inverse Fourier transform (see equation \eqref{inverse_transform} in the Appendix) we get
$$\varphi(x) = -\frac{U\varphi(0)}{2} |\Lh|^{-1}\int_{\Lh} \frac{e^{ip \cdot x}}{\eps(p)}dp + C_0 + \sum_{j=1}^3 C_j x_j.$$
The value $\varphi(0)$ is not yet specified, we need to make sure that this function is self consistent with its value at $x=0$. Before that we will simplify this expression by using the boundary condition \eqref{scattering_boundary} that so far we have omitted. By the Riemann-Lebesgue lemma we have
$$\lim_{|x| \to \infty}  |\Lh|^{-1} \int_{\Lh} \frac{e^{ip \cdot x}}{\eps(p)}dp = 0,$$
so this part of the scattering equation solution vanishes. An easy observation also leads to conclusion that in order to satisfy \eqref{scattering_boundary} we need to have $C_0 = 1$ and $C_j = 0$ for $j=1,2,3$. We have thus simplified the formula for $\varphi$ to
$$\varphi(x) = 1 - \frac{U\varphi(0)}{2}|\Lh|^{-1}\int_{\Lh} \frac{e^{ip \cdot x}}{\eps(p)}dp.$$
Computing the value at $x=0$ we have
$$\varphi(0) = 1 - \frac{U\varphi(0)}{2}|\Lh|^{-1}\int_{\Lh} \frac{1}{\eps(p)}dp = 1 - U\varphi(0)\gamma,$$
where 
$$\gamma = \frac{1}{2}|\Lh|^{-1}\int_{\Lh} \frac{1}{\eps(p)}dp.$$
This leads to
\begin{equation}
\label{zero_value}
\varphi(0) = \frac{1}{1 + U\gamma} 
\end{equation}
and
\begin{equation}
\label{def:scattering_sol}
\varphi(x) = 1 - \frac{1}{2} \cdot \frac{U}{1 + U\gamma} |\Lh|^{-1}\int_{\Lh} \frac{e^{ip \cdot x}}{\eps(p)}dp.   
\end{equation}
Using \eqref{zero_value} in the Definition \ref{def:scattering_len_gen} we can explicitly write
\begin{equation}
\label{scattering_sol_U}
8\pi \a = \frac{U}{U\gamma + 1},  
\end{equation}
which is the definition used in \eqref{def:scattering_len}


It will also be useful to introduce function $w(x) := 1 -\varphi(x)$ or explicitly
$$w(x) = \frac{1}{2} \cdot \frac{U}{1 + U\gamma} |\Lh|^{-1}\int_{\Lh} \frac{e^{ip \cdot x}}{\eps(p)}dp.$$
This function satisfies the equation
$$\Delta w(x) + \frac12U(1-w(x))\delta_{x,0} = 0$$
with a condition
$$\lim_{|x| \to \infty} w(x) = 0.$$
The main advantage of considering this function instead of $\varphi(x)$ is that its (once again distributional) Fourier transform $\widehat w(p)$ can be treated as a $L^1(\Lh)$ function (and not only as a distribution):
\begin{equation}
\label{w_transform}
\widehat w(p) = \frac{U(1-w(0))}{2\eps(p)} = \frac{U}{1 + U\gamma} \cdot \frac{1}{2\eps(p)}, \quad p \ne 0.  
\end{equation}
We will also note two useful equalities that are frequently used in various parts of the paper:
\begin{equation}
\label{useful}
w(0) = \frac{U\gamma}{1 + U\gamma}, \quad 1 - w(0) = \frac{1}{1 + U\gamma}.
\end{equation}


\medskip
\noindent \textbf{Data availability:} Data sharing is not applicable to this article as no new data were created or analyzed in this study.
\medskip
\\
\noindent \textbf{Funding:} The work of NM and MN was supported by the National Science Centre
(NCN) grant Sonata Bis 13 (project number 2023/50/E/ST1/00439).
\medskip
\\
\noindent \textbf{Declarations}
\\
\noindent
\textbf{Conflicts of interest:} The authors declare no conflict of interest.

\end{document}